	\let\@@tikzpicture\tikzpicture
	\def\tikzpicture{\catcode`\$=3 \@@tikzpicture}
\tikzset{shorten >=1pt, >=stealth, auto}
\tikzset{
    every state/.style={%
           rectangle,%
           rounded corners,%
           draw=black,%
           very thick,%
           minimum height=2em,%
           inner sep=2pt,%
           text centered,%
           },
}
\newtheorem{theorem}{Theorem}[section]
\newtheorem{lemma}[theorem]{Lemma}
\newtheorem{problem}{Problem}
\newtheorem{proposition}[theorem]{Proposition}
\newtheorem{corollary}[theorem]{Corollary}
\theoremstyle{definition}
\newtheorem{definition}[theorem]{Definition}
\theoremstyle{remark}
\newtheorem{remark}[theorem]{Remark}
\numberwithin{equation}{section}
\numberwithin{problem}{section}
\newcommand{\B}{{\mathbb B}}
\newcommand{\N}{{\mathbb{N}}}
\newcommand{\cl}{\mathrm{cl}}
\newcommand{\suf}[1]{{#1..}}
\newcommand{\true}{\ensuremath{\mathit{true}}\xspace}
\newcommand{\false}{\ensuremath{\mathit{false}}\xspace}
\newcommand{\closure}{\ensuremath{\mathit{cl}}}
\newcommand{\Inf}{\mathrm{Inf}}
\DeclareSymbolFont{LTL}{OMS}{pxsy}{m}{n}
\DeclareMathSymbol{\Next}{\mathop}{LTL}{"0D}
\newcommand{\Nextdot}{%
	\mathop{%
		\begin{tikzpicture}[baseline]%
			\node[inner sep=0pt, anchor=base] (n) {$\Next$};%
			\fill (n) circle (.6pt);%
		\end{tikzpicture}%
	}
}
\DeclareMathSymbol{\Until}{\mathbin}{LTL}{"55}
\newcommand{\Untildot}{\ensuremath{\mathrel{\topinset{$\cdot$}{$\Until$}{3pt}{0.6pt}}}\xspace}
\newcommand{\Release}{\ensuremath{\mathrel{\mathcal{R}}}\xspace}
\newcommand{\Releasedot}{\ensuremath{\mathrel{\topinset{$\cdot$}{$\Release$}{1.3pt}{0.9pt}}}\xspace}
\begin{document}
\begin{abstract}
Although it is widely accepted that every system should be robust, in the sense that  ``small'' violations of environment assumptions should lead to ``small'' violations of system guarantees, it is less clear how to make this intuitive notion of robustness mathematically precise. In this paper, we address this problem by developing a robust version of Linear Temporal Logic (LTL), which we call \emph{robust} LTL and denote by rLTL. Formulas in rLTL are syntactically identical to LTL formulas  but are endowed with a many-valued semantics that encodes robustness. In particular, the semantics of the rLTL formula $\varphi\Rightarrow \psi$ is such that a ``small'' violation of the environment assumption $\varphi$ is guaranteed to only produce a ``small'' violation of the system guarantee $\psi$. In addition to introducing rLTL, we study the verification and synthesis problems for this logic: similarly to LTL, we show that both problems are decidable, that the verification problem can be solved in time exponential in the number of subformulas of the rLTL formula at hand, and that the synthesis problem can be solved in doubly exponential time.
\end{abstract}

\title[]{Robust Linear Temporal Logic}

\author[Paulo Tabuada]{Paulo Tabuada}
\address{Department of Electrical Engineering\\ 
University of California at Los Angeles\\
Los Angeles, CA 90095-1594, USA}
\urladdr{http://www.ee.ucla.edu/$\sim$tabuada}
\email{tabuada@ee.ucla.edu}

\author[Daniel Neider]{Daniel Neider}
\address{Department of Electrical Engineering\\ 
University of California at Los Angeles\\
Los Angeles, CA 90095-1594, USA}
\email{neider@ucla.edu}

\maketitle


\section{Introduction}
Specifications for open reactive systems are typically written as an implication
\begin{equation}
\label{Eq:Specification}
\varphi\Rightarrow \psi,
\end{equation}
where $\varphi$ is an environment assumption and $\psi$ is a system guarantee. In Linear Temporal Logic (LTL), this implication is equivalent to $\neg\varphi\lor \psi$. Hence, whenever the assumption $\varphi$ is violated the system can behave arbitrarily. This is clearly inadequate since environment assumptions will \emph{inevitably} be violated. The true environment where the system will be deployed is not completely known at design time and thus cannot be accurately described by the formula $\varphi$. This observation acquires added significance in the context of cyber-physical systems. These are reactive systems interacting with physical environments that are, in many cases, hard to predict and model. To illustrate this point, just consider the problem of modeling all the physical environments where cyber-physical systems, such as modern automobiles, are expected to operate.

We argue that a robust design satisfies the implication in~\eqref{Eq:Specification} in a robust manner (i.e., a ``small'' violation of $\varphi$ results, at most, in a ``small'' violation of $\psi$). To make this intuitive notion of \emph{robustness} mathematically precise, we introduce in this paper a new logic termed \emph{robust Linear Temporal Logic} and simply denoted by rLTL. We do so while being guided by two objectives: first, the syntax of rLTL should be similar to the syntax of LTL in order to make the transition from LTL to rLTL as transparent as possible; second, robustness should be intrinsic to the logic rather than extrinsic (i.e., robustness should not rely on the ability of the designer to provide quantitative information such as ranks, costs, or quantitative interpretations of atomic propositions). This guarantees that verification and synthesis techniques for rLTL are widely applicable as they \emph{only} require an LTL specification. 

The main conceptual question to be addressed when developing the semantics of rLTL is how to give mathematical meaning to ``small'' violations of a formula $\varphi$. Moreover, the answer should not rely on quantitative information provided by the designer, but it should be entirely based on the LTL formula $\varphi$ and its semantics. The approach advocated in this paper can be intuitively explained by regarding LTL formulas of the form $\Box p$, $\Diamond\Box p$, $\Box\Diamond p$, and $\Diamond p$, for an atomic proposition $p$, as requirements on the number of times that $p$ should be satisfied over time. Under this interpretation, and for the formula $\varphi=\Box p$, there is a clear ordering among the possible temporal evolutions of $p$: $p$ being satisfied at every time instant is preferred to $p$ being violated at finitely many time instants which, in turn, is preferred to $p$ being satisfied and violated at infinitely many time instants. The latter case is preferred to $p$ only being satisfied at finitely many time instants and this case is preferred over $p$ being satisfied at no time instant. A semantics that would distinguish between these different five cases would then enable us to state that violating $\Box p$ while satisfying $\Diamond \Box p$ consists of a smaller violation of the formula $\varphi=\Box p$ than violating $\Box p$ while satisfying $\Box\Diamond p$. Making these ideas mathematically rigorous requires a 5-valued semantics that we develop in this paper. Interestingly, the specific interpretation we make of the five different truth values leads to an intuitionistic semantics where negation is dualized and to a corresponding algebraic structure, \emph{da Costa algebras}, that were only very recently investigated~\cite{Priest09}.

\subsection*{Contributions}
The first contribution of this paper is the new logic rLTL that enables reasoning about robustness of LTL specifications. The syntax of rLTL is identical to the syntax of LTL, except that we decorate the temporal operators with a dot so as to easily distinguish between rLTL and LTL. The 5-valued semantics of rLTL is, however,  quite different in many regards. Although only time can tell if the proposed semantics is the \emph{right} one, we provide compelling arguments that it is both \emph{natural} and \emph{useful}. We argue that it is natural by carefully motivating the need for a many-valued semantics and discussing every choice made in defining the proposed 5-valued semantics. Usefulness is argued by providing several examples illustrating how rLTL can be used to reason about robustness. We start in Section~\ref{Sec:ANew} with the fragment of rLTL that only contains the temporal operators always and eventually. This fragment is simpler than full rLTL, yet illustrates most of the technical difficulties encountered with the new semantics. Full rLTL, including the next, release, and until operators, is discussed in Section~\ref{sec:full_rLTL}.

The second contribution is the study of several computational questions related to rLTL. We show that rLTL and LTL are equally expressive by providing effective translations from LTL to rLTL formulas and vice versa. This has two interesting consequences:
\begin{enumerate}
	 \item Any LTL formula can be treated as an rLTL formula (by just dotting the temporal operators), and the LTL semantics can be recovered from the semantics of rLTL. In this way, existing LTL specifications become enriched with a notion of robustness in a completely transparent manner and users do not need to employ a new formalism.
	 \item All (decidability) questions for rLTL are immediately settled.
\end{enumerate}
However, the translation from rLTL to LTL involves an exponential blow-up, thus, leaving open the possibility of improved complexity bounds for the rLTL verification and synthesis problems. Indeed, the exponential blow-up can be avoided by a carefully generalization of the construction that associates with each LTL formula $\varphi$ a Büchi automaton $\mathcal{A}_\varphi$ recognizing all the infinite words satisfying $\varphi$. Critical to this new construction are the properties of the da~Costa algebra, used to define the rLTL semantics, which can be leveraged to keep the size of $\mathcal{A}_\varphi$ in $\mathcal O(|\cl(\varphi)| \cdot 5^{\vert \cl(\varphi)\vert})$ where $\cl(\varphi)$ denotes the set of subformulas of $\varphi$. Note that this is the same complexity bound for LTL where we replace $2$ (since LTL has a $2$-valued semantics) with $5$ (since rLTL has a $5$-valued semantics). Additional consequences of the construction of $\mathcal{A}_\varphi$ include:
\begin{itemize}
	\item the time complexity of verifying rLTL specifications, which we show to be exponential in the size of specification (measured in terms of $\cl(\varphi)$) and polynomial in the size of the system being verified; and
	\item the time complexity of synthesizing reactive controllers for rLTL specifications, which we show to be doubly exponential in the size of the specification and polynomial in the size of the underlying game graph that describes the possible behaviour of an adversarial environment.
\end{itemize}

These results are presented in detail in Section~\ref{Sec:Model} and in Section~\ref{Sec:Quality} we briefly discuss one possible extension of rLTL.

\subsection*{Related efforts}
\label{SSec:Related}
Several efforts to robustify Implication~(\ref{Eq:Specification}) have been reported in the literature. Although most of these efforts started from the same intuitive description of robustness, they resulted in different mathematical formalizations. Bloem et~al.~\cite{BGHJ09} formalized robustness by comparing how often the system violates its assumptions with how often the environment violates its assumptions. Such comparison is performed via a ratio that provides a measure of robustness. Counting the number of violations requires the designer to provide, in addition to the qualitative specification, quantitative information in the form of error functions. In contrast, when working with rLTL, the designer only needs to provide an LTL specification. A very similar approach, based on techniques from robust control, is reported in~\cite{TMD08} where the designer needs to specify maps providing a real-valued interpretation of input and output symbols. A different notion of robustness appeared in the work of Doyen et~al.~\cite{DHLN10}, which requires the effect of a sporadic disturbance to disappear in finite time. If we consider the LTL specification $\Box p\Rightarrow \Box q$ for atomic propositions $p$ and $q$, we can model a sporadic violation of $\Box p$ by $\Diamond\Box p$. The notion of robustness in~\cite{DHLN10} then requires the system to satisfy $\Diamond\Box q$. The semantics of rLTL was built so as to naturally encode this as well as other requirements expressing how a weakening of the system assumptions should lead to a weakening of the system guarantees. Previous work by one of the authors, reported in~\cite{IOSEmsoft,RobustCPS}, provided a single notion of robustness encompassing the notions in~\cite{TMD08} and~\cite{DHLN10} but requiring the designer to provide quantitative information in the form of a cost. Such cost implicitly specifies how guarantees and assumptions are to be weakened in a robust design and was inspired by the work of Alur et~al.~\cite{Ranking} on synthesis for prioritized requirements. A different formalization of robustness appeared in the work of Ehlers and Topcu~\cite{Intermitent}, which considered a specific class of violations of safety assumptions defined by the frequency of violations. In contrast to all the previously described approaches, the results in this paper do not require any additional assumptions or input from a designer beyond an LTL formula. Hence, they apply to any specification that can be written in LTL.

All the previously described approaches addressed safety requirements. In contrast, the work of Bloem~et~al.\ in~\cite{RobustnessLiveness} focused on liveness. The authors considered specifications of the form $\land_{i\in I}\,\varphi_i\Rightarrow \land_{j\in J}\,\psi_j$, where $\varphi_i$ and $\psi_j$ are formulas of the form $\Diamond\Box p$ for some atomic proposition $p$ (depending on $i$ and $j$). Robustness is then measured by comparing the number of violated environment assumptions $\varphi_i$ with the number of violated system guarantees $\psi_j$. This approach is incomparable with ours since the rLTL semantics does not distinguish between the violation of one assumption from the violation of multiple assumptions.\kern-.06em\footnote{In Section~\ref{SSec:daCosta}, we argue why this is desirable and briefly mention how a different semantics for conjunction could be constructed for the purpose of distinguishing between different numbers of assumptions being violated.} It does, however, distinguish between the different ways in which $\varphi_i$ and $\psi_j$ can be violated. Although robustness is formalized differently, rLTL can be used to reason about the robustness of both safety and liveness specifications as long as such properties can be encoded in LTL. Also incomparable with the methods described in this paper is the work of Chaudhuri et~al.~\cite{ChaudhuriGL10} and of Majumdar and Saha~\cite{MS09}, which consider continuity properties of software expressed by the requirement that a deviation in a program's input causes a proportional deviation in its output. Although natural, these notions of robustness only apply to the Turing model of computation and not to the reactive model of computation employed in this paper.

There exists a large body of work on many-valued logics that we will not attempt to review here since it does not directly address questions of robustness. We do, however, allow for two exceptions. The first is the work of Almagor et~al.~\cite{Quality}, which employs a many-valued variant of LTL to reason about quality. The use of a many-valued semantics in the context of quality is as natural as in the context of robustness. In fact, we show in Section~\ref{Sec:Quality} that by dualizing the semantics of rLTL in a specific sense we obtain a logic that is adequate to reason about quality. Nevertheless, there are strong conceptual differences between the approach taken in this paper and the approach in~\cite{Quality}. First, our notion of robustness or quality is intrinsic to the logic, while the approach in~\cite{Quality} requires the designer to provide an interpretation of each atomic proposition in the interval $[0,1]$. Second, there are several choices to define the logical connectives on the interval $[0,1]$. As an illustration for the latter, note that there are three commonly used conjunctions: \L{}ukasiewicz's conjunction $a\land b=\max\{0,a+b-1\}$, Gödel's conjunction $a\land b=\min\{a,b\}$, and the product of real numbers $a\land b=a \cdot b$ also known as Goguen's conjunction. Moreover, each such choice leads to a different notion of implication via residuation. Whether Gödel's conjunction, used in~\cite{Quality}, is the most adequate to formalize quality is a question not addressed in~\cite{Quality}. In contrast, we carefully discuss and motivate all the choices made when defining the semantics of rLTL with robustness considerations. The second exception is the work of Fainekos and Pappas~\cite{Fainekos} on robustness of temporal logic over continuous signals and its extensions (e.g., Donze and Maler~\cite{RobustMaler}). As with the work of Almagor et~al., no discussion of the specific choices made when crafting the many-valued semantics is provided in these papers. Moreover, the results in~\cite{Fainekos} and~\cite{RobustMaler} require continuous-valued signals whereas rLTL is to be used in the more classical setting of discrete-time and finite valued signals (e.g., as provided by transition systems).

The last body of work related to the contents of this paper is the work of Kupferman and co-workers on lattice automata and lattice LTL \cite{DBLP:conf/vmcai/KupfermanL07,DBLP:conf/fossacs/AlmagorK14}.
The syntax of lattice LTL is similar to the syntax of LTL except that atomic propositions assume values on a finite lattice (which has to satisfy further restrictions such as being distributive). Although both lattice LTL as well as rLTL are many-valued logics, lattice LTL derives its many-valued character from the atomic propositions. In contrast, atomic propositions in rLTL are interpreted classically (i.e., they only assume two truth values). Therefore, the many-valued character of rLTL arises from the temporal evolution of the atomic propositions and not from the nature of the atomic propositions or their interpretation. In fact, if we only allow two truth values for the atomic propositions in lattice LTL, as is the case for rLTL, lattice LTL degenerates into LTL. Hence, these two logics capture orthogonal considerations, and results on lattice LTL and lattice automata do not shed light on how to address similar problems for rLTL. 


\section{Notation and Review of Linear Temporal Logic}\label{sec:preliminaries}

Let $\mathbb N = \{0, 1, \ldots\}$ be the set of natural numbers and $\B = \{0, 1\}$ the set of Boolean values with $0$ interpreted as \false and $1$ interpreted as \true. For a set $S$, let $2^S$ be the \emph{powerset} of $S$ and $S^\omega$ the set of all \emph{infinite sequences} of elements of $S$.

An \emph{alphabet}, usually denoted by the Greek letter $\Sigma$, is a finite, nonempty set whose elements are called \emph{symbols}. An infinite sequence $\sigma = a_0 a_1 \ldots$ of  symbols with $a_i \in \Sigma$, $i \in \mathbb N$, is called an \emph{infinite word}.
For an infinite word $\sigma = a_0 a_1\hdots \in \Sigma^\omega$ and $i \in \mathbb N$, let $\sigma(i) = a_i$ denote the $i$-th symbol of $\sigma$ and $\sigma_\suf{i}$ the (infinite) suffix of $\sigma$ starting at position $i$ (i.e., $\sigma_\suf{i} = \sigma_{i} \sigma_{i+1} \ldots \in \Sigma^\omega$). In particular, we have the equality $\sigma_\suf{0}=\sigma$.

\emph{Linear Temporal Logic (LTL)} is parameterized by so-called \emph{atomic propositions}, which form the basic building blocks of LTL formulas. The syntax of LTL is defined as follows.

\begin{definition}[LTL syntax]
Let $\mathcal P$ be a nonempty, finite set of atomic propositions. \emph{LTL formulas} are inductively defined as follows:
\begin{itemize}
	\item each $p \in \mathcal P$ is an LTL formula; and
	\item if $\varphi$ and $\psi$ are LTL formulas, so are $\lnot \varphi$, $\varphi \lor \psi$, $\Next \varphi$, $\Box \varphi$, $\Diamond \varphi$, and $\varphi \Until \psi$.
\end{itemize}
\end{definition}

For notational convenience, we add syntactic sugar and allow the formulas $\true$, $\false$, $\varphi \land \psi$, and $\varphi \Rightarrow \psi$ with their usual meaning (i.e., $\true \coloneqq p \lor \lnot p$ for an arbitrary $p \in \mathcal P$, $\false \coloneqq \lnot \true$, $\varphi \land \psi \coloneqq \lnot (\lnot \varphi \lor \lnot \psi)$, and $\varphi \Rightarrow \psi \coloneqq \lnot \varphi \lor \psi$). 
Note that we consider the operators $\Box$ and $\Diamond$ as part of the syntax although they can be defined using the operator $\Until$. We do this purposefully because it allows us to consider the fragment of LTL containing $\Box$ and $\Diamond$ as the only temporal operators without the need to resort to the operator $\Until$.

Usually, one defines the semantics of LTL in terms of a satisfiability relation that relates an LTL formula over the atomic propositions $\mathcal P$ to infinite words over $\Sigma = 2^\mathcal P$. Perhaps less common, but mathematically equivalent, is to define the semantics by a mapping $W$ that maps an infinite word $\sigma \in \Sigma^\omega$ and an LTL formula $\varphi$ to the element $W(\sigma,\varphi) \in \B$. We follow this approach in Section~\ref{Sec:ANew} when proposing the semantics for rLTL and, for the sake of consistency, we also use this approach for LTL. The formal definition is as follows.

\begin{definition}[LTL semantics]
The \emph{LTL semantics} is a mapping $W$, called \emph{valuation}, that is inductively defined as follows:
\begin{itemize}
	\item $W(\sigma, p) = \begin{cases} 0 & \text{$p \notin \sigma(0)$; and} \\ 1 & \text{$p \in \sigma(0)$.} \end{cases}$
	\item $W(\sigma, \lnot \varphi) = 1 - W(\sigma, \varphi)$.
	\item $W(\sigma, \varphi \lor \psi) = \max{\{ W(\sigma, \varphi), W(\sigma, \psi) \}}$.
	\item $W(\sigma, \Next \varphi) = W(\sigma_\suf{1}, \varphi)$.
	\item $W(\sigma, \Box \varphi) = \inf_{i \geq 0}{W(\sigma_\suf{i}, \varphi)}$.
	\item $W(\sigma, \Diamond \varphi) = \sup_{i \geq 0}{W(\sigma_\suf{i}, \varphi)}$.
	\item $W(\sigma, \varphi \Until \psi) =\sup_{j \geq 0}{\min{\{ W(\sigma_\suf{j}, \psi), \inf_{0 \leq i < j}{W(\sigma_\suf{i}, \varphi)} \}}}$.
\end{itemize}
\end{definition}

We often use a compact notation when referring to infinite words over sets of atomic propositions: instead of writing the set of atomic propositions corresponding to a symbol, we use simple propositional formulas, such as $p$, $\lnot p$, and $p \land q$, to denote all the sets of atomic propositions where these formulas hold true according to the LTL semantics. For instance, given an alphabet $\Sigma = 2^\mathcal P$ over $\mathcal P = \{p, q, r \}$, we write $p$ to denote the sets (symbols) $\{p\}, \{p,q\}, \{p,r\}, \{p,q,r\}\in \Sigma$, we write $\lnot p$ to denote the sets $\emptyset, \{q\}, \{r\}, \{q,r\}\in \Sigma$, and we write $p \land q$ to denote the sets $\{p, q\}, \{p,q,r\}\in \Sigma$.


\section{The Syntax and Semantics of Robust Linear Temporal Logic}
\label{Sec:ANew}
In this section, we consider the fragment of LTL that only allows the temporal operators $\Box$ and $\Diamond$, denoted by LTL$(\Box,\Diamond)$, and develop a robust semantics for this fragment, denoted by rLTL$(\boxdot,\Diamonddot)$.
On the one hand, the fragment rLTL$(\boxdot,\Diamonddot)$ is simple enough that we can provide a lucid intuitive explanation for the proposed semantics. On the other hand, rLTL$(\boxdot,\Diamonddot)$ already illustrates most of the technical difficulties encountered with the new semantics. Although we only discuss the semantics of full rLTL in Section~\ref{sec:full_rLTL}, for the purpose of having a single definition, the syntax of full rLTL is introduced in this section.

\subsection{The Syntax of Robust Linear Temporal Logic}
\label{SSec:SyntaxrLTL}
The syntax of rLTL closely mirrors the syntax of LTL with the only noticeable difference being the use of dotted temporal operators.

\begin{definition}[rLTL syntax]\label{def:rLTL_syntax}
Let $\mathcal P$ be a nonempty, finite set of atomic propositions. \emph{rLTL formulas} are inductively defined as follows:
\begin{itemize}
	\item each $p \in \mathcal P$ is an rLTL formula; and
	\item if $\varphi$ and $\psi$ are rLTL formulas, so are $\lnot \varphi$, $\varphi \lor \psi$, $\varphi \land \psi$, $\varphi \Rightarrow \psi$, $\Nextdot \varphi$, $\boxdot \varphi$, $\Diamonddot \varphi$, $\varphi \Releasedot \psi$, and $\varphi \Untildot \psi$.
\end{itemize}
\end{definition}

In LTL, we can derive the conjunction and implication operators from negation and disjunction. This is no longer the case in rLTL since it has a many-valued semantics. For this reason, we directly included conjunction and implication in Definition~\ref{def:rLTL_syntax}. The same reason justifies the presence of the release operator $\Releasedot$ which, in the case of LTL, can be derived from the until and negation operators as $\varphi\Release \psi=\neg\left(\neg\varphi\Until\psi\right)$.

\subsection{Robustness and counting}
\label{SSec:RobustnessCounting}
Consider the LTL formula $\Box p $ where $p$ is an atomic proposition. There is only one way in which this formula can be satisfied, namely that $p$ holds at every time step. In contrast, there are several ways in which this formula can be violated, and we seek a semantics that distinguishes between these. Such distinction, however, should be limited by what can be expressed in LTL so that we can easily leverage the wealth of existing results on verification of, and synthesis from, LTL specifications.

It seems intuitively clear to the authors that the worst manner in which $\Box p$ fails to be satisfied occurs when $p$ fails to hold at every time step. Although still violating $\Box p$, we would prefer a situation where $p$ holds for at most finitely many time instants. Better yet would be that $p$ holds at infinitely many instants while it fails to hold also at infinitely many instants. Finally, among all the possible ways in which $\Box p$ can be violated, we would prefer the case where $p$ fails to hold for at most finitely many time instants. Consequently, our robust semantics is designed to distinguish between satisfaction and these four possible different ways to violate $\Box p$. However, as convincing as this argument might be, a question persists: in which sense can we regard these five alternatives as canonical?

We answer this question by interpreting satisfaction of $\Box p$ as a counting problem. Recall the LTL semantics of $\Box p$ for a word $\sigma$ given by
\begin{align}
\label{LTLBox}
W(\sigma,\Box p)=\inf_{i\ge 0} W(\sigma_\suf{i},p).
\end{align}
The previously discussed five different cases, satisfaction and four different types of violation, can be seen as the result of counting the number of occurrences of $0$s and $1$s in the infinite word \mbox{$\alpha=W(\sigma_\suf{0},p)W(\sigma_\suf{1},p)\hdots\in \B^\omega$} rather than using the $\inf$-operator in~\eqref{LTLBox}. From this perspective, satisfaction corresponds to the number of occurrences of $0$ being zero. Among all the possible ways in which $\Box p$ can be violated, the most preferred occurs when $p$ only fails to hold at finitely many time instants. This corresponds to having a finite number of $0$s in $\alpha$. The next preferred way in which $\Box p$ can be violated occurs when $p$ holds infinitely many times and also fails to hold infinitely many times. This corresponds to having an infinite number of $0$s and of $1$s in $\alpha$. All the other ways in which $\Box p$ can be violated are similarly identified by counting the number of occurrences of $0$s and $1$s.

We say that an LTL$(\Box, \Diamond)$ formula $\varphi$ is a counting formula if its valuation $W(\sigma,\varphi)$ only depends on the number of occurrences of each  atomic proposition but not on its order. Such formula $\varphi$ is essentially counting how many times each atomic proposition appears along the word $\sigma$. Formally, we say that $\varphi$ is a counting formula if for every infinite word $\sigma \in\Sigma^\omega$, seen as a map $\sigma \colon \N \to \Sigma$, and for every bijection $f \colon \N \to \N$ we have
\[ W(\sigma,\varphi)=W(\sigma\circ f,\varphi). \]
Recall that by composing a sequence of permutations (bijections) one again obtains a bijection. Hence, by permuting the elements of $\sigma$, we obtain the word $\sigma\circ f$ where $f$ is the composition of the employed permutations. If we now assume $\mathcal{P}=\{p\}$, then we can always permute the elements of $\sigma$ so that the permuted word $\sigma\circ f$ is of the form 
\[ (\neg p\, p )^k p^\omega, \quad \left(\neg p\, p\right)^\omega,\quad \text{or } \left(\neg p\, p\right)^k \left(\neg p\right)^\omega, \]
where $k\in \N$.

We further recall that formulas in LTL$(\Box, \Diamond)$ can only define stutter-invariant properties~\cite{StutterInvariance}. Therefore, the semantics of LTL$(\Box, \Diamond)$ cannot distinguish\footnote{To see why this is the case, note that any word $\left(\neg p\, p\right)^k p^\omega$ with $k\in\N$ can be permuted to the form $\left(\neg p\right)^k p^k p^\omega$ and by stutter invariance can be reduced to $\neg p\, p\, p^{\omega}$.} between the words $\left(\neg p\, p\right)^{k_1}p^\omega$ and $\left(\neg p\, p\right)^{k_2}p^\omega$ for $k_1\ne k_2$, and $k_1,k_2>0$, although it can distinguish between the case $k_1=0$ and $k_2>0$. The same argument applies to the words $\left(\neg p\, p\right)^{k_1}\left(\neg p\right)^\omega$ and $\left(\neg p\, p\right)^{k_2}\left(\neg p\right)^\omega$ and shows that there are only five canonical forms that can be distinguished by LTL$(\Box, \Diamond)$:
\begin{equation}
\label{Eq:CanonicalForms}
p^\omega,\quad \left(\neg p\, p\right)^+p^\omega,\quad \left(\neg p\, p\right)^\omega,\quad \left(\neg p\, p\right)^+ \left(\neg p\right)^\omega,\quad \text{and } \left(\neg p\right)^\omega.
\end{equation}
It should be no surprise that these are exactly the five cases we previously discussed. In Section~\ref{SSec:FullrLTL}, when discussing full rLTL, we provide further arguments justifying why these 5 different cases can be seen as canonical.

The considerations in this section suggest the need for a semantics that is 5-valued rather than 2-valued so that we can distinguish between the aforementioned five cases. Therefore, we need to replace Boolean algebras by a different type of algebraic structure that can accommodate a 5-valued semantics. \emph{Da Costa algebras}, reviewed in the next section, are an example of such algebraic structures.

\subsection{da~Costa Algebras}
\label{SSec:daCosta}
According to our motivating example $\Box p$, the desired semantics should have one truth value corresponding to \true and four truth values corresponding to different shades of \false. It is instructive to think of truth values as the elements of $\B^4$ (i.e., the four-fold Cartesian product of $B$) that arise as the possible values of the $4$-tuple of LTL formulas:
\begin{equation}
\label{Eq:4TupleLTL}
(\Box p,\Diamond\Box p, \Box\Diamond p, \Diamond p).
\end{equation}
To ease notation, we denote such values interchangeably by $b = b_1 b_2 b_3 b_4$ and $b = (b_1, b_2, b_3, b_4)$ with $b_i\in \B$ for $i\in\{1,2,3,4\}$. The value 1111 then corresponds to \true since $\Box p$ is satisfied. The most preferred violation of $\Box p$ ($p$ fails to hold at only finitely many time instants) corresponds to $0111$, followed by $0011$ ($p$ holds at infinitely many instants and also fails to hold at infinitely many instants), $0001$ ($p$ holds at most at finitely many instants), and $0000$ ($p$ fails to hold at every time instant). Such preferences can be encoded in the linear order
\begin{equation}
\label{Eq:Order}
0000\prec 0001\prec 0011\prec 0111\prec 1111
\end{equation}
that renders the set
\[\B_4=\{0000,0001,0011,0111,1111\}\]
a (bounded) distributive lattice with top element $\top=1111$ and bottom element $\bot=0000$. Formally, $\B_4$ is the subset of $\B^4$ consisting of the $4$-tuples $(b_1,b_2,b_3,b_4)\in \B^4$ satisfying the monotonicity property
\begin{align}
\label{Monotonicity}
i\leq_\N j\text{ implies }b_i\leq_\B b_j
\end{align}
where $i,j\in\{1,\ldots,4\}$, $\leq_\N$ is the natural order on the natural numbers, and $\leq_\B$ is the natural order on the Boolean algebra $\B$. In $\B_4$, the meet $\sqcap$ can be interpreted as minimum and the join $\sqcup$ as maximum with respect to the order in~\eqref{Eq:Order}. We use $\sqcap$ and $\sqcup$ when discussing  lattices in general and use $\min$ and $\max$ for the specific lattice $\B_4$ or the Boolean algebra $\B$.

The first choice to be made in using the lattice $(\B_4,\min,\max)$ to define the semantics of rLTL$(\boxdot,\Diamonddot)$ is the choice of an operation on $\B_4$ modeling conjunction. It is well know that all the desirable properties of a many-valued conjunction are summarized by the notion of triangular-norm, see~\cite{MetamatematicsFuzzyLogic,PrinciplesFuzzyLogic}. One can compare two triangular-norms $s$ and $t$ using the partial order defined by declaring $s\le t$ when $s(a,b)\le t(a,b)$ for all $a,b\in \B_4$. According to this order, the triangular-norm $\min$ is maximal among all triangular-norms (i.e., we have $t(a,b)\le \min\{a,b\}$ for every $a,b\in \B_4$ and every triangular-norm $t$). This shows that if we choose any triangular-norm $t$ different from $\min$, there exist elements $a,b\in \B_4$ for which we have $t(a,b)<\min\{a,b\}$. Hence, any choice different from $\min$ would result in situations where the value of a conjunction is \emph{smaller} than the value of the conjuncts, which is not reasonable when interpreting the value of the conjuncts as different shades of \false. To illustrate this point, consider the formula $\Box p\land \Box q$ and the word $\sigma=\neg(p\land q)(p\land q)^\omega$. As introduced above, the value of $\Box p$ on $\sigma$ corresponds to $0111$ and the value of $\Box q$ on $\sigma$ corresponds to $0111$ since on both cases we have the most preferred violation of the formulas. Therefore, the value of $\Box p\land \Box q$ on $\sigma$ should also be $0111$ since the formula $\Box p\land \Box q$ is only violated a finite number of times. It thus seems natural\footnote{Note that there are situations where it is convenient to model conjunction differently. In Section~\ref{SSec:Related}, we referenced the work of Bloem et~al.~\cite{RobustnessLiveness}, where the specific way in which robustness is modeled requires distinguishing between the number of conjuncts that are satisfied in the assumption $\land_{i\in I} \varphi_i$. This cannot be accomplished if conjunction is modeled by $\min$ and a different triangular-norm would have to be used for this purpose. Note that both \L{}ukasiewicz's conjunction as well as Goguen's conjunction, briefly mentioned in Section~\ref{SSec:Related}, have the property that their value decreases as the number of conjuncts that are true decreases.} to model conjunction in $\B_4$ by $\min$ and, for similar reasons, to model disjunction in $\B_4$ by $\max$. 

As in intuitionistic logic\footnote{This is also done in context of residuated lattices that is more general than the Heyting algebras used in intuitionistic logic. Recall that a residuated lattice is a lattice $(A,\sqcap,\sqcup)$, satisfying same additional conditions, and equipped with a commutative monoid $(A,\otimes,\textbf{1})$ satisfying some additional compatibility conditions. Since we chose the lattice meet $\sqcap$ to represent conjunction, we have a residuated lattice where $\otimes=\sqcap$ and $\textbf{1}=\top$.}, our implication is defined as the residue of $\sqcap$. In other words, we define the implication $a\rightarrow b$ by requiring that $c \preceq a \rightarrow b$ if and only if $c \sqcap a \preceq b$ for every $c\in \mathbb{B}_4$. This leads to
\[ a\rightarrow b= \begin{cases} 1111 & \text{if $a\preceq b$; and} \\ b & \text{otherwise.} \end{cases} \]

However, we now \emph{diverge} from intuitionistic logic (and most many-valued logics) where negation of $a$ is defined by $a\rightarrow 0000$. Such negation is not compatible with the interpretation that all the elements of $\mathbb{B}_4$, except for $1111$, represent (different shades of) \false and thus their negation should have the truth value $1111$. To make this point clear, we present in Table~\ref{Neg} the intuitionistic negation in $\B_4$ and the desired negation compatible with the  interpretation of the truth values in $\B_4$.  

\begin{table}[ht]
	\centering
	\caption{Desired negation vs.\ intuitionistic negation in $\B_4$.}\label{Neg}
	\begin{tabular}{c@{\hskip 3em}cc}	
		\toprule
		& Desired & Intuitionistic \\
		Value & negation & negation\\
		\midrule
		1111 & 0000 & 0000\\
		0111 & 1111 & 0000\\
		0011 & 1111 & 0000\\
		0001 & 1111 & 0000\\
		0000 & 1111 & 1111\\
		\bottomrule
	\end{tabular}
\end{table}

What is then the algebraic structure on $\B_4$ that supports the desired negation, dual to the intuitionistic negation? This very same problem was recently investigated by Priest~\cite{Priest09} and the answer is \emph{da~Costa} algebras. 

\begin{definition}[da~Costa algebra]
\label{Def:daCostaAlgebra}
A \emph{da Costa} algebra is a $6$-tuple $(A,\sqcap,\sqcup,\preceq,\rightarrow,\overline{\,\cdot\,})$ where
\begin{enumerate} 
\item $(A,\sqcap,\sqcup,\preceq)$ is a distributive lattice where $\preceq$ is the ordering relation derived from $\sqcap$ and $\sqcup$;
\item $\rightarrow$ is the residual of $\sqcap$ (i.e., $a\preceq b\rightarrow c$ if and only if $a\sqcap b\preceq c$ for every $a,b,c\in A$);
\item $a \preceq b\sqcup\overline{b}$ for every $a,b\in A$; and
\item $\overline{a} \preceq b$ whenever $c\sqcup\overline{c}\preceq a\sqcup b$ for every $a,b,c\in A$.
\end{enumerate}
\end{definition}

In a da~Costa algebra, one can define the top element $\top$ to be $\top=a\sqcup\overline{a}$ for an arbitrary $a\in A$; note that $\top$ is unique and independent of the choice of $a$. Hence, the third requirement in Definition~\ref{Def:daCostaAlgebra} amounts to the definition of top element, while the fourth requirement can be simplified to
\[\overline{a} \preceq b\text{ whenever }\top\preceq a\sqcup b.\]

We can easily verify that $\B_4$ is a da Costa algebra if we use the desired negation defined in Table~\ref{Neg}. 

It should be mentioned that working with a $5$-valued semantics has its price. The law of non-contradiction fails in $\B_4$ (i.e., $a \sqcap \overline{a}$ may not equal $\bot=0000$ as evidenced by taking $a=0111$). However, since $a \sqcap \overline{a}\prec 1111$, a weak form of non-contradiction still holds as $a \sqcap \overline{a}$ is to be interpreted as a shade of \false but not necessarily as the least preferred way of violating $a \sqcap \overline{a}$, which corresponds to $\bot$. Contrary to intuitionistic logic, the law of excluded middle is valid (i.e., $a \sqcup \overline{a}=\top=1111$). Finally, $a=0111$ shows that $\overline{\overline{a}}\ne a$ although it is still true that $\overline{\overline{a}}\rightarrow a$. Interestingly, we can think of double negation
\[ \overline{\overline{a}} = \begin{cases} 1111 & \text{if $a=1111$; and} \\ 0000 & \text{otherwise} \end{cases} \]
as quantization in the sense that \true is mapped to \true and all the shades of \false are mapped to \false. Hence, double negation quantizes the five different truth values into two truth values (\true and \false) in a manner that is compatible with our interpretation of truth values.

\subsection{Semantics of rLTL$(\boxdot,\Diamonddot)$ on da~Costa Algebras}
\label{SSec:Semantics}
The semantics of rLTL$(\boxdot,\Diamonddot)$ is given by a mapping $V$, called valuation as in the case of LTL, that maps an infinite word $\sigma\in \Sigma^\omega$ and an rLTL$(\boxdot,\Diamonddot)$ formula $\varphi$ to an element of $\mathbb{B}_4$. In defining $V$, we judiciously use the algebraic operations of the da Costa algebra $\B_4$ to give meaning to the logical connectives in the syntax of rLTL$(\boxdot,\Diamonddot)$. In the following, let $\Sigma=2^\mathcal{P}$ for a finite set of atomic propositions $\mathcal{P}$. 

On atomic propositions $p\in \mathcal{P}$, $V$ is defined by
\begin{align}
V(\sigma, p) = \begin{cases}
0000 & \text{if $p \notin \sigma(0)$; and} \\
1111 & \text{if $p \in \sigma(0)$.}
\end{cases}
\end{align}
Hence, atomic propositions are interpreted classically (i.e., only two truth values are used).
Since we are using a $5$-valued semantics, we provide a separate definition for all the four logical connectives:

\begin{align}
V(\sigma,\varphi\land\psi) &= V(\sigma,\varphi)\sqcap V(\sigma,\psi), \\
V(\sigma,\varphi\lor\psi) &= V(\sigma,\varphi)\sqcup V(\sigma,\psi), \\
V(\sigma,\neg\varphi) &= \overline{V(\sigma,\varphi)}, \\
V(\sigma,\varphi\Rightarrow\psi) &= V(\sigma,\varphi)\rightarrow V(\sigma,\psi).
\end{align}
Note how the semantics mirrors the algebraic structure of da Costa algebras. This is no accident since valuations are typically algebra homomorphisms.

Unfortunately, da Costa algebras are not equipped\footnote{One could consider developing a notion of da Costa algebras with operators in the spirit of Boolean algebras with operators~\cite{BAO}. We leave such investigation for future work.} with operations corresponding to $\boxdot$ and $\Diamonddot$, the robust versions of $\Box$ and $\Diamond$, respectively. Therefore, we resort to the counting interpretation in Section~\ref{SSec:RobustnessCounting} to motivate the semantics of $\boxdot$. Formally, the semantics of  $\boxdot$ is given by
\begin{align}
\label{BoxSemantics}
V(\sigma,\boxdot\varphi)=\left(\inf_{i\ge 0}V_1(\sigma_\suf{i},\varphi),~ \sup_{j\ge 0}\inf_{i\ge j}V_2(\sigma_\suf{i},\varphi),~ \inf_{j\ge 0}\sup_{i\ge j}V_3(\sigma_\suf{i},\varphi),~ \sup_{i\ge 0}V_4(\sigma_\suf{i},\varphi)\right)
\end{align}
where $V_k(\sigma,\varphi)=\pi_k\circ V(\sigma,\varphi)$ for $k\in \{1,2,3,4\}$ and $\pi_k:\B_4\to \B$ are the mappings defined by
\begin{equation}
\label{Eq:Projection} 
\pi_k(a_1,a_2,a_3,a_4)=a_k.
\end{equation}

To illustrate the semantics of $\boxdot$, let us consider the simple case where $\varphi$ is just an atomic proposition $p$. This means that one can express $V(\sigma,\boxdot p)$ in terms of the LTL valuation $W$ by 
\begin{align}
\label{Decompose}
V(\sigma,\boxdot p)=\left(W(\sigma,\Box p),W(\sigma,\Diamond\Box p),W(\sigma,\Box\Diamond p),W(\sigma,\Diamond p)\right).
\end{align}
In other words, $V_1(\sigma,\boxdot p)$ corresponds to the LTL truth value of $\Box p$, $V_2(\sigma, \boxdot p)$ corresponds to the LTL truth value of $\Diamond\Box p$, $V_3(\sigma, \boxdot p)$ corresponds to the LTL truth value of $\Box\Diamond p$, and $V_4(\sigma, \boxdot p)$ corresponds to the LTL truth value of $\Diamond p$. Equation~\eqref{Decompose} connects the semantics of $\boxdot$ to the counting problems described in Section~\ref{SSec:RobustnessCounting} and to the $4$-tuple of LTL formulas in~\eqref{Eq:4TupleLTL}. In Section~\ref{SSec:FullrLTL} we re-interpret Equality~\eqref{Decompose} in the more general context of arbitrary formulas $\varphi$ and full rLTL.

The last operator is $\Diamonddot$, whose semantics is given by
\begin{align}
V(\sigma,\Diamonddot\varphi)=\left(\sup_{i\ge 0}V_1(\sigma_\suf{i},\varphi),~ \sup_{i\ge 0}V_2(\sigma_\suf{i},\varphi),~ \sup_{i\ge 0}V_3(\sigma_\suf{i},\varphi),~ \sup_{i\ge 0}V_4(\sigma_\suf{i},\varphi)\right).
\end{align}
According to the counting problems used in Section~\ref{SSec:RobustnessCounting} to motivate the proposed semantics, there is only one way in which the LTL formula $\Diamond p$, for an atomic proposition $p$, can be violated. Hence, $V(\sigma,\Diamonddot\varphi)$ is one of only two possible truth values: 1111 or 0000. We further note that $\Diamonddot$ is not dual to $\boxdot$, as expected in a many-valued logic where the law of double negation fails.

Having defined the semantics of rLTL$(\boxdot,\Diamonddot)$, let us now see if the formula $\boxdot p\Rightarrow \boxdot q$, where $\boxdot p$ is an environment assumption and $\boxdot q$ is a system guarantee with $p,q\in\mathcal{P}$, lives to the expectations set in the introduction and to the intuition provided in Section~\ref{SSec:RobustnessCounting}.

\begin{enumerate}
\item According to~\eqref{Decompose}, if $\Box p$ holds, then $\boxdot p$ evaluates to $1111$ and the implication $\boxdot p\Rightarrow \boxdot q$ is \true (i.e., the value of $\boxdot p\Rightarrow \boxdot q$ is 1111) if $\boxdot q$ evaluates to $1111$ (i.e., if $\Box q$ holds). Therefore, the desired behavior of $\Box p\Rightarrow \Box q$, when the environment assumptions hold, is retained.
\item Consider now the case where $\Box p$ fails but the weaker assumption $\Diamond\Box p$ holds. In this case $\boxdot p$ evaluates to $0111$ and the implication $\boxdot p \Rightarrow \boxdot q$ is \true if $\boxdot p$ evaluates to $0111$ or higher. This means that $\Diamond\Box q$ needs to hold. 
\item A similar argument shows that we can also conclude the following consequences whenever $\boxdot p\Rightarrow \boxdot q$ evaluates to $1111$: $\Box\Diamond q$ follows whenever the environment satisfies $\Box\Diamond p$ and $\Diamond q$ follows whenever the environment satisfies $\Diamond p$.
\end{enumerate}
We thus conclude that the semantics of $\boxdot p\Rightarrow \boxdot q$ captures the desired robustness property by which a weakening of the assumption $\boxdot p$ leads to a weakening of the guarantee $\boxdot q$. The following examples further motivate the usefulness of the proposed semantics. Additional arguments in favor of the proposed definition of $\boxdot$ and $\Diamonddot$ are given in Section~\ref{SSec:FullrLTL} when defining full rLTL.

\subsection{Examples}

\subsubsection{The usefulness of implications that are not \textbf{true}}
We argued in the previous section that rLTL$(\boxdot,\Diamonddot)$ captures the intended robustness properties for the specification $\boxdot p\Rightarrow \boxdot q$ whenever this formula evaluates to $1111$. But does the formula $\boxdot p\Rightarrow \boxdot q$ still provide useful information when its value is lower than $1111$? It follows from the semantics of implication that $V(\sigma, \boxdot p\Rightarrow \boxdot q)=b$, for $b\prec 1111$, occurs when $V(\sigma,\boxdot q)=b$ (i.e., whenever a value of $b$ can be guaranteed despite $b$ being smaller than $V(\sigma,\boxdot p)$). The value $V(\sigma, \boxdot p\Rightarrow \boxdot q)$ thus describes which weakened guarantee follows from the environment assumption whenever the intended system guarantee does not. This can be seen as another measure of robustness: despite $\boxdot q$ not following from $\boxdot p$, the behavior of the system is not arbitrary, a value of $b$ is still guaranteed.

\subsubsection{GR(1) in rLTL$(\boxdot,\Diamonddot)$}
The GR(1) fragment of LTL is becoming increasingly popular for striking an interesting balance between its expressiveness and the complexity of the corresponding synthesis problem~\cite{GR1}.  Recall that a GR(1) formula is an LTL$(\Box,\Diamond)$ formula of the form
\begin{equation}
\label{GR(1)}
\bigwedge_{i\in I}\Box\Diamond p_i\Rightarrow \bigwedge_{j\in J}\Box\Diamond q_j
\end{equation}
where $p_i$ and $q_j$ are atomic propositions and $I, J$ are finite sets. We obtain the rLTL$(\boxdot,\Diamonddot)$ version of~\eqref{GR(1)} simply by dotting the boxes and the diamonds: 
\begin{equation}
\label{rGR(1)}
\bigwedge_{i\in I}\boxdot\Diamonddot p_i\Rightarrow \bigwedge_{j\in J}\boxdot\Diamonddot q_j.
\end{equation}
Any valuation $V$ for $\boxdot\Diamonddot p_i$ can be expressed in terms of a  valuation $W$ for LTL as
\begin{align*}
V(\sigma, \boxdot\Diamonddot p_i) & =\left(W(\sigma, \Box\Diamond p_i),~ W(\sigma, \Diamond\Box\Diamond p_i),W(\sigma, \Box\Diamond\Diamond p_i),~ W(\sigma, \Diamond\Diamond p_i)\right)\\
& =\left(W(\sigma, \Box\Diamond p_i),~ W(\sigma, \Box\Diamond p_i),~ W(\sigma, \Box\Diamond p_i),~ W(\sigma, \Diamond p_i)\right).
\end{align*}
Therefore, $V(\sigma, \boxdot\Diamonddot p_i)$ can only assume three different values: $1111$ when $\Box\Diamond p_i$ holds, $0001$ when $\Box\Diamond p_i$ fails to hold but $\Diamond p_i$ does hold, and $0000$ when $\Diamond p_i$ fails to hold. Based on this observation, and assuming that~\eqref{rGR(1)} evaluates to $1111$, we conclude that $\bigwedge_{j\in J}\Box\Diamond q_j$ holds whenever $\bigwedge_{i\in I}\Box\Diamond p_i$ does, as required by~\eqref{GR(1)}. In contrast with~\eqref{GR(1)}, however, the weakened system guarantee $\bigwedge_{j\in J}\Diamond q_j$ holds whenever the weaker environment assumption $\bigwedge_{i\in I}\Diamond p_i$ does.

\subsubsection{Non-counting formulas}
All the preceding examples were counting formulas, as defined in Section~\ref{SSec:RobustnessCounting}. We now consider the simple non-counting formula $\Box(p\Rightarrow \Diamond q)$, which requires each occurrence of $p$ to be followed by an occurrence of $q$. The word $(p\land \neg q)(\neg p\land q)(\neg p\land \neg q)^\omega$ clearly satisfies this formula although its permutation $(\neg p\land q)(p\land \neg q)(\neg p\land \neg q)^\omega$ does not. In addition to being a non-counting formula, $\Box(p\Rightarrow \Diamond q)$ is one of the most popular examples of an LTL formula used in the literature and, for this reason, constitutes a litmus test to rLTL$(\boxdot,\Diamonddot)$. The semantics of the dotted version of $\Box (p\Rightarrow \Diamond q)$ can be expressed using an LTL valuation $W$ as
\[V(\sigma,\boxdot(p\Rightarrow \Diamonddot q))=\left(W(\sigma,\Box(p\Rightarrow \Diamond q)),~ W(\sigma,\Box\Diamond p\Rightarrow \Box\Diamond q),~ W(\sigma,\Diamond\Box p\Rightarrow \Box\Diamond q),~ W(\sigma,\Box p\Rightarrow \Diamond q)\right).\]
It is interesting to observe how the semantics of $\varphi=\boxdot(p\Rightarrow\Diamonddot q)$ recovers: strong fairness, also known as compassion, when the value of $\varphi$ is 0111; weak fairness, also known as justice, when the value of $\varphi$ is 0011; and the even weaker notion of fairness represented by the LTL formula $\Box p\Rightarrow \Diamond q$, when the value of $\varphi$ is 0001. The fact that all these different and well known notions of fairness naturally appear in the proposed semantics is another strong indication of rLTL's naturalness and usefulness.

\subsection{Relating LTL$(\Box,\Diamond)$ and rLTL$(\boxdot,\Diamonddot)$}
\label{SSec:Relating}
In this section we discuss, at the technical level, the relationships between rLTL$(\boxdot,\Diamonddot)$  and LTL$(\Box,\Diamond)$. 

Recall the mapping $\pi_1:\B_4\to\B$ introduced in~\eqref{Eq:Projection}, defined by $\pi_1(a_1,a_2,a_3,a_4)=a_1$. Composing $\pi_1$ with a valuation $V$ of rLTL$(\boxdot,\Diamonddot)$ we obtain the function $V_1=\pi_1\circ V$ transforming an infinite word $\sigma\in \Sigma^\omega$ and a rLTL$(\boxdot,\Diamonddot)$ formula $\varphi$ into the element $V_1(\sigma,\varphi)$ of $\mathbb{B}$. We now show that $V_1$ is in fact a LTL$(\Box,\Diamond)$ valuation.

On atomic propositions $p\in\mathcal{P}$ we have
\begin{align}
V_1(\sigma, p) = \begin{cases}
\pi_1(0000)=0 & \text{if $p \notin \sigma(0)$; and} \\
\pi_1(1111)=1 & \text{if $p \in \sigma(0)$.}
\end{cases}
\end{align}
Moreover, the following equalities can be easily verified:
\begin{align}
\label{Eq:MorFirst}
V_1(\sigma,\varphi\land\psi) & = \pi_1\left(V(\sigma,\varphi)\sqcap V(\sigma,\psi)\right) = \min\{V_1(\sigma,\varphi), V_1(\sigma,\psi)\}, \\
V_1(\sigma,\varphi\lor\psi) & = \pi_1\left(V(\sigma,\varphi)\sqcup V(\sigma,\psi)\right) = \max\{V_1(\sigma,\varphi), V_1(\sigma,\psi)\}, \\
V_1(\sigma,\neg\varphi) & = \pi_1\left(\overline{V(\sigma,\varphi)}\right)=1-\pi_1\left(V(\sigma,\varphi)\right)=1-V_1(\sigma,\varphi), \\
V_1(\sigma,\varphi\Rightarrow\psi) &= \pi_1\left(V(\sigma,\varphi)\rightarrow V(\sigma,\psi)\right)=\max\left\{1-V_1(\sigma,\varphi),V_1(\sigma,\psi)\right\}.
\end{align}
Finally, it follows directly from the semantics of $\boxdot$ and $\Diamonddot$ that
\begin{align}
V_1(\sigma,\boxdot\varphi) & = \pi_1\left(V(\sigma,\boxdot\varphi)\right)=\inf_{i\ge 0}V_1(\sigma_\suf{i},\varphi), \\
\label{Eq:MorLast}
V_1(\sigma,\Diamonddot\varphi) & = \pi_1\left(V(\sigma,\Diamonddot\varphi)\right)=\sup_{i\ge 0}V_1(\sigma_\suf{i},\varphi).
\end{align}

Hence, the semantics of LTL$(\Box,\Diamond)$ can always be recovered from the first component of the semantics of rLTL$(\boxdot,\Diamonddot)$, thereby showing that rLTL$(\boxdot,\Diamonddot)$ is as expressive as LTL$(\Box,\Diamond)$. 

Conversely, one can translate an rLTL$(\boxdot,\Diamonddot)$ formula $\varphi$ into four LTL$(\Box,\Diamond)$ formulas $\psi_\varphi^1, \ldots, \psi_\varphi^4$ such that
\[ \pi_j(V(\sigma, \varphi)) =V_j(\sigma, \varphi) = W(\sigma, \psi_\varphi^j) \]
for all $\sigma \in \Sigma^\omega$ and $j \in \{1, \ldots, 4\}$. The key idea is to emulate the semantics of each operator occurring in $\varphi$ component-wise by means of dedicated LTL formulas.

The construction of $\psi_\varphi^j$ proceeds by induction over the subformulas of $\varphi$: 

\begin{itemize}
	\item If $\varphi = p$ for an atomic proposition $p \in \mathcal P$, then $\psi_\varphi^j \coloneqq p$ for all $j \in \{ 1, \ldots, 4 \}$.
	\item If $\varphi = \varphi_1 \lor \varphi_2$, then $\psi_\varphi^j \coloneqq \psi_{\varphi_1}^j \lor \psi_{\varphi_2}^j$ for all $j \in \{ 1, \ldots, 4 \}$.
	\item If $\varphi = \varphi_1 \land \varphi_2$, then $\psi_\varphi^j \coloneqq \psi_{\varphi_1}^j \land \psi_{\varphi_2}^j$ for all $j \in \{ 1, \ldots, 4 \}$.
	\item If $\varphi = \lnot \varphi_1$, then $\psi_\varphi^j \coloneqq \lnot (\psi_{\varphi_1}^1 \land \psi_{\varphi_1}^2 \land \psi_{\varphi_1}^3 \land \psi_{\varphi_1}^4)$ for all $j \in \{ 1, \ldots, 4 \}$.
	\item If $\varphi = \varphi_1 \Rightarrow \varphi_2$, then $\psi_\varphi^j \coloneqq \left( \bigvee_{k \in \{1, \ldots, 4\}} ( \psi_{\varphi_1}^k \land \lnot \psi_{\varphi_2}^k) \right) \Rightarrow \psi_{\varphi_2}^j$ for all $j \in \{ 1, \ldots, 4 \}$.
	%
	\item If $\varphi = \Diamonddot \varphi_1$, then $\psi_\varphi^j \coloneqq \Diamond \psi_{\varphi_1}^j$ for all $j \in \{ 1, \ldots, 4 \}$.
	\item If $\varphi = \boxdot \varphi_1$, then $\psi_\varphi^1 \coloneqq \Box \psi_{\varphi_1}^1$, $\psi_\varphi^2 \coloneqq \Diamond \Box \psi_{\varphi_1}^2$, $\psi_\varphi^3 \coloneqq \Box \Diamond \psi_{\varphi_1}^3$, and $\psi_\varphi^4 \coloneqq \Diamond \psi_{\varphi_1}^4$.
\end{itemize}

It is not hard to verify that the formulas $\psi_\varphi^j$ have indeed the desired meaning. However, note that the size of $\psi_\varphi^j$, measured in the number of  subformulas, is exponential in the size of $\varphi$ due to the recursive substitution of the sub-formulas.

The preceding discussion can be summarized by following result.

\begin{proposition}
LTL$(\Box,\Diamond)$ and rLTL$(\boxdot,\Diamonddot)$ are equally expressive.
\end{proposition}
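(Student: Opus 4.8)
The plan is to first pin down what ``equally expressive'' should mean for a $5$-valued logic, and then to establish the two resulting inclusions using the two translations already exhibited above. For an rLTL$(\boxdot,\Diamonddot)$ formula $\varphi$ and a truth value $b\in\B_4$, call $L_b(\varphi)=\{\sigma\in\Sigma^\omega : V(\sigma,\varphi)\succeq b\}$ the \emph{$b$-cut} of $\varphi$, and for an LTL$(\Box,\Diamond)$ formula $\theta$ let $L(\theta)=\{\sigma : W(\sigma,\theta)=1\}$. The statement to prove then unfolds into: (i) every $L(\theta)$ equals some cut $L_b(\varphi)$; and (ii) every cut $L_b(\varphi)$ equals some $L(\theta)$.

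For (i), I would first rewrite $\theta$ using only $\lnot,\lor,\Box,\Diamond$ (unfolding the derived $\land$ and $\Rightarrow$) and then form its dotted companion $\widehat\theta$ by replacing $\Box$ with $\boxdot$ and $\Diamond$ with $\Diamonddot$. Equations~\eqref{Eq:MorFirst}--\eqref{Eq:MorLast} already show that $V_1=\pi_1\circ V$ satisfies, on every rLTL formula, exactly the recursive clauses that define the LTL valuation $W$ on $\lnot,\lor,\Box,\Diamond$; a one-line structural induction then gives $V_1(\sigma,\widehat\theta)=W(\sigma,\theta)$ for all $\sigma$. By the monotonicity property~\eqref{Monotonicity}, the first coordinate of a $\B_4$-element equals $1$ only for $\top=1111$, so $L(\theta)=\{\sigma : V_1(\sigma,\widehat\theta)=1\}=\{\sigma : V(\sigma,\widehat\theta)=\top\}=L_\top(\widehat\theta)$.

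For (ii), I would use the formulas $\psi_\varphi^1,\dots,\psi_\varphi^4$ defined above and prove, by structural induction on $\varphi$, the identity $V_j(\sigma,\varphi)=W(\sigma,\psi_\varphi^j)$ for all $\sigma$ and all $j\in\{1,\dots,4\}$. The base case and the cases $\varphi_1\lor\varphi_2$, $\varphi_1\land\varphi_2$, $\Diamonddot\varphi_1$ are immediate since $\sqcap$, $\sqcup$ and $\Diamonddot$ act coordinatewise. For $\lnot\varphi_1$ one uses that the da~Costa negation takes only the values $\bot$ and $\top$, and equals $\top$ precisely when its argument differs from $\top$, i.e.\ when $\psi_{\varphi_1}^1\land\dots\land\psi_{\varphi_1}^4$ fails. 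For $\varphi_1\Rightarrow\varphi_2$ one uses that $\preceq$ is a \emph{linear} order on $\B_4$ that there coincides with the coordinatewise order, so $V(\sigma,\varphi_1)\not\preceq V(\sigma,\varphi_2)$ holds iff $\bigvee_k(\psi_{\varphi_1}^k\land\lnot\psi_{\varphi_2}^k)$ holds; plugging this into the residuum formula ($a\rightarrow b=\top$ if $a\preceq b$, else $a\rightarrow b=b$) reproduces exactly the stated $\psi_\varphi^j$. For $\boxdot\varphi_1$ one combines the counting semantics~\eqref{BoxSemantics} with the induction hypothesis and the LTL identities $\inf_i W(\sigma_\suf{i},\xi)=W(\sigma,\Box\xi)$, $\sup_j\inf_{i\ge j}W(\sigma_\suf{i},\xi)=W(\sigma,\Diamond\Box\xi)$, $\inf_j\sup_{i\ge j}W(\sigma_\suf{i},\xi)=W(\sigma,\Box\Diamond\xi)$, and $\sup_i W(\sigma_\suf{i},\xi)=W(\sigma,\Diamond\xi)$. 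Once the identity holds, the coordinatewise description of $\preceq$ on $\B_4$ gives $L_b(\varphi)=\{\sigma : \bigwedge_{k:\,b_k=1}W(\sigma,\psi_\varphi^k)=1\}=L\bigl(\bigwedge_{k:\,b_k=1}\psi_\varphi^k\bigr)$ for any $b\in\B_4$, which is LTL$(\Box,\Diamond)$-definable.

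I expect the main obstacle to be twofold: first, committing to the correct notion of expressiveness for a many-valued logic, which the cut-based formulation above resolves; and second, the structural induction for (ii), whose subtle cases are $\lnot$ and $\Rightarrow$ --- precisely the places where the non-intuitionistic da~Costa negation and the linearity of $(\B_4,\preceq)$ are essential and where a naive translation would be incorrect. By contrast, the $\boxdot$ case, though it is the conceptual core of rLTL, is mechanically routine once~\eqref{BoxSemantics} is invoked.
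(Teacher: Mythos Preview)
Your proposal is correct and follows the same approach as the paper: the proposition in the paper is simply a summary of the preceding discussion, which exhibits precisely the two translations you use --- the dotting map together with the identity $V_1(\sigma,\widehat\theta)=W(\sigma,\theta)$ in one direction (this is exactly the content of Equations~\eqref{Eq:MorFirst}--\eqref{Eq:MorLast}), and the formulas $\psi_\varphi^1,\dots,\psi_\varphi^4$ in the other. Your contribution over the paper is methodological rather than mathematical: you make explicit a notion of expressiveness (via the level-cuts $L_b(\varphi)$) that the paper leaves tacit, and you spell out the structural induction for $V_j(\sigma,\varphi)=W(\sigma,\psi_\varphi^j)$, correctly isolating $\lnot$ and $\Rightarrow$ as the cases where the da~Costa negation and the linearity of $(\B_4,\preceq)$ do real work --- the paper merely asserts that ``it is not hard to verify'' this identity.
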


Since the translations from LTL$(\Box,\Diamond)$ to rLTL$(\boxdot,\Diamonddot)$ and vice versa are effective, we immediately conclude that any problem for rLTL$(\boxdot,\Diamonddot)$, whose corresponding problem for LTL$(\Box,\Diamond)$ is decidable, is also decidable. In practice, however, the translation from rLTL$(\boxdot,\Diamonddot)$ to LTL$(\Box,\Diamond)$ involves an exponential blow-up. Hence, we investigate in Section~\ref{Sec:Model} the complexity of several verification and synthesis problems by developing algorithms specialized for rLTL$(\boxdot,\Diamonddot)$.


\section{Model Checking and Synthesis}
\label{Sec:Model}
Similarly to LTL, rLTL gives rise to various (decision) problems, some of which we investigate in this section.
We are particularly interested in model checking and in reactive synthesis.
These two problems are clearly amongst the most important in the context of LTL and, hence, must be investigated for rLTL. We address in this section the fragment rLTL$(\boxdot, \Diamonddot)$ and leave full rLTL to Section~\ref{sec:full_rLTL} since this more general case can be handled by a simple extension of the ideas developed for rLTL$(\boxdot, \Diamonddot)$.

As the translation from rLTL$(\boxdot, \Diamonddot)$ into LTL$(\Box, \Diamond)$ potentially results in an exponentially large formula, we now develop a computationally more efficient approach to the model checking and reactive synthesis problems via a translation into (generalized) Büchi automata. Our construction follows the well known translation of LTL into Büchi automata (see, e.g., Baier and Katoen~\cite{Baier:2008:PMC:1373322}) and results in a generalized Büchi automaton with $\mathcal O(k \cdot 5^k)$ states where $k$ counts the subformulas of the given rLTL$(\boxdot, \Diamonddot)$ formula.
This is the same complexity as for the LTL translation---which results in an automation with size in $\mathcal O(k \cdot 2^k)$---once we replace $2$ with $5$ since rLTL is 5-valued while LTL is 2-valued.

Similarly to LTL, our translation relies on so-called expansion rules, which we introduce in Section~\ref{sec:LTL_expansion_rules}. Based on these rules, we present the translation from rLTL$(\boxdot, \Diamonddot)$ to generalized Büchi automata in Section~\ref{sec:rLTL_to_Buechi}.
Subsequently, we consider model checking in Section~\ref{sec:model_checking} and reactive synthesis in Section~\ref{sec:synthesis}.

\subsection{Expansion Rules}\label{sec:LTL_expansion_rules}
The operators $\boxdot$ and $\Diamonddot$ have expansion rules similar to their LTL counterparts $\Box$ and $\Diamond$ (see Baier and Katoen~\cite{Baier:2008:PMC:1373322} for a more in-depth discussion of LTL expansion rules). The following proposition states these rules in detail.

\begin{proposition}[Expansion Rules]\label{prop:expansion_rules}
For any rLTL$(\boxdot,\Diamonddot)$ formula $\varphi$, any $\sigma\in \Sigma^\omega$, any $\ell\in\N$, and any valuation $V$, the following equalities (called \emph{expansion rules}) hold:
\begin{align}
\label{Eq:Exp1}
V_1(\sigma_\suf{\ell},\boxdot\varphi)&=\min\left\{V_1(\sigma_\suf{\ell},\varphi),V_1(\sigma_\suf{\ell+1},\boxdot\varphi)\right\}, \\
\label{Eq:Exp2}
V_2(\sigma_\suf{\ell},\boxdot\varphi) & =\max\left\{V_1(\sigma_\suf{\ell},\boxdot\varphi),V_2(\sigma_\suf{\ell+1},\boxdot\varphi)\right\}, \\
\label{Eq:Exp3}
V_3(\sigma_\suf{\ell},\boxdot\varphi) & =\min\left\{V_4(\sigma_\suf{\ell},\boxdot\varphi),V_3(\sigma_\suf{\ell+1},\boxdot\varphi)\right\}, \\
\label{Eq:Exp4}
V_4(\sigma_\suf{\ell},\boxdot\varphi) &= \max\left\{V_4(\sigma_\suf{\ell},\varphi),V_4(\sigma_\suf{\ell+1},\boxdot\varphi)\right\}, \\
V_k(\sigma_\suf{\ell},\Diamonddot\varphi) &= \max\left\{V_k(\sigma_\suf{\ell},\varphi),V_k(\sigma_\suf{\ell+1},\Diamonddot\varphi)\right\} \text{ for each $k \in \{1, \ldots, 4 \}$} .
\end{align}
\end{proposition}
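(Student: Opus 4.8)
The plan is to prove each of the five expansion rules by unfolding the definitions of $V_1, V_2, V_3, V_4$ for $\boxdot$ and $\Diamonddot$ given in Section~\ref{SSec:Semantics}, and then using elementary manipulations of suprema and infima over shifted index sets. The key observation throughout is that for any sequence $(a_i)_{i \geq 0}$ in a complete lattice and any $\ell \in \N$, applying the $\boxdot$ or $\Diamonddot$ semantics at the suffix $\sigma_\suf{\ell}$ amounts to evaluating the corresponding $\inf$/$\sup$ expression over the index set $\{i \geq \ell\}$ after an index shift, since $(\sigma_\suf{\ell})_\suf{i} = \sigma_\suf{\ell + i}$.

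First I would handle~\eqref{Eq:Exp1} and~\eqref{Eq:Exp4}, which are the easiest: from the definition in~\eqref{BoxSemantics}, $V_1(\sigma_\suf{\ell}, \boxdot\varphi) = \inf_{i \geq 0} V_1(\sigma_\suf{\ell+i}, \varphi) = \inf_{i \geq \ell} V_1(\sigma_\suf{i}, \varphi)$, and splitting off the first term gives $\min\{V_1(\sigma_\suf{\ell},\varphi), \inf_{i \geq \ell+1} V_1(\sigma_\suf{i}, \varphi)\} = \min\{V_1(\sigma_\suf{\ell},\varphi), V_1(\sigma_\suf{\ell+1}, \boxdot\varphi)\}$; the argument for~\eqref{Eq:Exp4} and for the $\Diamonddot$ rule is identical with $\inf$ replaced by $\sup$ and $\min$ by $\max$. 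Next I would do~\eqref{Eq:Exp2}. Here $V_2(\sigma_\suf{\ell}, \boxdot\varphi) = \sup_{j \geq \ell} \inf_{i \geq j} V_2(\sigma_\suf{i}, \varphi)$; peeling off the $j = \ell$ term of the outer $\sup$ gives $\max\bigl\{ \inf_{i \geq \ell} V_2(\sigma_\suf{i}, \varphi),\ \sup_{j \geq \ell+1} \inf_{i \geq j} V_2(\sigma_\suf{i}, \varphi) \bigr\}$, and the second term is exactly $V_2(\sigma_\suf{\ell+1}, \boxdot\varphi)$. The mild subtlety is that the first term, $\inf_{i \geq \ell} V_2(\sigma_\suf{i}, \varphi)$, is \emph{not} literally $V_1(\sigma_\suf{\ell}, \boxdot\varphi)$; to close the gap I would invoke the monotonicity property~\eqref{Monotonicity} of $\B_4$, which gives $V_1(\sigma_\suf{i},\varphi) \leq_\B V_2(\sigma_\suf{i},\varphi)$ pointwise, hence $\inf_{i\geq \ell} V_1 \leq \inf_{i\geq\ell} V_2$, and then argue that adding the smaller value $V_1(\sigma_\suf{\ell},\boxdot\varphi)$ inside the outer $\max$ does not change it — more precisely, since $V_1(\sigma_\suf{\ell},\boxdot\varphi) = \inf_{i\geq\ell}V_1(\sigma_\suf{i},\varphi) \leq \inf_{i\geq\ell}V_2(\sigma_\suf{i},\varphi)$, which is already dominated by the first term of the $\max$, replacing $\inf_{i\geq\ell}V_2$ by $V_1(\sigma_\suf{\ell},\boxdot\varphi)$ is legitimate precisely when one also keeps $V_2(\sigma_\suf{\ell+1},\boxdot\varphi)$ — I would verify the two sides agree by a short case check. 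Rule~\eqref{Eq:Exp3} is the dual: $V_3(\sigma_\suf{\ell},\boxdot\varphi) = \inf_{j \geq \ell}\sup_{i\geq j}V_3(\sigma_\suf{i},\varphi)$, peel off $j=\ell$ to get $\min\bigl\{\sup_{i\geq\ell}V_3(\sigma_\suf{i},\varphi),\ V_3(\sigma_\suf{\ell+1},\boxdot\varphi)\bigr\}$, and use $V_3 \leq_\B V_4$ from~\eqref{Monotonicity} together with $\sup_{i\geq\ell}V_4(\sigma_\suf{i},\varphi) = V_4(\sigma_\suf{\ell},\boxdot\varphi)$ to rewrite the first term as $V_4(\sigma_\suf{\ell},\boxdot\varphi)$.

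The only genuine obstacle, and the step I expect to require the most care, is the reconciliation in~\eqref{Eq:Exp2} and~\eqref{Eq:Exp3} between the bare $\inf$/$\sup$ that falls out of peeling the outermost quantifier and the $V_1(\sigma_\suf{\ell},\boxdot\varphi)$ / $V_4(\sigma_\suf{\ell},\boxdot\varphi)$ term appearing in the claimed identity; the equalities are true because of the lattice-ordering constraints~\eqref{Monotonicity} baked into $\B_4$ (which force $b_1 \leq b_2 \leq b_3 \leq b_4$ componentwise at every suffix), but without that structural input the naive manipulation would only give an inequality. Everything else is a routine index-shift computation valid in any complete lattice, so once the $\B_4$-monotonicity is used correctly the proof is complete.
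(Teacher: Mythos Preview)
Your proposal is correct and follows essentially the same route as the paper: peel off the first index for~\eqref{Eq:Exp1}, \eqref{Eq:Exp4}, and the $\Diamonddot$ rule, then for~\eqref{Eq:Exp2} (and dually~\eqref{Eq:Exp3}) use the monotonicity constraint $V_1\le V_2$ built into $\B_4$ together with a short case split to justify replacing $\inf_{i\ge\ell}V_2(\sigma_\suf{i},\varphi)$ by $V_1(\sigma_\suf{\ell},\boxdot\varphi)$ inside the outer $\max$. The paper's case split is on whether $V_2(\sigma_\suf{\ell+1},\boxdot\varphi)$ equals $0$ or $1$, which is exactly the ``short case check'' you allude to; your intermediate phrasing (``dominated by the first term of the $\max$'') is a little garbled, but the plan is sound and matches the paper.
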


It is important to highlight that Equation~\eqref{Eq:Exp2} does not only recur on $V_2$ but also on $V_1$ (an analogous observation is true for Equation~\eqref{Eq:Exp3}). 
In fact, by recurring on $V_1(\sigma_\suf{\ell},\boxdot\varphi)$ instead of $\sup_{k \geq \ell}{V_2(\sigma_\suf{k}, \varphi)}$, as one might have expected, we avoid the intermediate computation of $\sup_{k \geq \ell}{V_2(\sigma_\suf{k}, \varphi)}$ by the generalized Büchi automaton and, thereby, save auxiliary memory.
This is the key property that allows us to prevent an unduly growth in the size of the resulting Büchi automaton and to achieve the desired bound on the number of states.

\begin{proof}[Proof of Proposition~\ref{prop:expansion_rules}]
Equality~\eqref{Eq:Exp1} follows directly from the properties of $\inf$:
\begin{align*}
V_1(\sigma_\suf{\ell},\boxdot\varphi)=\inf_{i\ge \ell}V_1(\sigma_\suf{i},\varphi)&=\inf\left\{V_1(\sigma_\suf{\ell},\varphi),V_1(\sigma_\suf{\ell+1},\varphi),V_1(\sigma_\suf{\ell+2},\varphi),\hdots\right\}\\
&=\inf\left\{V_1(\sigma_\suf{\ell},\varphi),\inf\left\{V_1(\sigma_\suf{\ell+1},\varphi),V_1(\sigma_\suf{\ell+2},\varphi),\hdots\right\}\right\}\\
&=\min\left\{V_1(\sigma_\suf{\ell},\varphi),\inf_{i\ge \ell+1} V_1(\sigma_\suf{i},\varphi)\right\}\\
&=\min\left\{V_1(\sigma_\suf{\ell},\varphi),V_1(\sigma_\suf{\ell+1},\boxdot\varphi)\right\}. 
\end{align*}
A similar argument using the properties of $\sup$ shows that
\[V_2(\sigma_\suf{\ell},\boxdot\varphi)= \sup_{j\ge \ell}\inf_{i\ge j} V_2(\sigma_\suf{i},\varphi)=\max\left\{\inf_{i\ge \ell} V_2(\sigma_\suf{i},\varphi),\sup_{j\ge \ell+1}\inf_{i\ge j} V_2(\sigma_\suf{i},\varphi)\right\}. \]

To conclude the proof of Equality~\eqref{Eq:Exp2}, we need to replace the term $\inf_{i\ge \ell} V_2(\sigma_\suf{i},\varphi)$ inside the $\max$ by $\inf_{i\ge \ell} V_1(\sigma_\suf{i},\varphi)$; in other words, we must prove the last equality in the equation
\begin{equation}
	\begin{aligned}
		\max\left\{V_1(\sigma_\suf{\ell},\boxdot\varphi),V_2(\sigma_\suf{\ell+1},\boxdot\varphi)\right\} & = \max\left\{\inf_{i\ge \ell} V_1(\sigma_\suf{i},\varphi),\sup_{j\ge \ell+1}\inf_{i\ge j} V_2(\sigma_\suf{i},\varphi)\right\} \\
		& = \max\left\{\inf_{i\ge \ell} V_2(\sigma_\suf{i},\varphi),\sup_{j\ge \ell+1}\inf_{i\ge j} V_2(\sigma_\suf{i},\varphi)\right\} 
	\end{aligned}
	\label{Eq:Equality}
\end{equation}
holds for every sequence $\sigma\in \Sigma^\omega$, every rLTL$(\boxdot,\Diamonddot)$ formula $\varphi$, and any valuation $V$. 

To this end, we consider two separate cases. The first case is $\sup_{j\ge \ell+1}\inf_{i\ge j} V_2(\sigma_\suf{i},\varphi)=1$ and immediately leads to the desired equality:
\[ \max\left\{\inf_{i\ge \ell} V_1(\sigma_\suf{i},\varphi),\sup_{j\ge \ell+1}\inf_{i\ge j} V_2(\sigma_\suf{i},\varphi)\right\}=1=\max\left\{\inf_{i\ge \ell} V_2(\sigma_\suf{i},\varphi),\sup_{j\ge \ell+1}\inf_{i\ge j} V_2(\sigma_\suf{i},\varphi)\right\}. \]

The second case is $\sup_{j\ge \ell+1}\inf_{i\ge j} V_2(\sigma_\suf{i},\varphi)=0$ and the desired equality reduces to
\[ \inf_{i\ge \ell} V_1(\sigma_\suf{i},\varphi)=\inf_{i\ge \ell} V_2(\sigma_\suf{i},\varphi). \]
We now note that $\sup_{j\ge \ell+1}\inf_{i\ge j} V_2(\sigma_\suf{i},\varphi)=0$ implies $\inf_{i\ge \ell+1} V_2(\sigma_\suf{i},\varphi)=0$ which, in turn, implies $\inf_{i\ge \ell} V_2(\sigma_\suf{i},\varphi)=0$. Hence, to conclude the proof, we must show $\inf_{i\ge \ell} V_1(\sigma_\suf{i},\varphi)=0$. We recall that every element $b=(b_1,b_2,b_3,b_4)\in\B_4$ satisfies $b_1\le b_2$. In particular, we have $V_1(\sigma_\suf{i},\varphi)\le V_2(\sigma_\suf{i},\varphi)$ for every $i\in \N$, and it follows from the monotonicity properties of $\inf$ that
\[ \inf_{i\ge \ell} V_1(\sigma_\suf{i},\varphi)\le \inf_{i\ge \ell} V_2(\sigma_\suf{i},\varphi). \]
The proof of equality~\eqref{Eq:Exp2} is now finished by noting that the previous inequality and $\inf_{i\ge \ell} V_2(\sigma_\suf{i},\varphi)=0$ imply $\inf_{i\ge \ell} V_1(\sigma_\suf{i},\varphi)=0$.

The proof of Equality~\eqref{Eq:Exp3} is dual to the proof of Equality~\eqref{Eq:Exp2}, while the proof of Equality~\eqref{Eq:Exp4} is dual to the proof of Equality~\eqref{Eq:Exp1}.
\end{proof}

\subsection{rLTL$(\boxdot,\Diamonddot)$ and Büchi Automata}\label{sec:rLTL_to_Buechi}
It is well-known that one can construct for any LTL formula a (generalized) Büchi automaton that accepts exactly those infinite words satisfying the formula.
Our goal is to establish a similar connection between rLTL$(\boxdot,\Diamonddot)$ and generalized Büchi automata. As preparation, let us briefly recapitulate the definition of generalized Büchi automata and introduce basic notations.

\subsubsection{A Brief Recapitulation of Generalized Büchi Automata}
Intuitively, a generalized Büchi automaton is a (nondeterministic) Büchi automaton with a set of acceptance conditions (rather than just a single one). A formal definition is as follows.

\begin{definition}[Generalized Büchi automaton]
A \emph{generalized Büchi automaton} is a tuple $\mathcal A = (Q, \Sigma, q_0, \Delta, \mathcal F)$ consisting of a nonempty, finite set $Q$ of states, a (finite) input alphabet $\Sigma$, an initial state $q_0 \in Q$, a (nondeterministic) transition relation $\Delta \in Q \times \Sigma \times Q$, and a set $\mathcal F \subseteq 2^Q$ denoting the acceptance conditions.
\end{definition}

The \emph{run} of a generalized Büchi automaton on a word $\sigma \in \Sigma^\omega$ (also called \emph{input}) is an infinite sequence of states $\rho = q_0 q_1 \ldots \in Q^\omega$ satisfying $(q_i, \sigma(i), q_{i+1}) \in \Delta$ for all $i \in \mathbb N$ (note that each run starts in the initial state $q_0$).
Given a run $\rho = q_0 q_1 \ldots$, we denote the set of states occurring infinitely often during $\rho$ by $\Inf(\rho) = \{ q \in Q \mid \forall i \in \mathbb N~ \exists j \geq i \colon q_j = q\}$.
A run $\rho$ is called \emph{accepting} if $\Inf(\rho) \cap F \neq \emptyset$ for all $F \in \mathcal F$ (i.e., the run visits a state of each set $F \in \mathcal F$ infinitely often). The \emph{language} of a generalized Büchi automaton $\mathcal A$, denoted by $L(\mathcal A)$, is the set of all infinite words $\sigma \in \Sigma^\omega$ for which an accepting run of $\mathcal A$ exists.

\subsubsection{From rLTL$(\boxdot,\Diamonddot)$ to Generalized Büchi Automata}\label{sec:LTL_2_BA}
A classical translation of LTL formulas into generalized Büchi automata is based on the so-called $\varphi$-expansion: given an LTL formula $\varphi$, the $\varphi$-expansion of an infinite word $\sigma \in \Sigma^\omega$ tracks the evaluation of $\varphi$ and its subformulas at each position of $\sigma$. The key idea is to construct a generalized Büchi automaton that nondeterministically guesses the $\varphi$-expansion step-by-step when reading its input (and verifies the guess by means of its acceptance conditions). The automaton is constructed to accept an input $\sigma$ if and only if the $\varphi$-expansion signals that $W(\sigma, \varphi) = 1$.

Our approach follows a similar line and translates an rLTL$(\boxdot,\Diamonddot)$ formula $\varphi$ into a generalized Büchi automaton $\mathcal A_\varphi$.  However, since the value of an rLTL$(\boxdot,\Diamonddot)$ formula is not Boolean but an element of $\mathbb B_4$, we construct a generalized Büchi automaton without a dedicated initial state. Instead, we introduce for each $b \in \mathbb B_4$ a state $q_b$ and construct $\mathcal A_\varphi$ such that it accepts an input $\sigma$ starting in state $q_b$ if and only if $V(\sigma, \varphi) = b$. In this way, we can easily determine the value of an arbitrary word by simply checking from which of the states $q_b$ it is accepted (it is, by construction, accepted from exactly one of these states).

As the classical translation, our translation is based on the notion of $\varphi$-expansion, which records the value of each subformula of $\varphi$ on the given word. The set of sub-formulas of an rLTL$(\boxdot,\Diamonddot)$ formula, called closure, is defined next.

\begin{definition}[Closure]
Let $p \in \mathcal P$ an atomic proposition and $\varphi, \psi$ two rLTL$(\boxdot,\Diamonddot)$ formulas. The \emph{closure} of an rLTL$(\boxdot,\Diamonddot)$ formula, denoted by $\closure$, is inductively defined as follows:
\begin{itemize}
	\item $\closure(p) = \{ p \}$;
	\item $\closure(\lnot \varphi) = \{\lnot \varphi\} \cup \closure(\varphi)$;
	\item $\closure(\varphi \land \psi) = \{\varphi \land \psi \} \cup \closure(\varphi) \cup \closure(\psi)$;
	\item $\closure(\varphi \lor \psi) = \{\varphi \lor \psi \} \cup \closure(\varphi) \cup \closure(\psi)$;
	\item $\closure(\varphi \Rightarrow \psi) = \{\varphi \Rightarrow \psi \} \cup \closure(\varphi) \cup \closure(\psi)$;
	\item $\closure(\Diamonddot \varphi) = \{\Diamonddot \varphi \} \cup \closure(\varphi)$; and
	\item $\closure(\boxdot \varphi) = \{ \boxdot \varphi \} \cup \closure(\varphi)$.
\end{itemize}
\end{definition}

Having introduced the closure of an rLTL$(\boxdot,\Diamonddot)$ formula $\varphi$, we can now define the $\varphi$-expansion.

\begin{definition}[$\varphi$-expansion]\label{def:phi_expansion}
Let $\varphi$ be an rLTL$(\boxdot,\Diamonddot)$ formula. The \emph{$\varphi$-expansion} of an infinite word $\sigma \in \Sigma^\omega$ is a mapping $\eta \colon\closure(\varphi) \times \mathbb N \to \mathbb B_4$ satisfying $\eta(\psi, i) = V(\sigma_\suf{i}, \psi)$ for all $\psi \in \closure(\varphi)$ and $i \in \mathbb N$. 
\end{definition}

Note that the $\varphi$-expansion is unique for a given word and subsumes the valuation of $\varphi$ in the sense that $V(\varphi, \sigma) = \eta(\varphi, 0)$. Although the definition of the $\varphi$-expansion is not constructive, we can introduce constraints that completely characterize the $\varphi$-expansion of a given word.
The pivotal idea is to impose constraints for local consistency (e.g., $\eta(\lnot \psi, i)$ for $\psi \in \cl(\varphi)$ at some position $i \in \mathbb N$ has to be $\overline{\eta(\psi, i)}$) and to exploit the expansion rules of Proposition~\ref{prop:expansion_rules} to relate $\eta(\psi, i)$ and $\eta(\psi, i+1)$. As in the case of valuations $V$, we use the shorthand-notation $\eta_j(\psi, i)$ instead of the more verbose expression $\pi_j(\eta(\psi, i))$.

In the following, let $\psi \in \closure(\varphi)$ and $i \in \mathbb N$. The first type of constraints (\emph{local constraints}) are as follows:

\begin{enumerate}[label=A\arabic*), ref=A\arabic*]
	\item \label{cond:local1} If $\psi = p$, then $\eta(\psi, i) = \begin{cases} 0000 & \text{if $p \notin \sigma(i)$; and} \\ 1111 & \text{if $p \in \sigma(i)$.} \end{cases}$
	\item \label{cond:local2} If $\psi = \neg \psi_1$, then $\eta(\psi, i) = \overline{\eta(\psi_1, i)}$.
	\item \label{cond:local3} If $\psi = \psi_1 \wedge \psi_2$, then $\eta(\psi, i) = \min{\{\eta(\psi_1, i), \eta(\psi_2, i)\}}$.
	\item \label{cond:local4} If $\psi = \psi_1 \vee \psi_2$, then $\eta(\psi, i) = \max{\{\eta(\psi_1, i), \eta(\psi_2, i)\}}$.
	\item \label{cond:local5} If $\psi = \psi_1 \Rightarrow \psi_2$, then $\eta(\psi, i) = \eta(\psi_1, i) \rightarrow \eta(\psi_2, i)$.
	%
	%
	\item \label{cond:local7} If $\psi = \Diamonddot \psi_1$, then $\eta(\psi, i) = (b_1, b_2, b_3, b_4)$ where $b_j = \max{ \bigl \{ \eta_j(\psi_1, i), \eta_j(\psi, i+1) \bigr \} }$ for $j \in \{1, \ldots, 4\}$.
	\item \label{cond:local8} If $\psi = \boxdot \psi_1$, then $\eta(\psi, i) = (b_1, b_2, b_3, b4)$ where
	\begin{enumerate}
		\item \label{cond:local8a} $b_1 = \min{ \bigl \{ \eta_1(\psi_1, i), \eta_1(\psi, i+1) \bigr \} }$;
		\item \label{cond:local8b} $b_2 = \max{ \bigl \{ b_1, \eta_2(\psi, i+1) \bigr \} }$;
		\item \label{cond:local8c}  $b_3 = \min{ \bigl \{ b_4, \eta_3(\psi, i+1) \bigr \} }$; and
		\item \label{cond:local8d} $b_4 = \max{ \bigl \{ \eta_4(\psi_1, i), \eta_4(\psi, i+1) \bigr \} }$.
	\end{enumerate}
\end{enumerate}
To ensure satisfaction of the subformulas involving the temporal operators $\Diamonddot$ and $\boxdot$, we add the following further constraints (\emph{non-local constraints}). These constraints are derived from the expansion rules, and we later translate them into Büchi conditions.
\begin{enumerate}[label=B\arabic*), ref=B\arabic*]
	\item \label{cond:nonlocal1} For each $\Diamonddot \psi \in \closure(\varphi)$ and $j \in \{1, \ldots, 4\}$, there exists no $k \in \mathbb N$ such that for every $\ell \geq k$ both $\eta_j(\Diamonddot \psi, \ell) = 1$ and $\eta_j(\psi, \ell) = 0$.
	\item \label{cond:nonlocal2} For each $\boxdot \psi \in \closure(\varphi)$,
	\begin{enumerate}
		\item \label{cond:nonlocal2a} there exists no $k \in \mathbb N$ such that for every $\ell \geq k$ both $\eta_1(\boxdot \psi, \ell) = 0$ and $\eta_1(\psi, \ell) = 1$;
		\item \label{cond:nonlocal2b} there exists no $k \in \mathbb N$ such that for every $\ell \geq k$ both $\eta_2(\boxdot \psi, \ell) = 1$ and $\eta_1(\boxdot \psi, \ell) = 0$;
		\item \label{cond:nonlocal2c} there exists no $k \in \mathbb N$ such that for every $\ell \geq k$ both $\eta_3(\boxdot \psi, \ell) = 0$ and $\eta_4(\boxdot \psi, \ell) = 1$; and
		\item \label{cond:nonlocal2d} there exists no $k \in \mathbb N$ such that for every $\ell \geq k$ both $\eta_4(\boxdot \psi, \ell) = 1$ and $\eta_4(\psi, \ell) = 0$.
	\end{enumerate}
\end{enumerate}

Let us now show that these constraints indeed completely characterize the $\varphi$-expansion of a given word. 

\begin{lemma}\label{lem:compatibility}
Given an rLTL$(\boxdot,\Diamonddot)$ formula $\varphi$ over the atomic propositions $\mathcal P$ and an infinite word $\sigma \in \Sigma^\omega$ where $\Sigma = 2^\mathcal P$, let $\eta \colon \closure(\varphi) \times \mathbb N \to \mathbb B_4$ be a mapping that satisfies the compatibility constraints \ref{cond:local1} to \ref{cond:nonlocal2}. Then, $\eta$ is uniquely determined, and it is, in fact, the $\varphi$-expansion of $\sigma$.
\end{lemma}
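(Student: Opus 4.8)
The plan is to prove the lemma by exhibiting the $\varphi$-expansion as the unique fixed point of the constraint system, splitting the argument into two halves: (i) the $\varphi$-expansion of $\sigma$ genuinely satisfies all of \ref{cond:local1}--\ref{cond:nonlocal2}, and (ii) any $\eta$ satisfying these constraints must coincide with the $\varphi$-expansion. Part (i) is the routine half: by Definition~\ref{def:phi_expansion} the $\varphi$-expansion sets $\eta(\psi,i)=V(\sigma_\suf{i},\psi)$, so each local constraint is just the defining clause of the semantics $V$ (for \ref{cond:local8}, one reads off the components of \eqref{BoxSemantics} and uses the expansion rules \eqref{Eq:Exp1}--\eqref{Eq:Exp4} from Proposition~\ref{prop:expansion_rules}, which is exactly where the "local" clause \ref{cond:local8b}, \ref{cond:local8c} recurring on $b_1$, $b_4$ comes from), while the non-local constraints \ref{cond:nonlocal1}, \ref{cond:nonlocal2} follow from the $\sup$/$\inf$ structure of the semantics --- e.g.\ \ref{cond:nonlocal2a} holds because if $\eta_1(\boxdot\psi,\ell)=0$ for all $\ell\geq k$ and $\eta_1(\psi,\ell)=1$ for all $\ell \geq k$, then $V_1(\sigma_\suf{k},\boxdot\psi)=\inf_{i\geq k}V_1(\sigma_\suf{i},\psi)=1$, a contradiction.

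Part (ii) is the heart of the matter. I would proceed by structural induction on $\psi \in \closure(\varphi)$, proving the statement: for every $i \in \N$, $\eta(\psi,i) = V(\sigma_\suf{i},\psi)$. The base case $\psi = p$ and the cases $\psi \in \{\neg\psi_1, \psi_1\wedge\psi_2, \psi_1\vee\psi_2, \psi_1\Rightarrow\psi_2\}$ are immediate from the corresponding local constraint \ref{cond:local2}--\ref{cond:local5} and the inductive hypothesis applied to the (shorter) subformulas, since the local constraint at position $i$ expresses $\eta(\psi,i)$ purely in terms of $\eta(\psi_1,i)$ and $\eta(\psi_2,i)$, which by induction equal $V(\sigma_\suf{i},\psi_1)$ and $V(\sigma_\suf{i},\psi_2)$. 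The genuinely delicate cases are $\psi = \Diamonddot\psi_1$ and $\psi = \boxdot\psi_1$, where the local constraints only relate $\eta(\psi,i)$ to $\eta(\psi,i+1)$ and $\eta(\psi_1,i)$; induction gives us $\eta(\psi_1,i)=V(\sigma_\suf{i},\psi_1)$, but we must still pin down $\eta(\psi,\cdot)$ along the whole suffix sequence.

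For $\psi = \Diamonddot\psi_1$, fix $j\in\{1,\dots,4\}$ and abbreviate $x_i = \eta_j(\psi,i)$, $a_i = \eta_j(\psi_1,i) = V_j(\sigma_\suf{i},\psi_1)$, and $y_i = V_j(\sigma_\suf{i},\Diamonddot\psi_1) = \sup_{k\geq i} a_k$. Constraint \ref{cond:local7} gives $x_i = \max\{a_i, x_{i+1}\}$, and I would argue: unfolding, $x_i \geq \sup_{i\leq k\leq n} a_k$ for all $n$, hence $x_i \geq y_i$; conversely, if $x_i > y_i$ (which, in $\B_4$, means $x_i = 1$ and $a_k = 0$ for all $k\geq i$), then $x_\ell = 1$ for all $\ell \geq i$ while $a_\ell = 0$, contradicting the non-local constraint \ref{cond:nonlocal1}. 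So $x_i = y_i$, as desired. The case $\psi = \boxdot\psi_1$ is analogous but component-by-component and slightly more involved because of the intertwining of components: one first fixes $b_1 = V_1(\sigma_\suf{\cdot},\boxdot\psi_1)$ via \ref{cond:local8a} and \ref{cond:nonlocal2a} exactly as in the $\Diamonddot$ case (with $\inf$/$\min$ replacing $\sup$/$\max$); then fixes $b_4 = V_4(\sigma_\suf{\cdot},\boxdot\psi_1)$ via \ref{cond:local8d} and \ref{cond:nonlocal2d} the same way; then uses \ref{cond:local8b} together with \ref{cond:nonlocal2b} to show the second component equals $\sup_{\ell}\inf_{i\geq\ell}V_2(\sigma_\suf{i},\psi_1)$ --- this is exactly the content of Equation~\eqref{Eq:Exp2} of Proposition~\ref{prop:expansion_rules}, which already did the subtle work of showing that recurring on $b_1$ rather than on $V_2$ yields the correct value --- and symmetrically \ref{cond:local8c} with \ref{cond:nonlocal2c} for the third component. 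I expect the main obstacle to be precisely this last point: making rigorous that the "unfold the recurrence to the limit, then rule out the one bad overshoot via the non-local constraint" argument correctly recovers the nested $\sup\inf$ (resp.\ $\inf\sup$) value and not something weaker, and that the cross-component dependence in \ref{cond:local8} does not create a circularity --- here one leans on the fact, proved in Proposition~\ref{prop:expansion_rules}, that $b_1 \leq b_2$ and $b_3 \leq b_4$ hold throughout $\B_4$ so the components can be determined in the order $b_1, b_4, b_2, b_3$ without circular reference. Uniqueness of $\eta$ then follows since the induction shows every constraint-satisfying $\eta$ equals the single map $\psi,i \mapsto V(\sigma_\suf{i},\psi)$.
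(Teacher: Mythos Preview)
Your proposal is correct and follows essentially the same approach as the paper: structural induction over $\closure(\varphi)$, with the temporal cases handled by unfolding the local recurrence and using the non-local constraint to rule out the spurious fixed point; you also correctly identify that components $b_1,b_4$ must be settled before $b_2,b_3$ in the $\boxdot$ case. The only difference is that you explicitly include Part~(i) (that the $\varphi$-expansion satisfies the constraints), which the paper leaves implicit, but this is a trivial verification and not a substantive divergence.
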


\begin{proof}
To prove Lemma~\ref{lem:compatibility}, we need to establish that $V(\sigma_\suf{i}, \psi) = \eta(\psi, i)$ holds for all $\psi \in \closure(\varphi)$ and $i \in \mathbb N$. The proof proceeds by structural induction over the subformulas of $\varphi$.

\begin{description}[font={\normalfont\itshape}]
	\item[Base case] In the case of atomic propositions, the claim holds by definition of $V$.
	\item[Induction step] In the case of the operators $\lnot$, $\lor$, $\land$, and $\Rightarrow$, the claim follows immediately from applying the induction hypothesis and by definition of $V$.
	
	In the case of $\psi = \Diamonddot \psi_1$, a straightforward induction that applies
	\begin{itemize}
		\item Condition~\ref{cond:local7};
		\item the expansion rule for $\Diamonddot$ (see Proposition~\ref{prop:expansion_rules}, Equation~\eqref{Eq:Exp4}); and
		\item the induction hypothesis for $\psi_1$ (i.e., $V(\sigma_\suf{i}, \psi_1) = \eta(\psi_1, i)$ for all $i \in \mathbb N$)
	\end{itemize}
	shows that the following is true for each $j \in \{1, \ldots, 4\}$: if $\eta_j(\psi_1, k) = 1$ for a $k \in \mathbb N$, then  $\eta_j(\psi, \ell) = 1$ and, hence, $V_j(\sigma_\suf{\ell}, \psi) = \eta_j(\psi, \ell)$ for all $\ell \leq k$. Therefore, if infinitely many $k$ with $\eta_j(\psi_1, k) = 1$ exist, then $V_j(\sigma_\suf{i}, \psi) = \eta_j(\psi, i)$ for all $i \in \mathbb N$. If this is not the case, then there exists a $k \in \mathbb N$ such that $\eta_j(\psi_1, \ell) = 0$ for all $\ell \geq k$. Then, Condition~\ref{cond:nonlocal1} asserts for all $\ell \geq k$ that $\eta_j(\psi, \ell) = 0$ and, hence, $V_j(\sigma_\suf{\ell}, \psi) = \eta_j(\psi, \ell)$ is satisfied by the semantics of $\Diamonddot$ and the induction hypothesis for $\psi_1$; this, in turn, implies $V_j(\sigma_\suf{i}, \psi) = \eta_j(\psi, i)$ for all $i \in \mathbb N$. These arguments are true for all $j \in \{1, \ldots, 4 \}$ and, therefore, $V(\sigma_\suf{i}, \psi) = \eta(\psi, i)$ holds for all $i \in \mathbb N$.

	The case $\psi = \boxdot \psi_1$ can be proven using similar arguments as in the case of the $\Diamonddot$-operator, but the semantics of $\boxdot$ requires to split the proof into four parts and prove $V_j(\sigma_\suf{i}, \psi) = \eta_j(\psi, i)$ individually for each $j \in \{1, \ldots, 4\}$. So as not to clutter this proof too much, we provide a detailed proof for $j = 1$ and skip the remaining. However, it is important to note that the claim needs to be proven first for $j=1$ and $j=4$ since the proofs for $j=2$ and $j=3$ rely thereon (the expansion rules recur on $V_1(\sigma_\suf{i}, \psi)$ and $V_4(\sigma_\suf{i}, \psi)$, respectively).
  
	To prove $V_1(\sigma_\suf{i}, \psi) = \eta_1(\psi, i)$ for all $i \in \mathbb N$, we first observe that $\eta_1(\psi_1, k)= 0$ for a $k \in \mathbb N$ implies $V_1(\sigma_\suf{\ell}, \psi) = \eta_1(\psi, \ell)$ for all $\ell \leq k$; analogous to the case of the operator $\Diamonddot$, an induction using Condition~\ref{cond:local8a}, the expansion rule for $\boxdot$ (see Proposition~\ref{prop:expansion_rules}, Formula~\eqref{Eq:Exp1}), and the induction hypothesis for $\psi_1$ establishes this. Therefore, if infinitely many $k$ with $\eta_1(\psi_1, k) = 0$ exist, then $V_1(\sigma_\suf{i}, \psi) = \eta_1(\psi, i)$ for all $i \in \mathbb N$. If this is not the case, then there exists a $k \in \mathbb N$ such that $\eta_1(\psi_1, \ell) = 1$ for all $\ell \geq k$. Then, Condition~\ref{cond:nonlocal2a} asserts for all $\ell \geq k$ that $\eta_1(\psi, \ell)) = 1$ and, hence, $V_1(\sigma_\suf{\ell}, \psi) = \eta_1(\psi, \ell)$ is satisfied by the semantics of $\boxdot$ and the induction hypothesis of $\psi_1$. This implies $V_1(\sigma_\suf{i}, \psi) = \eta_1(\psi, i)$ for all $i \in \mathbb N$.

	As mentioned above, the case $j = 4$ and the subsequent cases $j=2$ and $j=3$ are analogous. \qedhere
\end{description}
\end{proof}

We are now ready to define a generalized Büchi automaton $\mathcal A_\varphi$. The states of $\mathcal A_\varphi$ are mappings $\mu \colon \closure(\varphi) \to \mathbb B_4$, which encode the $\varphi$-expansion of $\sigma$ in the sense that the sequence of states $\mu_0, \mu_1, \ldots$ constituting an accepting run on $\sigma$ satisfies $\mu_i(\psi) = \eta_i(\psi)$ for all $i \in \mathbb N$ and $\psi \in \closure(\varphi)$. Clearly, the only states (i.e., mappings $\mu$) of interest are those consistent with the local compatibility constraints~\ref{cond:local1} to \ref{cond:local5}.\kern-.06em\footnote{By this we mean that the conditions are satisfied if we substitute $\mu$ for $\eta$.} Thus, in order to ease the following definition, we denote the set of such mappings by $S$. Note that the cardinality of $S$ is bounded by $|\mathbb B_4|^{|\closure(\varphi)|} = 5^{|\closure(\varphi)|}$.

When reading an input-word, the automaton $\mathcal A_\varphi$ uses its transitions to verify that its guess satisfies the local constraints and uses its acceptance condition to verify the non-local constraints. The latter is achieved by adding a Büchi condition for each of the Conditions~\ref{cond:nonlocal1} to \ref{cond:nonlocal2d}, which translate the respective condition in a straightforward manner. Hence, the number of acceptance conditions is exactly four times the number of subformulas of type $\Diamonddot$ and $\boxdot$.

Finally, it is important to note that we define the automaton without an initial state. Instead, we introduce a state $q_b$ for each $b \in \mathbb B_4$ with the property that $\mathcal A_\varphi$ accepts a word $\sigma \in \Sigma^\omega$ when starting in the state $q_b$ if and only if $V(\sigma, \varphi) = b$. In other words, an accepting run starting in $q_b$ signals that $\varphi$ evaluates on $\sigma$ to $b$.

\begin{definition}[Automaton $\mathcal A_\varphi$]\label{def:buechi_automaton}
Let $\varphi$ be an rLTL$(\boxdot,\Diamonddot)$ formula over the atomic propositions $\mathcal P$. Additionally, let $\Sigma = 2^\mathcal P$, $a \in \Sigma$, and $S$ be the set of functions $\mu \colon \closure(\varphi) \to \mathbb B_4$ that satisfy Conditions~\ref{cond:local1} to \ref{cond:local5}. We define the \emph{generalized Büchi automaton $\mathcal A_\varphi = (Q, \Sigma, \Delta, \mathcal F)$} as follows:

\begin{itemize} 
	\item $Q = \{ q_b \mid b \in \mathbb B_4\} \cup S$;
	\item the transition relation is defined by:
	\begin{itemize}
		\item $(q_b, a, \mu) \in \Delta$ if and only if $\mu(\varphi) = b$ and $\mu(p) = \begin{cases} 1111 & \text{if $p \in a \cap \cl(\varphi)$; and} \\ 0000 & \text{if $p \in \cl(\varphi) \setminus a$;} \end{cases}$
		\item $(\mu, a, \mu') \in \Delta$ if and only if the pair $(\mu$, $\mu')$ satisfies Conditions~\ref{cond:local7} and \ref{cond:local8} as well as\\$\mu'(p) = \begin{cases} 1111 & \text{if $p \in a \cap \cl(\varphi)$; and} \\ 0000 & \text{if $p \in \cl(\varphi) \setminus a$;} \end{cases}$
	\end{itemize}
	\item $\mathcal F$ is the union of the following sets:
	\begin{itemize}
		\item for each $\Diamonddot \psi \in \closure(\varphi)$, we introduce for each $j \in \{1, \ldots, 4\}$ the set
		\[ F_{\Diamonddot \psi, j} = \{ \mu \in S \mid \pi_j(\mu(\Diamonddot \psi)) = 0 \text{ or } \pi_j(\mu(\psi)) = 1 \}; \]
		\item for each $\boxdot \psi \in \closure(\varphi)$, we introduce the sets
		\begin{align*}
			F_{\boxdot \psi, 1} = {} & \{ \mu \in S \mid \pi_1(\mu(\boxdot \psi)) = 1 \text{ or } \pi_1(\mu(\psi)) = 0 \}; \\
			F_{\boxdot \psi, 2} = {} & \{ \mu \in S \mid \pi_2(\mu(\boxdot \psi)) = 0 \text{ or } \pi_1(\mu(\boxdot \psi)) = 1 \}; \\
			F_{\boxdot \psi, 3} = {} & \{ \mu \in S \mid \pi_3(\mu(\boxdot \psi)) = 1 \text{ or } \pi_4(\mu(\boxdot \psi)) = 0 \}; \text{and} \\
			F_{\boxdot \psi, 4} = {} & \{ \mu \in S \mid \pi_4(\mu(\boxdot \psi)) = 0 \text{ or } \pi_4(\mu(\psi)) = 1 \}.
		\end{align*}
	\end{itemize}
\end{itemize}
\end{definition}

Definition~\ref{def:buechi_automaton} ensures that $\mathcal A_\varphi$ accepts $\sigma \in \Sigma^\omega$ if and only if there exists a run $q_b, \mu_0, \mu_1, \ldots$ that visits each $F \in \mathcal F$ infinitely often. As an example, suppose that a run visits the set $F_{\Diamonddot \psi, 1}$ for $\Diamonddot \psi \in \closure(\varphi)$ infinitely often (i.e., $\pi_1(\mu_i(\Diamonddot \psi)) = 0$ or $\pi_1(\mu_i(\psi)) = 1$ holds for infinitely many  $i \in \mathbb N$). This means that it never happens that from some $k \in \mathbb N$ onward both $\pi_1(\mu_k(\Diamonddot \psi)) = 1$ and $\pi_1(\mu_k(\psi)) = 0$. Hence, Condition~\ref{cond:nonlocal1} is fulfilled. Similarly, the remaining sets $F \in \mathcal F$ make sure that Conditions~\ref{cond:nonlocal1} and \ref{cond:nonlocal2} are indeed satisfied. Moreover, the definition of $\Delta$ ensures that Conditions~\ref{cond:local1} to \ref{cond:local8d} are satisfied along an accepting run of $\mathcal A_\varphi$ on $ \sigma$ and, therefore, this run in fact forms the $\varphi$-expansion of $\sigma$ (and is unique). Finally, by using different initial states, we make sure that $\mathcal A_\varphi$ accepts $\sigma$ starting from $q_b$ if only if $b = V(\sigma, \varphi)$ (since all outgoing transitions lead to states $\mu$ with $\mu(\varphi) = b$). As a consequence, we obtain the following theorem.

\begin{theorem}\label{thm:buechi_automaton_correct}
Let $\varphi$ be an rLTL$(\boxdot,\Diamonddot)$ formula over the set $\mathcal P$ of atomic propositions, $\Sigma = 2^\mathcal P$, and $b \in \mathbb B_4$. Then, $\mathcal A_\varphi$ accepts $\sigma \in \Sigma^\omega$ when starting in state $q_b$ if and only if $V(\sigma, \varphi) = b$.
\end{theorem}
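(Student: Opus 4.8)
The plan is to prove both directions of the biconditional by a careful analysis of runs of $\mathcal A_\varphi$ from state $q_b$, leaning on Lemma~\ref{lem:compatibility} to do the heavy lifting. The key observation is that the construction of $\mathcal A_\varphi$ is designed precisely so that the sequence of states along any run encodes a candidate $\varphi$-expansion, and the acceptance conditions rule out exactly the violations of the non-local constraints. So the main work is to make this correspondence between runs and expansions explicit and then invoke Lemma~\ref{lem:compatibility}.

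First I would establish the correspondence between runs and mappings $\eta$. Given $\sigma \in \Sigma^\omega$, suppose $\rho = q_b \mu_0 \mu_1 \ldots$ is a run of $\mathcal A_\varphi$ on $\sigma$; define $\eta \colon \closure(\varphi) \times \N \to \B_4$ by $\eta(\psi, i) = \mu_i(\psi)$. I would check that each $\mu_i \in S$ (so Conditions~\ref{cond:local1} to \ref{cond:local5} hold, using that the transition relation forces $\mu_i(p)$ to agree with $\sigma(i)$), and that the transition relation clauses $(q_b, \sigma(0), \mu_0) \in \Delta$ and $(\mu_i, \sigma(i+1), \mu_{i+1}) \in \Delta$ force Conditions~\ref{cond:local7} and \ref{cond:local8} relating position $i$ to position $i+1$; the first transition also forces $\mu_0(\varphi) = b$. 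Conversely, any mapping $\eta$ satisfying the local constraints~\ref{cond:local1} to \ref{cond:local8d} gives a sequence of states $\mu_i = \eta(\cdot, i)$ that forms a legal run of $\mathcal A_\varphi$ on $\sigma$, starting from $q_{\eta(\varphi,0)}$.

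Next I would handle the acceptance conditions. For each set $F \in \mathcal F$ — say $F_{\Diamonddot\psi, j}$ — the run $\rho$ visits $F$ infinitely often if and only if there is no $k$ such that for all $\ell \geq k$ simultaneously $\pi_j(\mu_\ell(\Diamonddot\psi)) = 1$ and $\pi_j(\mu_\ell(\psi)) = 0$; this is exactly Condition~\ref{cond:nonlocal1} for the associated $\eta$. The same matching holds for $F_{\boxdot\psi,1},\ldots,F_{\boxdot\psi,4}$ against Conditions~\ref{cond:nonlocal2a} to \ref{cond:nonlocal2d}. Hence $\rho$ is accepting if and only if $\eta$ additionally satisfies all non-local constraints~\ref{cond:nonlocal1} and \ref{cond:nonlocal2}. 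I would then assemble the argument: if $\mathcal A_\varphi$ accepts $\sigma$ from $q_b$, take an accepting run, extract $\eta$, note it satisfies \ref{cond:local1} to \ref{cond:nonlocal2}, so by Lemma~\ref{lem:compatibility} it \emph{is} the $\varphi$-expansion of $\sigma$, whence $V(\sigma, \varphi) = \eta(\varphi, 0) = \mu_0(\varphi) = b$. For the converse, given $V(\sigma, \varphi) = b$, take $\eta$ to be the $\varphi$-expansion of $\sigma$; it satisfies all compatibility constraints (the local ones by the semantics and the expansion rules of Proposition~\ref{prop:expansion_rules}, the non-local ones because the expansion rules force the relevant "eventually" patterns not to occur), so the induced run $q_b \mu_0 \mu_1 \ldots$ with $\mu_i = \eta(\cdot,i)$ is a legal run (starting from $q_b$ since $\eta(\varphi,0) = b$) and is accepting since the non-local constraints hold.

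The main obstacle I anticipate is bookkeeping rather than conceptual: one must verify in both directions that the local transition clauses of $\Delta$ are \emph{equivalent} to Conditions~\ref{cond:local1} to \ref{cond:local8d} holding along the state sequence — in particular being careful that Condition~\ref{cond:local8} couples $b_2$ to $b_1$ and $b_3$ to $b_4$ at the \emph{same} position (not across positions), and that the $\Diamonddot$/$\boxdot$ local constraints are "one-step" constraints relating $i$ and $i+1$, so that they are genuinely enforced by single transitions. A second point of care is that the definition of $\mathcal A_\varphi$ only directly bakes Conditions~\ref{cond:local1} to \ref{cond:local5} into membership in $S$, while~\ref{cond:local7} and~\ref{cond:local8} are enforced by $\Delta$; one must confirm that every state appearing on an accepting run does lie in $S$ and that the $\varphi$-expansion's states lie in $S$. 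Once these equivalences are pinned down, Lemma~\ref{lem:compatibility} closes the argument immediately, so the theorem follows.
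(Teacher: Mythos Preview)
Your proposal is correct and follows essentially the same approach as the paper: the paper's justification (given in the paragraph immediately preceding the theorem statement rather than in a separate proof environment) argues, just as you do, that the transition relation enforces the local constraints~\ref{cond:local1}--\ref{cond:local8d}, the acceptance sets enforce the non-local constraints~\ref{cond:nonlocal1}--\ref{cond:nonlocal2}, and then Lemma~\ref{lem:compatibility} identifies the state sequence of an accepting run with the $\varphi$-expansion, with the initial transition forcing $\mu_0(\varphi)=b$. Your writeup is more explicit about the converse direction and the bookkeeping, but the underlying argument is the same.
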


For notational convenience, we denote the generalized Büchi automaton $\mathcal A_\varphi$ with initial state $q_b$ by $\mathcal A_\varphi^b$. We finish the discussion with a remark about the size of the automaton $\mathcal A_\varphi$.

\begin{remark}\label{rem:size_buechi_automaton}
The automaton $\mathcal A_\varphi$ has $5^{|\cl(\varphi)|} + 4$ states and at most $4 \cdot |\cl(\varphi)|$ acceptance sets.
\end{remark}

\subsection{Model Checking}
\label{sec:model_checking}
Broadly speaking, the model checking problem asks whether the model of a given system exhibits a specified behavior (which is described as an rLTL$(\boxdot,\Diamonddot)$ formula in our case). Usually, a system is modeled as a Kripke structure, which is, for the sake of model checking, translated into a Büchi automaton whose language corresponds to the unraveling of the Kripke structure. For reasons of simplicity, we consider a system---more precisely, model thereof---to be given directly as a (generalized) Büchi automaton. This leads to the following formulation of the model checking problem.

\begin{problem}[Model checking]\label{prob:model_checking_exact}
Let $\varphi$ be an rLTL$(\boxdot,\Diamonddot)$ formula over the set $\mathcal P$ of atomic propositions, let $\mathcal A$ be a generalized Büchi automaton over the alphabet $2^\mathcal P$, and let $b \in \mathbb B_4$. Does $V(\sigma, \varphi) = b$ hold for all $\sigma \in L(\mathcal A)$?
\end{problem}

Our translation of rLTL$(\boxdot,\Diamonddot)$ formulas into a generalized Büchi automaton provides a straightforward means to answer the model checking problem: one simply constructs $\mathcal A_\varphi$ and checks $L(\mathcal A) \subseteq L(\mathcal A_\varphi^b)$. 
However, the naive attempt to check this inclusion (i.e., checking whether $L(\mathcal A) \cap ( \Sigma^\omega \setminus L(\mathcal A_\varphi^b)) = \emptyset$ holds) would require to complement $\mathcal A_\varphi^b$, which we clearly want to avoid due to the inevitable exponential blowup; moreover, note that the equality $\Sigma^\omega \setminus L(\mathcal A_\varphi^b) = L(\mathcal A_{\neg \varphi}^b)$ does not hold in general. Instead, we exploit the property that one obtains a generalized Büchi automaton accepting exactly the words with value $b’ \in \mathbb B_4$ from $\mathcal A_\varphi$ by designating $q_{b’}$ as the initial state. This fact allows us to write the complement of $L(\mathcal A_\varphi^b)$ as the union
\[ \Sigma^\omega \setminus L(\mathcal A_\varphi^b) = \bigcup_{b’ \in \mathbb B_4 \setminus \{ b \}} L(\mathcal A_\varphi^{b’}). \]

In addition, we can easily modify $\mathcal A_\varphi$ to accept this union:
\begin{enumerate}
	\item we add a new state, say $q_0$, and designate it as the initial state; and
	\item we add the $\varepsilon$-transitions $(q_0, \varepsilon, q_{b’})$ for all $b’ \in \mathbb B_4 \setminus \{ b \}$, which can subsequently be removed in the same manner as for finite automata with $\varepsilon$-transitions (see, e.g., Hopcroft and Ullman~\cite{Hopcroft}).
\end{enumerate}

In summary, we obtain the following result.

\begin{theorem}\label{thm:model_checking_exact}
One can decide the model checking problem (Problem~\ref{prob:model_checking_exact}) for $\mathcal A = (Q, \Sigma, q_0, \Delta, \mathcal F)$ and $\varphi$ in time
\[ \mathcal O \bigl( (|\mathcal F| + |\cl(\varphi)|) \cdot |Q| \cdot 5^{|\cl(\varphi)|} \bigr). \]
\end{theorem}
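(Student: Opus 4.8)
The plan is to reduce the model checking problem (Problem~\ref{prob:model_checking_exact}) to a non-emptiness test for a product of generalized Büchi automata, carefully avoiding any complementation step. First I would invoke Theorem~\ref{thm:buechi_automaton_correct}: for each $b' \in \mathbb{B}_4$, the automaton $\mathcal{A}_\varphi^{b'}$ (that is, $\mathcal{A}_\varphi$ with initial state $q_{b'}$) accepts exactly those $\sigma \in \Sigma^\omega$ with $V(\sigma, \varphi) = b'$. Since $V(\sigma, \varphi)$ is a totally defined element of $\mathbb{B}_4$ for every word, each $\sigma$ is accepted from exactly one of the states $q_{b'}$, so
\[ \Sigma^\omega \setminus L(\mathcal{A}_\varphi^b) = \bigcup_{b' \in \mathbb{B}_4 \setminus \{b\}} L(\mathcal{A}_\varphi^{b'}). \]
Consequently, ``$V(\sigma, \varphi) = b$ for all $\sigma \in L(\mathcal{A})$'' is equivalent to $L(\mathcal{A}) \cap \bigcup_{b' \neq b} L(\mathcal{A}_\varphi^{b'}) = \emptyset$.

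Second, I would assemble a single generalized Büchi automaton $\mathcal{B}$ recognising $\bigcup_{b' \neq b} L(\mathcal{A}_\varphi^{b'})$. Because all the $\mathcal{A}_\varphi^{b'}$ share the same transition relation and the same acceptance family $\mathcal{F}$ and differ only in the choice of initial state, it suffices to add one fresh initial state $q_{\mathrm{init}}$ with $\varepsilon$-transitions $(q_{\mathrm{init}}, \varepsilon, q_{b'})$ for every $b' \in \mathbb{B}_4 \setminus \{b\}$ and then eliminate these $\varepsilon$-transitions in the standard way (Hopcroft and Ullman~\cite{Hopcroft}). By Remark~\ref{rem:size_buechi_automaton}, $\mathcal{B}$ has $\mathcal{O}(5^{|\cl(\varphi)|})$ states and at most $4 |\cl(\varphi)|$ acceptance sets.

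Third, I would form the usual product $\mathcal{A} \times \mathcal{B}$ of the two generalized Büchi automata, whose language is $L(\mathcal{A}) \cap L(\mathcal{B})$. Its state set is (a subset of) $Q \times Q_{\mathcal{B}}$, hence of size $\mathcal{O}(|Q| \cdot 5^{|\cl(\varphi)|})$, and its acceptance family is $\{ F \times Q_{\mathcal{B}} \mid F \in \mathcal{F} \} \cup \{ Q \times F' \mid F' \in \mathcal{F}_{\mathcal{B}} \}$, hence of cardinality $|\mathcal{F}| + 4|\cl(\varphi)| = \mathcal{O}(|\mathcal{F}| + |\cl(\varphi)|)$. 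Finally I would check non-emptiness of this product: a generalized Büchi automaton with $n$ states and $m$ acceptance sets has a non-empty language if and only if it has a reachable non-trivial strongly connected component meeting every acceptance set, which can be decided in time $\mathcal{O}(m \cdot n)$ (for instance by composing with a cyclic counter over the $m$ acceptance sets to obtain an ordinary Büchi automaton and running a nested depth-first search, or by a direct SCC computation). Substituting $n = \mathcal{O}(|Q| \cdot 5^{|\cl(\varphi)|})$ and $m = \mathcal{O}(|\mathcal{F}| + |\cl(\varphi)|)$ yields the claimed bound $\mathcal{O}\bigl( (|\mathcal{F}| + |\cl(\varphi)|) \cdot |Q| \cdot 5^{|\cl(\varphi)|} \bigr)$.

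The main obstacle — and the reason the union decomposition is the heart of the argument — is that the naive route of complementing $\mathcal{A}_\varphi^b$ incurs an unavoidable exponential blow-up, and moreover $\Sigma^\omega \setminus L(\mathcal{A}_\varphi^b)$ is \emph{not} the language of $\mathcal{A}_{\neg\varphi}^b$. Exploiting that $\mathcal{A}_\varphi$ assigns a unique value in $\mathbb{B}_4$ to every word allows the complement to be written as a union over the constantly many remaining initial states, keeping the whole construction within the same asymptotic size as $\mathcal{A}_\varphi$ itself. A secondary point requiring care is the bookkeeping for products and non-emptiness of \emph{generalized} Büchi automata, so that the number of acceptance conditions enters the running time only additively and linearly, exactly as in the statement.
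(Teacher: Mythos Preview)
Your proposal is correct and follows essentially the same approach as the paper's own proof: reduce to $L(\mathcal A)\cap\bigcup_{b'\neq b}L(\mathcal A_\varphi^{b'})=\emptyset$ via Theorem~\ref{thm:buechi_automaton_correct}, build the union automaton by adding a fresh initial state with $\varepsilon$-transitions to the $q_{b'}$ (and eliminating them), take the product with $\mathcal A$, and run a linear-time generalized B\"uchi emptiness test. The size accounting (states $\mathcal O(|Q|\cdot 5^{|\cl(\varphi)|})$, acceptance sets $\mathcal O(|\mathcal F|+|\cl(\varphi)|)$) matches the paper's as well.
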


\begin{proof}[Proof of Theorem~\ref{thm:model_checking_exact}]
Let $\varphi$ be an rLTL$(\boxdot,\Diamonddot)$ formula over the atomic propositions $\mathcal P$, $\mathcal A = (Q, \Sigma, q_0, \Delta, \mathcal F)$ a generalized Büchi automaton over the alphabet $2^\mathcal P$, and $b \in \mathbb B_4$.

First, it is not hard to verify that
\begin{align*}
V(\sigma, \varphi) = b \text{ for all $\sigma \in L(\mathcal A)$} & ~\text{ if and only if }~ L(\mathcal A) \subseteq L(\mathcal A_\varphi^b) \\
& ~\text{ if and only if }~ L(\mathcal A) \cap \bigl( \Sigma^\omega \setminus L(\mathcal A_\varphi^b) \bigr) = \emptyset \\
& ~\text{ if and only if }~ L(\mathcal A) \cap \bigcup_{b’ \in \mathbb B_4 \setminus \{ b \}} L(\mathcal A_\varphi^{b’}) = \emptyset.
\end{align*}

Moreover, it follows from Theorem~\ref{thm:buechi_automaton_correct} that the construction sketched above in fact results in a generalized Büchi automaton $\mathcal B$ accepting $\bigcup_{b’ \in \mathbb B_4 \setminus \{ b \}} L(\mathcal A_\varphi^{b’})$. Since $\mathcal A_\varphi$ has $5^{|\cl(\varphi)|} + 4$ states and at most $4 \cdot |\cl(\varphi)|$ acceptance sets, the automaton $\mathcal B$ has $5^{|\cl(\varphi)|} + 5$ states and also at most $4 \cdot |\cl(\varphi)|$ acceptance sets.

Second, given two generalized Büchi automata $\mathcal A_1 = (Q_1, \Sigma, q_0^1, \Delta_1, \mathcal F_1)$ and $\mathcal A_2 = (Q_2, \Sigma, q_0^2, \Delta_2, \mathcal F_2)$, it is well-known that one can construct a generalized Büchi automaton accepting $L(\mathcal A_1) \cap L(\mathcal A_2)$ using a simple product construction (see, e.g., Perrin and Pin~\cite{InfiniteWordsBook}). This construction results in an automaton with $|Q_1| \cdot |Q_2|$ states and $|\mathcal F_1| + |\mathcal F_2|$ acceptance sets. Since $\mathcal B$ consists of $5^{|\cl(\varphi)|} + 5$ states and has at most $4 \cdot |\cl(\varphi)|$ acceptance sets, this implies that one can construct a generalized Büchi automaton $\mathcal C$ with $L(\mathcal C) = L(\mathcal A) \cap L(\mathcal B)$ consisting of $|Q| \cdot (5^{|\cl{\varphi}|} + 5)$ states and at most $|\mathcal F| + 4 \cdot |\cl(\varphi)|$ acceptance sets.

Finally, it is left to check whether $L(\mathcal C) = \emptyset$. This problem is fundamental in LTL model checking, and there exist efficient algorithms that solve this problem in time linear in the product of the number of states of the input automaton and the number of its acceptance sets (see, e.g., Baier and Katoen~\cite{Baier:2008:PMC:1373322}). Hence, one can solve Problem~\ref{prob:model_checking_exact} in $\mathcal O \bigl( (|\mathcal F| + |\cl(\varphi)|) \cdot |Q| \cdot 5^{|\cl(\varphi)|} \bigr)$ time.
\end{proof}

If the answer to Problem~\ref{prob:model_checking_exact} is negative, it is natural to ask a weaker question, namely whether every word accepted by the Büchi automaton in question has at least value $b$.

\begin{problem}[At-least model checking]\label{prob:model_checking_at_least}
Let $\varphi$ be an rLTL$(\boxdot,\Diamonddot)$ formula over the set $\mathcal P$ of atomic propositions, $\mathcal A$ a generalized Büchi automaton over the alphabet $2^\mathcal P$, and $b \in \mathbb B_4$. Does $V(\sigma, \varphi) \geq b$ hold for all $\sigma \in L(\mathcal A)$?
\end{problem}

Using the same ideas as above, one can reduce deciding  the at-least model checking problem to checking the inclusion $\mathcal L(\mathcal A) \subseteq \bigcup_{b’ \in \mathbb B_4, b’ \geq b} L(\mathcal A_\varphi^{b’})$. Again, we avoid the complement by checking $L(\mathcal A) \cap \bigcup_{b’ \in \mathbb B_4, b’ < b} L(\mathcal A_\varphi^{b’}) = \emptyset$ 
instead, which immediately yields the next result.

\begin{corollary}\label{cor:model_checking_at_least}
One can decide the at-least model checking problem (Problem~\ref{prob:model_checking_at_least}) for $\mathcal A = (Q, \Sigma, q_0, \Delta, \mathcal F)$ and $\varphi$ in time $\mathcal O \bigl( (|\mathcal F| + |\cl(\varphi)|) \cdot |Q| \cdot 5^{|\cl(\varphi)|} \bigr)$.
\end{corollary}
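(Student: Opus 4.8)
The plan is to mimic, essentially line for line, the proof of Theorem~\ref{thm:model_checking_exact}, replacing the single target value $b$ by the up-set $\{b' \in \mathbb B_4 : b' \geq b\}$. First I would record the chain of equivalences
\[
V(\sigma,\varphi) \geq b \text{ for all } \sigma \in L(\mathcal A)
\iff L(\mathcal A) \subseteq \bigcup_{b' \in \mathbb B_4,\, b' \geq b} L(\mathcal A_\varphi^{b'})
\iff L(\mathcal A) \cap \Bigl( \Sigma^\omega \setminus \bigcup_{b' \geq b} L(\mathcal A_\varphi^{b'}) \Bigr) = \emptyset ,
\]
where the first equivalence is immediate from Theorem~\ref{thm:buechi_automaton_correct}. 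The one genuinely new point — and the only place where any care is needed — is that the complement on the right can be rewritten \emph{without} complementation as a finite union of languages $L(\mathcal A_\varphi^{b'})$. Here I would invoke Theorem~\ref{thm:buechi_automaton_correct} once more: every $\sigma \in \Sigma^\omega$ is accepted by $\mathcal A_\varphi$ from exactly one state $q_{b'}$, namely the one with $b' = V(\sigma,\varphi)$, so the languages $L(\mathcal A_\varphi^{b'})$, $b' \in \mathbb B_4$, partition $\Sigma^\omega$; combining this with the fact that $(\mathbb B_4, \preceq)$ is a \emph{linear} order — so that $b' \not\succeq b$ is the same as $b' \prec b$ — yields
\[
\Sigma^\omega \setminus \bigcup_{b' \geq b} L(\mathcal A_\varphi^{b'}) = \bigcup_{b' \in \mathbb B_4,\, b' < b} L(\mathcal A_\varphi^{b'}) .
\]

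From here the argument is identical to that of Theorem~\ref{thm:model_checking_exact}. I would build a generalized Büchi automaton $\mathcal B$ for $\bigcup_{b' < b} L(\mathcal A_\varphi^{b'})$ by adjoining to $\mathcal A_\varphi$ a fresh initial state $q_0$ together with $\varepsilon$-transitions $(q_0, \varepsilon, q_{b'})$ for each $b' \prec b$, and then eliminating the $\varepsilon$-transitions in the usual way; by Remark~\ref{rem:size_buechi_automaton}, $\mathcal B$ has $5^{|\cl(\varphi)|} + 5$ states and at most $4 \cdot |\cl(\varphi)|$ acceptance sets. Taking the standard product of $\mathcal B$ with $\mathcal A = (Q,\Sigma,q_0,\Delta,\mathcal F)$ gives a generalized Büchi automaton $\mathcal C$ with $L(\mathcal C) = L(\mathcal A) \cap L(\mathcal B)$, having $|Q| \cdot (5^{|\cl(\varphi)|}+5)$ states and at most $|\mathcal F| + 4 \cdot |\cl(\varphi)|$ acceptance sets. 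Finally, deciding $L(\mathcal C) = \emptyset$ takes time linear in the product of the number of states and the number of acceptance sets, giving the claimed bound $\mathcal O\bigl( (|\mathcal F| + |\cl(\varphi)|) \cdot |Q| \cdot 5^{|\cl(\varphi)|} \bigr)$.

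I do not anticipate any real obstacle: the only substantive step is the complementation-free rewriting of the up-set's complement, and that is handled by the partition property from Theorem~\ref{thm:buechi_automaton_correct} together with linearity of $\preceq$ on $\mathbb B_4$; everything else is a reprise of the proof of Theorem~\ref{thm:model_checking_exact}. It is worth noting, exactly as for Problem~\ref{prob:model_checking_exact}, that the naive alternatives are to be avoided — complementing $\mathcal A_\varphi^b$ because of the exponential blow-up of $\omega$-automaton complementation, and using $\mathcal A_{\neg\varphi}$ because $\Sigma^\omega \setminus L(\mathcal A_\varphi^b) \neq L(\mathcal A_{\neg\varphi}^b)$ in general — which is precisely why the partition of $\Sigma^\omega$ by the states $q_{b'}$ is the right tool here.
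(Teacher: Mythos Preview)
Your proposal is correct and follows precisely the approach the paper takes: the paper's argument for this corollary is the one-paragraph sketch immediately preceding it, which reduces the at-least problem to checking $L(\mathcal A) \cap \bigcup_{b' < b} L(\mathcal A_\varphi^{b'}) = \emptyset$ and then defers to the proof of Theorem~\ref{thm:model_checking_exact}. You have simply fleshed out that sketch in full detail, including the justification (via the partition property from Theorem~\ref{thm:buechi_automaton_correct} and linearity of $\preceq$) for why the complement of the up-set equals the union over $b' < b$.
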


The many valued semantics of rLTL$(\boxdot,\Diamonddot)$ allows posing optimization problems as well; for instance, a user might be interested in the largest value that a system guarantees.
Repeatedly solving the at-least model checking problem for decreasing values of $b$ already solves this problem, which is summarized in the following remark.

\begin{remark}
Given an rLTL$(\boxdot,\Diamonddot)$ formula $\varphi$ over the set $\mathcal P$ of atomic propositions and a generalized Büchi automaton $\mathcal A = (Q, \Sigma, q_0, \Delta, \mathcal F)$ over the alphabet $2^\mathcal P$, one can compute the largest $b \in \mathbb B_4$ such that $V(\sigma, \varphi) \geq b$ for all $\sigma \in L(\mathcal A)$ in time $\mathcal O \bigl( (|\mathcal F| + |\cl(\varphi)|) \cdot |Q| \cdot 5^{|\cl(\varphi)|} \bigr)$.
\end{remark}

\subsection{Reactive Synthesis}
\label{sec:synthesis}
In the context of reactive synthesis, we consider infinite-duration two-player games over finite graphs with rLTL$(\boxdot,\Diamonddot)$ winning conditions. In particular, we show, given a game with rLTL$(\boxdot,\Diamonddot)$ winning condition, how to construct a finite-state winning strategy.
Throughout this section, we assume familiarity with games over finite graphs and follow the definitions and notations of Grädel, Thomas, and Wilke~\cite{DBLP:conf/dagstuhl/2001automata}.

We consider games of the following kind.
\begin{definition}[rLTL$(\boxdot,\Diamonddot)$ games]
Let $\mathcal P$ be a finite set of atomic propositions. An \emph{rLTL$(\boxdot,\Diamonddot)$ game} is a pair $\mathfrak G = (\mathcal G, (\varphi, B))$ consisting of 
\begin{itemize}
	\item a finite, labeled \emph{game graph} $\mathcal G = (V, E, \lambda)$ where $V$ is a finite set of vertices that is partitioned into two disjoint sets $V_0, V_1 \subseteq V$, $E \subseteq V \times V$ is an edge relation, and $\lambda \colon V \to 2^\mathcal P$ is a function labeling each vertex with atomic propositions; and
	\item a pair $(\varphi, B)$ consisting of an rLTL$(\boxdot,\Diamonddot)$ formula $\varphi$ over $\mathcal P$ and a set $B \subseteq \mathbb B_4$ (this pair constitutes the \emph{winning condition} as we formalize shortly). 
\end{itemize}
\end{definition}

An rLTL$(\boxdot,\Diamonddot)$ game is played as usual by two players, Player~0 and Player~1, who construct a \emph{play} $\rho = v_0 v_1 \ldots \in V^\omega$ (i.e., an infinite sequence of vertices) by moving a token along the edges of the game graph. A play $\rho = v_0 v_1 \ldots$ induces an infinite word $\lambda(\rho) = \lambda(v_0) \lambda(v_1) \ldots \in {(2^\mathcal P)}^\omega$, and the value of the formula $\varphi$ on $\lambda(\rho)$ is used to determine the winner of the play. More precisely, we call a play $\rho \in V^\omega$ \emph{winning for Player~0} if $V(\lambda(\rho), \varphi) \in B$; symmetrically, we call a play \emph{winning for Player~1} if it is \emph{not winning} for Player~0.

A strategy of Player~$i$, $i \in \{ 0, 1 \}$, is a mapping $f \colon V^\ast V_i \to V$  that prescribes  the next move of Player~$i$ depending on the finite play played thus far. We call a strategy $f$ of Player~$i$ \emph{winning} from a state $v_0 \in V$ if all plays that start in $v_0$ and that are played according to $f$ are winning for Player~$i$. Moreover, we call a (winning) strategy a \emph{finite-state strategy} if there exists a finite-state machine computing it in the usual sense (see Grädel, Thomas, and Wilke~\cite{DBLP:conf/dagstuhl/2001automata} for further details).
Computing a finite-state winning strategy for Player~0 is the objective of the remainder of this section.

It is not hard to verify that determinacy of rLTL$(\boxdot,\Diamonddot)$ and the existence of a finite-state winning strategy follows from Theorem~\ref{thm:buechi_automaton_correct} and the determinacy of Büchi games, which leads to the following remark.

\begin{remark}
rLTL$(\boxdot,\Diamonddot)$ games are determined with finite-state winning strategies.
\end{remark}

Given an rLTL$(\boxdot,\Diamonddot)$ game an a vertex $v \in V$, we are interested in \emph{solving the game} (i.e., in deciding which player has a winning strategy from $v$ and in computing such a strategy), which is formalized next.

\begin{problem}[Determining the winner]\label{prob:synthesis_decision}
Let an rLTL$(\boxdot,\Diamonddot)$ game $\mathfrak G = (\mathcal G, (\varphi, B))$ over the set $V$ of vertices and a vertex $v_0 \in V$ be given. Determine the player who has a winning strategy from vertex $v_0$.
\end{problem}

\begin{problem}[Strategy synthesis]\label{prob:synthesis_strategy}
Let an rLTL$(\boxdot,\Diamonddot)$ game $\mathfrak G = (\mathcal G, (\varphi, B))$ over the set $V$ of vertices and a vertex $v_0 \in V$ be given. Compute a winning strategy from vertex $v_0$.
\end{problem}

To solve these problems, we follow the Safra-based approach using the following four-step process:
\begin{enumerate}
	\item We construct a (nondeterministic) Büchi automaton $\mathcal B_\varphi^B$ with $L(\mathcal B_\varphi^B) = \{ \sigma \in (2^\mathcal P)^\omega \mid V(\sigma, \varphi) \in B\}$.
	\item We determinize $\mathcal B_\varphi^B$ using Safra's construction~\cite{DBLP:conf/focs/Safra88}, resulting in a (deterministic) Rabin automaton\footnote{A Rabin automaton is a tuple $\mathcal C = (Q, \Sigma, q_0, \delta, \Omega)$ where $Q$, $\Sigma$, and $q_0$ are as in Büchi automata, $\delta \colon Q \times \Sigma \to Q$ is a (deterministic) transition function, and $\Omega \subseteq 2^Q \times 2^Q$ is the acceptance condition. The \emph{run} of a Rabin automaton on a word $ \sigma \in \Sigma^\omega$ is an infinite sequence of states $\rho = q_0 q_1 \ldots$ satisfying $\delta(q_i, \sigma(i)) = q_{i+1}$ for all $i \in \mathbb N$. A run $\rho$ is called \emph{accepting} if there exists a pair $(E, F) \in \Omega$ such that $E \cap \Inf(\rho) = \emptyset$ and $F \cap \Inf(\rho) \neq \emptyset$.} $\mathcal C_\varphi^B$ that is language-equivalent to $\mathcal B_\varphi^B$.
	\item We construct a Rabin game\footnote{A Rabin game is a game played over an unlabeled game graph $\mathcal G = (V, E)$ with nonempty, finite set $V$ of vertices and directed edge relation $E \subseteq V \times V$. The winning condition of a Rabin game is a set $\Omega \subseteq 2^V \times 2^V$, and a play $\rho = v_0 v_1 \ldots \in V^\omega$ is said to be winning for Player~0 if there exists a pair $(E, F) \in \Omega$ such that $E \cap \Inf(\rho) = \emptyset$ and $F \cap \Inf(\rho) \neq \emptyset$; by slight abuse of notation, $\Inf(\rho)$ here corresponds to the set of all vertices occurring infinitely often in the play $\rho$.} $\mathfrak G'$ by taking the product of the game graph $\mathcal G$ and the Rabin automaton $\mathcal C_\varphi^B$.
	\item We apply standard techniques to solve $\mathfrak G'$, which allows us to decide which player has a winning strategy from $v$ and to construct a winning strategy for the corresponding player.
\end{enumerate}

Let us now sketch these steps.

\paragraph{\emph{Step 1}}
The construction of Section~\ref{sec:LTL_2_BA} can easily be adapted to produce a (nondeterministic) generalized Büchi automaton $\mathcal A_\varphi^B$ with $L(\mathcal A_\varphi^B) = \{ \sigma \in (2^\mathcal P)^\omega \mid V(\sigma, \varphi) \in B\}$; this automaton comprises $5^{|\cl(\varphi)|} + 5$ states and at most $4 \cdot |\cl(\varphi)|$ acceptance sets. Subsequently, we construct a nondeterministic Büchi automaton $\mathcal B_\varphi^B$ accepting the same language; the standard conversion results in a Büchi automaton that comprises $\mathcal O(4 \cdot |\cl(\varphi)| \cdot (5^{|\cl(\varphi)|} + 5))$ states.

\paragraph{\emph{Step 2}}
Using Safra's determinization procedure~\cite{DBLP:conf/focs/Safra88}, we obtain a (deterministic) Rabin automaton $\mathcal C_\varphi^B$ that is language-equivalent to $\mathcal B_\varphi^B$. The automaton $\mathcal C_\varphi^B$ has $2^{5^{c_0|\cl(\varphi)|}}$ states and $5^{c_1 \cdot |\cl(\varphi)|}$ Rabin pairs where $c_0 > c_1$ are suitable constants.

\paragraph{\emph{Step 3}}
We construct the (unlabeled) product game graph $\mathcal G' = (V', E')$ of the game graph $\mathcal G = (V, E, \lambda)$ and the Rabin automaton $\mathcal C_\varphi^B = (Q, 2^\mathcal P, q_0, \delta, \Omega)$ such that $V' = V \times Q$ and
\[ \bigl( (v, q), (v', q') \bigr) \in E' \text{ if and only if $(v, v') \in E$ and $\delta(q, \lambda(v)) = q'$.} \]
Moreover, we define the Rabin winning condition of $\mathfrak G'$ to be
\[  \Omega' = \bigl\{ ((V, E), (V, F)) \in V' \times V' \mid (E, F) \in \Omega \bigr\}. \]
The desired Rabin game is then $\mathfrak G' = (\mathcal G', \Omega')$.

An induction over the length of a play $\rho' = (v_0, q_0) (v_1, q_1) \ldots$ in $\mathfrak G'$ shows that Player~0 wins $\rho'$ if and only if Player~0 wins the play $\rho = v_0 v_1 \ldots$ in $\mathfrak G$.

\paragraph{\emph{Step 4}}
Finally, by applying Piterman and Pnueli's method~\cite{DBLP:conf/lics/PitermanP06}, we solve the resulting Rabin game in time $\mathcal O(n^{k+3}kk!)$ where $n = |V| \cdot 2^{5^{c_0|\cl(\varphi)|}}$ is the number of vertices and $k = 5^{c_1|\cl(\varphi)|}$ is the number of Rabin pairs of $\mathfrak G'$.

In total, we obtain the following results.

\begin{theorem}
Given an rLTL$(\boxdot,\Diamonddot)$ game $\mathfrak G = (\mathcal G, (\varphi, B))$ with $\mathcal G = (V, E, \lambda)$ and a vertex $v_0 \in V$, one can 
\begin{enumerate}
	\item decide which player has a winning strategy from $v_0$ (i.e., Problem~\ref{prob:synthesis_decision}) and
	\item compute a winning strategy for the corresponding player (i.e., Problem~\ref{prob:synthesis_strategy})
	\end{enumerate}
in time $\mathcal O(n^{k+3}kk!)$ where $n = |V| \cdot 2^{5^{c_0|\cl(\varphi)|}}$, $k = 5^{c_1|\cl(\varphi)|}$, and $c_0, c_1$ are suitable constants.
\end{theorem}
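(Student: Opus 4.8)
The plan is to verify in detail the four-step Safra-based construction sketched above, tracking the sizes of all intermediate objects so that the claimed time bound falls out at the end. \emph{Step 1.} Starting from the automaton $\mathcal A_\varphi$ of Definition~\ref{def:buechi_automaton}, I would add a fresh initial state together with $\varepsilon$-transitions to $q_b$ for each $b \in B$ and eliminate these $\varepsilon$-transitions in the standard way; by Theorem~\ref{thm:buechi_automaton_correct} the resulting generalized Büchi automaton $\mathcal A_\varphi^B$ recognizes exactly $\{\sigma \mid V(\sigma,\varphi) \in B\}$, and by Remark~\ref{rem:size_buechi_automaton} it has $5^{|\cl(\varphi)|}+5$ states and at most $4|\cl(\varphi)|$ acceptance sets. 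Degeneralizing it (a product with a cyclic counter over the acceptance sets) then produces an ordinary nondeterministic Büchi automaton $\mathcal B_\varphi^B$ with $\mathcal O(|\cl(\varphi)| \cdot 5^{|\cl(\varphi)|})$ states recognizing the same language.

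\emph{Step 2.} I would apply Safra's determinization~\cite{DBLP:conf/focs/Safra88} to $\mathcal B_\varphi^B$. Since Safra's construction turns a nondeterministic Büchi automaton with $m$ states into a deterministic Rabin automaton with $2^{\mathcal O(m \log m)}$ states and $\mathcal O(m)$ Rabin pairs, substituting $m \in \mathcal O(|\cl(\varphi)| \cdot 5^{|\cl(\varphi)|})$ yields a deterministic Rabin automaton $\mathcal C_\varphi^B$ with $2^{5^{c_0 |\cl(\varphi)|}}$ states and $5^{c_1|\cl(\varphi)|}$ Rabin pairs for suitable constants $c_0 > c_1$.

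\emph{Steps 3 and 4.} Next I would form the product game $\mathfrak G' = (\mathcal G', \Omega')$ of $\mathcal G$ and $\mathcal C_\varphi^B$ exactly as described: vertices $V \times Q$ with the player partition inherited from $\mathcal G$, edges following both $E$ and the deterministic transition function of $\mathcal C_\varphi^B$ read along the labels $\lambda$, and the Rabin condition obtained by projecting the Rabin pairs of $\mathcal C_\varphi^B$ onto the automaton component. Because $\mathcal C_\varphi^B$ is deterministic, every play $\rho$ of $\mathfrak G$ has a unique lift $\rho'$ in $\mathfrak G'$ whose automaton component is precisely the run of $\mathcal C_\varphi^B$ on $\lambda(\rho)$; an induction on play length makes this exact, and from it one reads off that $\rho'$ is winning for Player~0 in $\mathfrak G'$ iff $\lambda(\rho) \in L(\mathcal C_\varphi^B) = L(\mathcal B_\varphi^B)$ iff $V(\lambda(\rho),\varphi) \in B$ iff $\rho$ is winning for Player~0 in $\mathfrak G$. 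Hence winning strategies transfer in both directions, and composing a finite-memory winning strategy of the Rabin game $\mathfrak G'$ with the automaton $\mathcal C_\varphi^B$ yields a finite-state winning strategy in $\mathfrak G$. Finally, $\mathfrak G'$ has $n = |V| \cdot 2^{5^{c_0|\cl(\varphi)|}}$ vertices and $k = 5^{c_1|\cl(\varphi)|}$ Rabin pairs, so Piterman and Pnueli's algorithm~\cite{DBLP:conf/lics/PitermanP06} solves it---deciding the winner and producing the strategy---in time $\mathcal O(n^{k+3} k k!)$, which is the stated bound; this simultaneously settles Problem~\ref{prob:synthesis_decision} and Problem~\ref{prob:synthesis_strategy}.

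The automata constructions themselves are routine; the points requiring care are (i) keeping the $\varepsilon$-elimination and degeneralization within $\mathcal O(|\cl(\varphi)| \cdot 5^{|\cl(\varphi)|})$ states, so that after Safra's $2^{\mathcal O(m\log m)}$ blow-up the double exponent is still of the form $5^{c_0|\cl(\varphi)|}$ and strictly dominates the number $5^{c_1|\cl(\varphi)|}$ of Rabin pairs, and (ii) the induction in Step~3 establishing the exact play-by-play correspondence between $\mathfrak G$ and $\mathfrak G'$, which hinges on the determinism of $\mathcal C_\varphi^B$. I expect (i)---the bookkeeping that makes the final exponents line up---to be the main obstacle, since everything else is a direct assembly of known constructions.
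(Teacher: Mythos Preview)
Your proposal is correct and follows essentially the same four-step Safra-based construction as the paper: build $\mathcal A_\varphi^B$ by adjoining a fresh initial state, degeneralize to a Büchi automaton, apply Safra, take the product with $\mathcal G$, and solve the resulting Rabin game with Piterman--Pnueli's algorithm, tracking the same size bounds throughout. The only cosmetic difference is that you spell out the $\varepsilon$-elimination and the play-lifting argument in slightly more detail than the paper does.
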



\section{Full $\mathrm{r}$LTL}
\label{sec:full_rLTL}
In this section, we extend the semantics of rLTL$(\boxdot,\Diamonddot)$ to full rLTL by providing the semantics for three additional operators: next (denoted by $\Nextdot$), release (denoted by $\Releasedot$), and until (denoted by $\Untildot$). Moreover, we show that all the results obtained for rLTL$(\boxdot,\Diamonddot)$ easily extend to full rLTL. In particular, we present expansion rules for the dotted version of release and until, sketch how to construct equivalent Büchi automata from rLTL formulas, and revisit the model checking and synthesis problems in the setting of full rLTL.

\subsection{Robust Semantics of Next, Release, and Until}
\label{SSec:FullrLTL}

The robust semantics of next is a direct generalization of the LTL semantics from $\B$ to $\B_4$:
\[V(\sigma,\Nextdot\varphi)=V(\sigma_\suf{1},\varphi).\]

However, this is not the case for the release and until operators since they can be used to recover $\boxdot$ and $\Diamonddot$ via the equalities $\boxdot\psi \coloneqq \false \Releasedot \psi$ and $\Diamonddot \psi \coloneqq \true \Untildot \psi$, respectively, and $\boxdot$ and $\Diamonddot$ themselves are not a direct generalization of their LTL counterparts. 

In order to motivate the semantics of release, we return to our motivating example $\Box p$. According to the safety-progress classification of temporal properties, eloquently put forward in~\cite{SafetyProgress}, $\Box p$ defines a safety property. It can be expressed as $A(L)$ with $L$ being the regular language $(\true)^* p$ and $A$ the operator generating all the infinite words in $(2^\mathcal{P})^\omega$ with the property that all its finite prefixes belong to $L$. In addition to $A$, we can find in~\cite{SafetyProgress} the operators $E$, $R$, and $P$ defining guarantee, response, and persistence properties, respectively. The language $E(L)$ consists of all the infinite words that contain at least one prefix in $L$, the language $R(L)$ consists of all the infinite words that contain infinitely many prefixes in $L$, and the language $P(L)$ consists of all the infinite words such that all but finitely many prefixes belong to $L$. Using these operators we can reformulate the semantics of $\boxdot p$ as:
\begin{equation}
\label{SemanticsViaAPRE}
V(\sigma,\boxdot p) = \begin{cases}
	1111 & \text{if $\sigma \in A(L)$;} \\
	0111 & \text{if $\sigma \in P(L)\setminus A(L)$;} \\
	0011 & \text{if $\sigma \in R(L) \setminus \left( A(L)\cup P(L) \right)$;} \\
	0001 & \text{if $\sigma \in E(L) \setminus \left( A(L)\cup P(L) \cup R(L) \right)$; and} \\
	0000 & \text{if $\sigma \notin E(L)$.}
\end{cases}
\end{equation}

We thus obtain a different justification for the five different truth values used in rLTL and why the five different cases in~\eqref{Eq:CanonicalForms} can be seen as canonical. Equality~\eqref{SemanticsViaAPRE} also suggests how we can define the $5$-valued semantics for the release operator. Recall that the LTL formula $p\,\mathcal{R}\,q$, for atomic propositions $p$ and $q$, defines a safety property, and that its semantics is given by
\begin{equation}
\label{SemanticsRelease}
W(\sigma, p \Release q ) = \inf_{j\ge 0}\max\left\{V_1(\sigma_\suf{j},q),\sup_{0\le i<j}V_1(\sigma_\suf{i},p)\right\}.
\end{equation}
We can interpret
\[ \max\left\{V_1(\sigma_\suf{j},q),\sup_{0\le i<j}V_1(\sigma_\suf{i},p)\right\}\]
as the definition of the regular language $L=(\true)^* q+(\true)^*p(\true)^+$ and $\inf_{j\ge 0}$ as the requirement that every prefix of a string satisfying $p\,\mathcal{R}\,q$ belongs to $L$ (i.e., as the definition of the operator $A$). Therefore, the $5$-valued semantics can be obtained by successively enlarging the language $A(L)$ through the replacement of the operator $A$, formalized by $\inf$ in Equation~\eqref{SemanticsRelease}, by the operators $P$ formalized by $\sup \inf$, $R$ formalized by $\inf\sup$, and $E$ for formalized by $\sup$.  This observation leads to the semantics
\[
V(\sigma, \varphi \Releasedot \psi ) = \left(V_1(\sigma,\varphi \Releasedot \psi ),V_2(\sigma,\varphi \Releasedot \psi ),V_3(\sigma, \varphi \Releasedot \psi ),V_4(\sigma, \varphi \Releasedot \psi )\right),
 \]
where
\begin{align}
V_1(\sigma,\varphi \Releasedot \psi ) & = \inf_{j\ge 0}\max\left\{V_1(\sigma_\suf{j},\psi),\sup_{0\le i<j}V_1(\sigma_\suf{i},\varphi)\right\}, \\
V_2(\sigma,\varphi \Releasedot \psi ) & = \sup_{k\ge 0}\inf_{j\ge k}\max\left\{V_2(\sigma_\suf{j},\psi),\sup_{0\le i<j}V_2(\sigma_\suf{i},\varphi)\right\}, \\
V_3(\sigma, \varphi \Releasedot \psi ) & = \inf_{k\ge 0}\sup_{j\ge k}\max\left\{V_3(\sigma_\suf{j},\psi),\sup_{0\le i<j}V_3(\sigma_\suf{i},\varphi)\right\}, \\
V_4(\sigma,\varphi \Releasedot \psi ) & = \sup_{j\ge 0}\max\left\{V_4(\sigma_\suf{j},\psi),\sup_{0\le i<j}V_4(\sigma_\suf{i},\varphi)\right\}.
\end{align}
We note that $\boxdot\psi = \false \Releasedot \psi$ holds, thereby showing that the semantics for \Releasedot is compatible with the semantics of $\boxdot$ introduced in Section~\ref{Sec:ANew}. We can glean further intuition behind the definition of $\Releasedot$ by considering the special case $\varphi = p$ and $\psi = q$ for two atomic propositions  $p, q \in \mathcal P$. Expressing $V(\sigma, p \Releasedot q)$ in terms of an LTL valuation $W$, we obtain
\[V(\sigma, p \Releasedot q)=\left(W(\sigma, p \Release q),~ W(\sigma,\Diamond\Box q \lor \Diamond p),~ W(\sigma,\Box\Diamond q\lor \Diamond p),~ W(\sigma,\Diamond q\lor \Diamond p)\right).\]

We see that, as long as $p$ occurs, the value of $p \Releasedot q$ is at least $0111$. It could be argued that the semantics of $p \Releasedot q$ should also count the number of occurrences of $q$ preceding the first occurrence of $p$. As we detail in Section~\ref{SSection:ExamplesFullrLTL}, such property can be expressed in rLTL by making use of the proposed semantics.

In LTL, the until operator is dual to the release operator but such relationship does not extend to rLTL in virtue of how negation was defined. Hence, the semantics of $\Untildot$ has to be introduced independently of $\Releasedot$. We follow the same approach that was used for $\Releasedot$ by interpreting the LTL semantics of $p \Untildot q$, given by
\begin{equation}
\label{SemanticsUntil}
W(\sigma, p \Until q ) = \sup_{j\ge 0}\min\left\{V_1(\sigma_\suf{j},q),\inf_{0\le i<j}V_1(\sigma_\suf{i},p)\right\},
\end{equation}
as defining the language $E(p^*q)$. In the hierarchy of the operators $E$, $R$, $P$, and $A$, defined by the inclusions $A(L)\subset P(L)\subset R(L)\subset E(L)$ for any regular language $L$, the language $E(p^*q)$ cannot be enlarged as it sits at the top of the hierarchy. Therefore, the semantics of $\Untildot$ is given by
\[
V(\sigma, \varphi \Untildot \psi ) = \left(V_1(\sigma,\varphi \Untildot \psi ),V_2(\sigma,\varphi \Untildot \psi ),V_3(\sigma, \varphi \Untildot \psi ),V_4(\sigma, \varphi \Untildot \psi )\right),
 \]
where
\[
V_k(\sigma,\varphi \Untildot \psi ) = \sup_{j\ge 0}\min\left\{V_k(\sigma_\suf{j},\psi),\inf_{0\le i<j}V_k(\sigma_\suf{i},\varphi)\right\} \text{ for each } k\in\{1,2,3,4\}.
\]

We obtain, by definition, that the semantics of \Untildot is compatible with the semantics of $\Diamonddot$ in the sense that $\true \Untildot \psi= \Diamonddot \psi$.

\subsection{Examples}
\label{SSection:ExamplesFullrLTL}
As we discussed before, the semantics of $\varphi\Releasedot \psi$ does not count how many times $\psi$ holds before the first occurrence of $\varphi$. This property, however, is captured by the rLTL formula
\begin{equation}
\label{Eq:ReleasedotWithCounting}
\left(\varphi\Releasedot \psi\right)\land \left(\neg\varphi\Untildot \psi\right).
\end{equation} 
To see why, we assume $\varphi=p$ and $\psi=q$, for atomic propositions $p$ and $q$, so as to express the semantics of the rLTL formula~\eqref{Eq:ReleasedotWithCounting} in terms of an LTL valuation $W$ as
\begin{align}
V(\sigma,\left(p\Releasedot q\right)\land \left(\neg p\Untildot q\right)) & =\left(W(\sigma, p\Release q),W(\neg p\Until q),W(\neg p\Until q),W(\neg p\Until q)\right).
\end{align}
Note how we can now distinguish between three cases: $p\Release q$ holds, corresponding to value $1111$, $q$ holds at least once before being released by $p$, corresponding to value $0111$, and $q$ does not hold before being released by $p$, corresponding to value $0000$.

The preceding discussion showed how the LTL equality $\varphi \Release \psi=\left( \varphi \Release \psi\right)\land\left(\neg \varphi \Until \psi\right)$ is not valid in rLTL. Another LTL equality that is not valid in rLTL is the decomposition of the until operator into its liveness and safety parts given by
\[\varphi \Until \psi = \Diamond\psi\land (\psi \Release (\psi\lor\varphi)).\]
The rLTL formula $\Diamonddot\psi\land (\psi \Releasedot (\psi\lor\varphi))$ expresses a weaker requirement than $\varphi \Untildot \psi$ that is also useful to express robustness. When $\varphi$ and $\psi$ are the atomic propositions $p$ and $q$, respectively, the semantics of $\Diamonddot\psi\land (\psi \Releasedot (\psi\lor\varphi))$ can be expressed in terms of an LTL valuation W as
\[V(\sigma, \Diamond q\land (q \Release (q\lor p)) )=\left(W(\sigma, p \Until q),~ W(\sigma, \Diamond q),~ W(\sigma,\Diamond q),~ W(\sigma,\Diamond q)\right).\]
Whereas $\varphi \Untildot \psi$ only assumes two values, $\Diamonddot\psi\land (\psi \Releasedot (\psi\lor\varphi))$ assumes $3$ possible values allowing to separate the words that violate $\varphi \Until \psi$ into those that satisfy $\Diamond q$ and those that do not.

\subsection{From Full rLTL to Generalized Büchi Automata}
\label{sec:fullrLTL_2_BA}
The construction of a generalized Büchi automaton from an rLTL formula relies on the following expansion rules for $\Releasedot$ and $\Untildot$. Once can prove these rules using arguments similar to those employed to prove Proposition~\ref{prop:expansion_rules}.

\begin{proposition}[Expansion Rules for $\Releasedot$ and $\Untildot$]
For any rLTL formulas $\varphi$ and $\psi$, for any $\sigma\in \Sigma^\omega$, any $\ell\in\N$, and any valuation $V$ the following equalities hold:
\begin{align}
\label{Eq:RRExp1}
V_1(\sigma_\suf{\ell},\varphi \Releasedot \psi) &= \min\left\{V_1(\sigma_\suf{\ell},\psi),\max\left\{V_1(\sigma_\suf{\ell},\varphi),V_1(\sigma_\suf{\ell+1},\varphi \Releasedot \psi)\right\}\right\}\\
\label{Eq:RRExp2}
V_2(\sigma_\suf{\ell}, \varphi \Releasedot \psi) &= \max\left\{V_1(\sigma_\suf{\ell},\varphi \Releasedot \psi),V_2(\sigma_\suf{\ell},\varphi),V_2(\sigma_\suf{\ell+1},\varphi \Releasedot \psi)\right\}\\
\label{Eq:RRExp3}
V_3(\sigma_\suf{\ell}, \varphi \Releasedot \psi) &= \min\left\{V_4(\sigma_\suf{\ell},\varphi \Releasedot \psi),\max\left\{V_3(\sigma_\suf{\ell},\varphi),V_3(\sigma_\suf{\ell+1},\varphi \Releasedot \psi)\right\}\right\}\\
\label{Eq:RRExp4}
V_4(\sigma_\suf{\ell}, \varphi \Releasedot \psi) &= \max\left\{V_4(\sigma_\suf{\ell},\psi),V_4(\sigma_\suf{\ell},\varphi),V_4(\sigma_\suf{\ell+1},\varphi\Releasedot \psi)\right\}\\
\label{Eq:RRExp5}
V_1(\sigma_\suf{\ell}, \varphi \Untildot \psi) &= \max\left\{V_1(\sigma_\suf{\ell},\psi), \min\left\{V_1(\sigma_\suf{\ell},\varphi),V_1(\sigma_\suf{\ell+1},\varphi \Untildot \psi)\right\}\right\}\text{ for each }k\in\{2,3,4\}.
\end{align}
\end{proposition}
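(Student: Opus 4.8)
The plan is to imitate, almost verbatim, the proof of Proposition~\ref{prop:expansion_rules}, exploiting that the semantics of $\Releasedot$ and $\Untildot$ was crafted so that each of the four components $V_k(\sigma,\varphi\Releasedot\psi)$ and $V_k(\sigma,\varphi\Untildot\psi)$ has exactly the shape of an LTL $\sup/\inf$ expression. I would split the five equalities into two groups according to how much cross-component recursion they require, prove one representative of each group in detail, and obtain the remaining equalities by order-duality or by repeating the argument component-wise.

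The first group consists of the ``diagonal'' rules \eqref{Eq:RRExp1}, \eqref{Eq:RRExp4}, and the until rule \eqref{Eq:RRExp5}; these are handled exactly as \eqref{Eq:Exp1} and \eqref{Eq:Exp4} in Proposition~\ref{prop:expansion_rules}. For \eqref{Eq:RRExp1} I would reindex the defining expression as
\[ V_1(\sigma_\suf{\ell},\varphi\Releasedot\psi)=\inf_{j\ge\ell}\max\left\{V_1(\sigma_\suf{j},\psi),~\sup_{\ell\le i<j}V_1(\sigma_\suf{i},\varphi)\right\}, \]
peel off the summand $j=\ell$ (whose inner supremum, taken over an empty range, is $0$, so that summand equals $V_1(\sigma_\suf{\ell},\psi)$), and on the summands $j\ge\ell+1$ write $\sup_{\ell\le i<j}V_1(\sigma_\suf{i},\varphi)=\max\{V_1(\sigma_\suf{\ell},\varphi),~\sup_{\ell+1\le i<j}V_1(\sigma_\suf{i},\varphi)\}$ and pull the constant $V_1(\sigma_\suf{\ell},\varphi)$ out of the infimum, which is legitimate because $\max$ distributes over $\inf$ when one of its arguments is fixed. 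This produces precisely the right-hand side of \eqref{Eq:RRExp1}. Equation \eqref{Eq:RRExp4} is the same computation with the outer $\inf$ replaced by $\sup$, and \eqref{Eq:RRExp5} is the analogous peeling for $\Untildot$, except that the $j=\ell$ summand now contributes $V_k(\sigma_\suf{\ell},\psi)$ because the infimum over an empty range is $1$; this argument goes through identically and independently for each $k\in\{1,2,3,4\}$ (the statement as printed displays only the $k=1$ instance, but the other three components are obtained verbatim since the semantics of $\Untildot$ acts on the four components separately).

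The second group is formed by the ``off-diagonal'' rules \eqref{Eq:RRExp2} and \eqref{Eq:RRExp3}, in which the recursion mixes $V_2$ with $V_1$ (respectively $V_3$ with $V_4$), and this is the part I expect to require care. Abbreviating $h(j)=\max\{V_2(\sigma_\suf{j},\psi),~\sup_{\ell\le i<j}V_2(\sigma_\suf{i},\varphi)\}$, the standard $\sup\inf$ manipulation (split the outer $\sup$ at $k=\ell$, and on the tail $k\ge\ell+1$ pull out the constant $V_2(\sigma_\suf{\ell},\varphi)$ as in the previous paragraph) yields the intermediate identity
\[ V_2(\sigma_\suf{\ell},\varphi\Releasedot\psi)=\max\left\{\inf_{j\ge\ell}h(j),~V_2(\sigma_\suf{\ell},\varphi),~V_2(\sigma_\suf{\ell+1},\varphi\Releasedot\psi)\right\}. \]
Proving \eqref{Eq:RRExp2} then amounts to replacing $\inf_{j\ge\ell}h(j)$ by $V_1(\sigma_\suf{\ell},\varphi\Releasedot\psi)$ inside this maximum, which is where the structure of $\mathbb{B}_4$ enters, exactly as in the proof of \eqref{Eq:Exp2}. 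Since every $b\in\mathbb{B}_4$ satisfies $b_1\le b_2$, we have $V_1(\sigma_\suf{j},\chi)\le V_2(\sigma_\suf{j},\chi)$ for every formula $\chi$, hence $V_1(\sigma_\suf{\ell},\varphi\Releasedot\psi)\le\inf_{j\ge\ell}h(j)$ by monotonicity of $\inf$, $\sup$ and $\max$; this gives one direction of the required equality for free. For the other direction I would argue by cases: if $\max\{V_2(\sigma_\suf{\ell},\varphi),~V_2(\sigma_\suf{\ell+1},\varphi\Releasedot\psi)\}=1$, both maxima equal $1$; and if this quantity equals $0$, then since $V_2(\sigma_\suf{\ell+1},\varphi\Releasedot\psi)=0$ its defining outer supremum has a vanishing $k=\ell+1$ term, which yields an index $j^*\ge\ell+1$ with $V_2(\sigma_\suf{j^*},\psi)=0$ and $V_2(\sigma_\suf{i},\varphi)=0$ for all $\ell+1\le i<j^*$; adjoining $V_2(\sigma_\suf{\ell},\varphi)=0$ forces $h(j^*)=0$, hence $\inf_{j\ge\ell}h(j)=0$ and, by the inequality just noted, $V_1(\sigma_\suf{\ell},\varphi\Releasedot\psi)=0$ as well, so the two maxima coincide. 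Equation \eqref{Eq:RRExp3} is the order-dual statement and follows by the mirror argument, using $b_3\le b_4$ in $\mathbb{B}_4$ and interchanging $\sup$ with $\inf$ and $\max$ with $\min$.

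The only genuinely delicate point in the whole proof, precisely as in Proposition~\ref{prop:expansion_rules}, is the substitution step for \eqref{Eq:RRExp2} and \eqref{Eq:RRExp3}: one must recur on the first (respectively fourth) component rather than on $\sup_{k\ge\ell+1}\inf_{j\ge k}V_2(\sigma_\suf{j},\cdot)$, and the case analysis above is exactly what licenses that substitution. Everything else is routine reindexing of suprema and infima, so in the write-up I would present \eqref{Eq:RRExp1} and \eqref{Eq:RRExp2} in full and remark that \eqref{Eq:RRExp3}, \eqref{Eq:RRExp4}, and \eqref{Eq:RRExp5} follow by duality and by the component-wise repetition described above.
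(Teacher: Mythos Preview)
Your proposal is correct and follows exactly the approach the paper indicates: the paper does not give a detailed proof of this proposition but only remarks that ``one can prove these rules using arguments similar to those employed to prove Proposition~\ref{prop:expansion_rules},'' and what you have written is precisely that imitation, including the key substitution step for \eqref{Eq:RRExp2} and \eqref{Eq:RRExp3} via the monotonicity $b_1\le b_2$ (resp.\ $b_3\le b_4$) in $\mathbb{B}_4$ and the same case analysis used for \eqref{Eq:Exp2}. Your remark about \eqref{Eq:RRExp5} applying verbatim to all four components is also apt, since the displayed statement (with $V_1$ yet quantifying over $k\in\{2,3,4\}$) is evidently a typographical slip in the paper.
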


One can translate rLTL formulas into generalized Büchi automata by means of a straightforward extension of the rLTL$(\boxdot,\Diamonddot)$ construction introduced in Section~\ref{sec:LTL_2_BA}.
For this reason, we only sketch this extension:
\begin{itemize}
\item Logical connectives are handled as in rLTL$(\boxdot,\Diamonddot)$.
	\item Due to the simple semantics of the operator $\Nextdot$, this case is handled in the same manner that $\Next$ is handled in LTL (see, e.g., Baier and Katoen~\cite{Baier:2008:PMC:1373322}).
	\item The operator $\Releasedot$ is handled in the same manner as the operator $\boxdot$ (see Section~\ref{sec:LTL_2_BA}) while applying the expansion rules for $\Releasedot$ given by Equations~\eqref{Eq:RRExp1} to \eqref{Eq:RRExp4}.	
	\item The operator $\Untildot$ is handled in the same manner as the operator $\Diamonddot$ (see Section~\ref{sec:LTL_2_BA}) while applying the expansion rules for $\Untildot$ given by Equation~\eqref{Eq:RRExp5}.	
\end{itemize}
Note that the temporal operators $\boxdot$ and $\Diamonddot$ can either be recovered syntactically from $\Releasedot$ and $\Untildot$ in a preprocessing step or handled directly as described in Section~\ref{sec:LTL_2_BA}.
As in the case of rLTL($\boxdot, \Diamonddot$), we denote the Büchi automaton constructed from the formula $\varphi$ by $\mathcal A_\varphi$.

Although the expansion rules for \Releasedot, and \Untildot are different from the expansion rules for $\boxdot$ and $\Diamonddot$, a simple analysis yields that $\mathcal A_\varphi$ comprises $5^{|\cl(\varphi)|} + 4$ states and at most $4 \cdot |\cl(\varphi)|$ acceptance sets, exactly the same numbers as in the case of rLTL$(\boxdot,\Diamonddot)$. Moreover, $\mathcal A_\varphi$ exactly captures the semantics of $\varphi$ in the sense formalized below.

\begin{theorem}\label{thm:full_rLTL:buechi_automaton_correct}
Let $\varphi$ be an rLTL formula over the set $\mathcal P$ of atomic propositions, $\Sigma = 2^\mathcal P$, and $b \in \mathbb B_4$. Then, $\mathcal A_\varphi$ is a generalized Büchi automaton with $5^{|\cl(\varphi)|} + 4$ states and at most $4 \cdot |\cl(\varphi)|$ acceptance sets that accepts $\sigma \in \Sigma^\omega$ when starting in state $q_b$ if and only if $V(\sigma, \varphi) = b$.
\end{theorem}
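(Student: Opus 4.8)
The plan is to replay, almost verbatim, the development of Section~\ref{sec:rLTL_to_Buechi} that culminated in Theorem~\ref{thm:buechi_automaton_correct}, extended to the three new operators $\Nextdot$, $\Releasedot$, and $\Untildot$. First I would extend the notion of $\varphi$-expansion (Definition~\ref{def:phi_expansion}); nothing new is needed here, since $\cl(\varphi)$ is already defined for full rLTL. Next I would augment the local constraints \ref{cond:local1}--\ref{cond:local8d} with one clause per new operator: for $\psi = \Nextdot\psi_1$, the positional constraint $\eta(\psi,i) = \eta(\psi_1,i+1)$; for $\psi = \varphi_1\Releasedot\varphi_2$, four clauses expressing $\eta_k(\psi,i)$ through $\eta_\bullet(\varphi_1,i)$, $\eta_\bullet(\varphi_2,i)$, and $\eta_\bullet(\psi,i+1)$ exactly as prescribed by the expansion rules \eqref{Eq:RRExp1}--\eqref{Eq:RRExp4}; and for $\psi = \varphi_1\Untildot\varphi_2$, four clauses (one per component) built from the $\Untildot$ expansion rule \eqref{Eq:RRExp5}. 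I would then augment the non-local constraints \ref{cond:nonlocal1}--\ref{cond:nonlocal2d}: $\Nextdot$ needs none; $\Releasedot$ gets four, following the $\boxdot$ pattern (component $1$ is a genuine safety requirement, component $4$ a genuine eventuality, and components $2,3$ couple a $\sup\inf$ or $\inf\sup$ structure with the already-determined components $1$ and $4$); and $\Untildot$ gets four, each a componentwise copy of the ``unfulfilled until'' condition familiar from LTL, paralleling Condition~\ref{cond:nonlocal1} for $\Diamonddot$.

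The core technical step is then the analogue of Lemma~\ref{lem:compatibility}: any map $\eta\colon\cl(\varphi)\times\N\to\B_4$ satisfying all local and non-local constraints equals the $\varphi$-expansion of $\sigma$. The proof is structural induction over $\cl(\varphi)$. The $\Nextdot$ case is immediate from the local constraint and the induction hypothesis, since it is purely finitary and requires no Büchi reasoning. The $\Untildot$ case runs the classical LTL argument for $\Until$---either a witness for $\eta_k(\varphi_2,\cdot)$ occurs infinitely often and the value of the until propagates backwards, or from some point on $\eta_k(\varphi_2,\cdot) = 0$ and the non-local constraint forces $\eta_k(\varphi_1\Untildot\varphi_2,\cdot) = 0$---independently for each of the four components. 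The $\Releasedot$ case mirrors the $\boxdot$ case of Lemma~\ref{lem:compatibility}: one must establish $V_k(\sigma_\suf{i},\psi) = \eta_k(\psi,i)$ \emph{in the order $k=1$, $k=4$, then $k=2$, $k=3$}, because \eqref{Eq:RRExp2} and \eqref{Eq:RRExp3} recur on $\eta_1(\psi,\cdot)$ and $\eta_4(\psi,\cdot)$ respectively---the deliberate recurrence on already-available components, just as with \eqref{Eq:Exp2}--\eqref{Eq:Exp3}. Once this compatibility lemma is in hand, $\mathcal A_\varphi$ is defined exactly as in Definition~\ref{def:buechi_automaton}: the states are the maps $\mu\colon\cl(\varphi)\to\B_4$ consistent with the positional local constraints together with the states $q_b$; transitions enforce the remaining one-step local constraints; and $\mathcal F$ collects, for each temporal subformula, the at most four Büchi sets translating its non-local constraints.

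Finally I would read off the quantitative claims: the state set consists of $\{q_b : b\in\B_4\}$ together with a subset of $\B_4^{\cl(\varphi)}$, which yields the bound $5^{|\cl(\varphi)|}+4$ by the same counting as in Remark~\ref{rem:size_buechi_automaton} ($\Nextdot$ introduces no states beyond the one-step memory already present), while the acceptance sets number at most four per temporal subformula, hence at most $4\,|\cl(\varphi)|$ in total, since $\Nextdot$ contributes none. Correctness of ``$\mathcal A_\varphi$ accepts $\sigma$ from $q_b$ if and only if $V(\sigma,\varphi)=b$'' then follows exactly as after Definition~\ref{def:buechi_automaton}: accepting runs are precisely the $\varphi$-expansions of $\sigma$, and the outgoing transitions of $q_b$ force $\mu_0(\varphi)=b$. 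The main obstacle is the bookkeeping around $\Releasedot$: one must verify that the four Büchi sets proposed for $\Releasedot$ are simultaneously \emph{sound} (satisfied by the genuine $\varphi$-expansion) and \emph{tight enough} to pin down $\eta$ in the compatibility lemma, and that the inter-component recurrence is respected so that no intermediate $\sup$/$\inf$ needs to be stored---the same delicacy already visible in the proof of Proposition~\ref{prop:expansion_rules}. As the paper notes, one may instead first eliminate $\boxdot$ and $\Diamonddot$ via $\boxdot\psi = \false\Releasedot\psi$ and $\Diamonddot\psi = \true\Untildot\psi$, which removes two of the operator cases but leaves the argument essentially unchanged.
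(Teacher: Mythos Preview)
Your proposal is correct and follows essentially the same approach as the paper: the paper itself does not give a detailed proof of Theorem~\ref{thm:full_rLTL:buechi_automaton_correct} but merely sketches the extension of the rLTL$(\boxdot,\Diamonddot)$ construction by handling $\Nextdot$ as in LTL, $\Releasedot$ by the pattern of $\boxdot$ using the expansion rules \eqref{Eq:RRExp1}--\eqref{Eq:RRExp4}, and $\Untildot$ by the pattern of $\Diamonddot$ using \eqref{Eq:RRExp5}. Your plan supplies exactly those details, including the crucial ordering $k=1,4$ before $k=2,3$ in the $\Releasedot$ case of the compatibility lemma and the observation that $\Nextdot$ contributes no acceptance sets, so the bounds of Remark~\ref{rem:size_buechi_automaton} carry over unchanged.
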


\subsection{Model Checking and Synthesis}
Since we obtain the same bounds on the number of states and acceptance sets of the automaton $\mathcal A_\varphi$ for both rLTL($\boxdot, \Diamonddot$) formulas and full rLTL formulas, the results for model checking and synthesis extend to the case of full rLTL. For the reader's convenience, we provide the formal statements.

\begin{corollary}\label{thm:full_rLTL:model_checking}
One can decide the model checking problem as well as the at-least modecl checking problem for a generalized Büchi automaton $\mathcal A = (Q, \Sigma, q_0, \Delta, \mathcal F)$ and an rLTL formula $\varphi$ in time $\mathcal O \bigl( (|\mathcal F| + |\cl(\varphi)|) \cdot |Q| \cdot 5^{|\cl(\varphi)|} \bigr)$.
\end{corollary}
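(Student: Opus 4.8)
The plan is to observe that this corollary is an immediate consequence of Theorem~\ref{thm:full_rLTL:buechi_automaton_correct} combined with the model-checking arguments already developed for the fragment rLTL$(\boxdot,\Diamonddot)$ in Section~\ref{sec:model_checking}. First I would recall that the proofs of Theorem~\ref{thm:model_checking_exact} and Corollary~\ref{cor:model_checking_at_least} used only two properties of the map $\varphi \mapsto \mathcal A_\varphi$: that $\mathcal A_\varphi$ with initial state $q_b$ accepts exactly the words $\sigma$ with $V(\sigma,\varphi) = b$, and that $\mathcal A_\varphi$ has $5^{|\cl(\varphi)|} + 4$ states and at most $4 \cdot |\cl(\varphi)|$ acceptance sets. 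Nothing in those proofs is specific to the operators $\boxdot$ and $\Diamonddot$; they are stated purely in terms of these two facts.

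Next I would invoke Theorem~\ref{thm:full_rLTL:buechi_automaton_correct}, which asserts precisely these two properties for the automaton $\mathcal A_\varphi$ associated with an arbitrary full rLTL formula $\varphi$ via the construction of Section~\ref{sec:fullrLTL_2_BA}. With this in hand, the reduction carries over verbatim: for exact model checking, ``$V(\sigma,\varphi) = b$ for all $\sigma \in L(\mathcal A)$'' is equivalent to $L(\mathcal A) \cap \bigcup_{b' \in \mathbb B_4 \setminus \{b\}} L(\mathcal A_\varphi^{b'}) = \emptyset$, which one decides by building the union automaton (a fresh initial state with $\varepsilon$-transitions into the states $q_{b'}$, followed by $\varepsilon$-elimination), taking the product with $\mathcal A$, and running the standard linear-time emptiness test for generalized Büchi automata. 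The resulting automaton has $|Q| \cdot (5^{|\cl(\varphi)|} + 5)$ states and at most $|\mathcal F| + 4 \cdot |\cl(\varphi)|$ acceptance sets, which yields the stated bound $\mathcal O\bigl((|\mathcal F| + |\cl(\varphi)|) \cdot |Q| \cdot 5^{|\cl(\varphi)|}\bigr)$. For the at-least variant the only change is replacing $\mathbb B_4 \setminus \{b\}$ by $\{b' \in \mathbb B_4 \mid b' < b\}$, exactly as in Corollary~\ref{cor:model_checking_at_least}.

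The one point that genuinely requires verification --- and hence the main, if mild, obstacle --- is confirming that the full-rLTL construction really does produce an automaton of the same size as in the fragment case, despite the additional operators $\Nextdot$, $\Releasedot$, and $\Untildot$. This is exactly what the expansion rules of Section~\ref{sec:fullrLTL_2_BA} are engineered to guarantee: the expansion rule for $\Releasedot$ recurs on $V_1$ and $V_4$ of $\varphi \Releasedot \psi$ rather than on auxiliary suprema and infima, mirroring the trick used for $\boxdot$, so no extra auxiliary memory is needed and the state count stays at $5^{|\cl(\varphi)|} + 4$; this size bound is already recorded in Theorem~\ref{thm:full_rLTL:buechi_automaton_correct}. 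Once that bound is accepted, the corollary follows with no further work, so the proof amounts to citing Theorem~\ref{thm:full_rLTL:buechi_automaton_correct} and rerunning the argument of Theorem~\ref{thm:model_checking_exact}.
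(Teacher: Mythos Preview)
Your proposal is correct and matches the paper's approach exactly: the paper gives no separate proof for this corollary, merely remarking that since the full-rLTL automaton $\mathcal A_\varphi$ satisfies the same size bounds as in the fragment case (Theorem~\ref{thm:full_rLTL:buechi_automaton_correct}), the model-checking results of Section~\ref{sec:model_checking} carry over verbatim. Your write-up is in fact more detailed than the paper's own justification.
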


\begin{corollary}
Given an rLTL formula $\varphi$ over the set $\mathcal P$ of atomic propositions and a generalized Büchi automaton $\mathcal A = (Q, \Sigma, q_0, \Delta, \mathcal F)$ over the alphabet $2^\mathcal P$, one can compute the largest $b \in \mathbb B_4$ such that $V(\sigma, \varphi) \geq b$ for all $\sigma \in L(\mathcal A)$ in time $\mathcal O \bigl( (|\mathcal F| + |\cl(\varphi)|) \cdot |Q| \cdot 5^{|\cl(\varphi)|} \bigr)$.
\end{corollary}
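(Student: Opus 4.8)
The plan is to reduce the task to a bounded number of at-least model checking queries, exploiting that $\mathbb B_4$, ordered by $\preceq$, is the five-element chain $0000 \prec 0001 \prec 0011 \prec 0111 \prec 1111$. For $b \in \mathbb B_4$ let $P(b)$ abbreviate the predicate ``$V(\sigma,\varphi) \succeq b$ for all $\sigma \in L(\mathcal A)$''. The first observation I would record is that, because $\preceq$ is a \emph{total} order, $P$ is antitone: if $P(b)$ holds and $b' \preceq b$, then $V(\sigma,\varphi) \succeq b \succeq b'$ for every $\sigma \in L(\mathcal A)$, so $P(b')$ holds too. Moreover $P(0000)$ always holds since $0000 = \bot$ is the least element of $\mathbb B_4$. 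Hence $\{ b \in \mathbb B_4 \mid P(b) \}$ is a nonempty down-set of the chain $\mathbb B_4$, so it is of the form $\{ b \mid b \preceq b^\ast \}$ for a unique $b^\ast$, and this $b^\ast$ is exactly the largest value with the required property (equivalently $b^\ast = \inf_{\sigma \in L(\mathcal A)} V(\sigma,\varphi)$, an infimum that is attained when $L(\mathcal A) \neq \emptyset$ because $\mathbb B_4$ is finite, and that equals $1111$ when $L(\mathcal A) = \emptyset$).

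Concretely, I would first construct the generalized Büchi automaton $\mathcal A_\varphi$ of Theorem~\ref{thm:full_rLTL:buechi_automaton_correct} once; it has $5^{|\cl(\varphi)|} + 4$ states and at most $4 \cdot |\cl(\varphi)|$ acceptance sets, and designating $q_{b'}$ as its initial state gives the automaton $\mathcal A_\varphi^{b'}$ that accepts exactly the words of value $b'$. Then I would determine $b^\ast$ by testing $P(b)$ for $b$ running through $1111, 0111, 0011, 0001, 0000$ in decreasing order and returning the first $b$ for which the test succeeds; termination is guaranteed because $P(0000)$ always holds, and by antitonicity one could equally well binary-search the five-element chain using only three tests. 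Each test of $P(b)$ is an instance of the at-least model checking problem and is carried out exactly as in Corollary~\ref{thm:full_rLTL:model_checking}, i.e., by checking $L(\mathcal A) \cap \bigcup_{b' \prec b} L(\mathcal A_\varphi^{b'}) = \emptyset$ using the $\varepsilon$-prefixing trick of Section~\ref{sec:model_checking}, a product construction, and a generalized Büchi emptiness test, in time $\mathcal O\bigl( (|\mathcal F| + |\cl(\varphi)|) \cdot |Q| \cdot 5^{|\cl(\varphi)|} \bigr)$; crucially $\mathcal A_\varphi$ is built only once and shared across all (at most five) queries, so only its designated initial state changes between queries.

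Summing up, a constant number of queries, each of the above cost, yields the claimed bound $\mathcal O\bigl( (|\mathcal F| + |\cl(\varphi)|) \cdot |Q| \cdot 5^{|\cl(\varphi)|} \bigr)$. I do not expect a substantial difficulty here; the only points needing care are precisely the ones already isolated above: that $\mathbb B_4$ is \emph{linearly} (not merely partially) ordered, so that the set of feasible values is a down-set and a linear or binary search over the five candidates is sound; that $P(0000)$ is always true, so the returned value is well-defined even when $L(\mathcal A) = \emptyset$; and that $\mathcal A_\varphi$ is reused across queries, so that their number enters the running time only as a constant factor rather than as a blow-up.
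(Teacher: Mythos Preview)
Your proposal is correct and matches the paper's approach exactly: the paper justifies this corollary (originally stated as a remark for the $(\boxdot,\Diamonddot)$ fragment and then lifted verbatim to full rLTL) by the single sentence ``Repeatedly solving the at-least model checking problem for decreasing values of $b$ already solves this problem.'' You have simply spelled out the details---antitonicity of the predicate $P$, the chain structure of $\mathbb B_4$, reuse of $\mathcal A_\varphi$, and the constant number of queries---that the paper leaves implicit.
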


\begin{corollary}
Given an rLTL game\footnote{An rLTL game is an rLTL$(\boxdot,\Diamonddot)$ game in which the winning condition is an rLTL formula.} $\mathfrak G = (\mathcal G, \varphi)$ with $\mathcal G = (V, E, \lambda)$ and a vertex $v_0 \in V$, one can 
\begin{enumerate}
	\item decide which player has a winning strategy from $v_0$ and
	\item compute a winning strategy for the corresponding player
	\end{enumerate}
in time $\mathcal O(n^{k+3}kk!)$ where $n = |V| \cdot 2^{5^{c_0|\cl(\varphi)|}}$, $k = 5^{c_1|\cl(\varphi)|}$, and $c_0, c_1$ are suitable constants.
\end{corollary}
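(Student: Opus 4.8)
The plan is to mimic, essentially verbatim, the four-step Safra-based construction used in Section~\ref{sec:synthesis} for rLTL$(\boxdot,\Diamonddot)$ games, replacing the automaton of Definition~\ref{def:buechi_automaton} by the automaton $\mathcal A_\varphi$ produced for full rLTL in Section~\ref{sec:fullrLTL_2_BA}. The crucial input is Theorem~\ref{thm:full_rLTL:buechi_automaton_correct}, which guarantees that for any rLTL formula $\varphi$ the automaton $\mathcal A_\varphi$ still has $5^{|\cl(\varphi)|}+4$ states and at most $4\cdot|\cl(\varphi)|$ acceptance sets — exactly the same quantities as in the $(\boxdot,\Diamonddot)$ case. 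Since every subsequent size estimate in Section~\ref{sec:synthesis} depends only on these two numbers together with $|V|$, the entire complexity analysis transfers without change.

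Concretely, I would carry out the following steps. \textbf{Step 1:} adapt the construction of Section~\ref{sec:fullrLTL_2_BA} to obtain a nondeterministic generalized Büchi automaton $\mathcal A_\varphi^B$ with $L(\mathcal A_\varphi^B)=\{\sigma\in(2^\mathcal P)^\omega\mid V(\sigma,\varphi)\in B\}$ by adding a fresh initial state with $\varepsilon$-transitions into the states $q_b$ for $b\in B$ and then eliminating the $\varepsilon$-transitions; this costs $5^{|\cl(\varphi)|}+5$ states and at most $4\cdot|\cl(\varphi)|$ acceptance sets, and the standard degeneralization yields a nondeterministic Büchi automaton $\mathcal B_\varphi^B$ with $\mathcal O(4\cdot|\cl(\varphi)|\cdot(5^{|\cl(\varphi)|}+5))$ states. \textbf{Step 2:} apply Safra's determinization~\cite{DBLP:conf/focs/Safra88} to get a deterministic Rabin automaton $\mathcal C_\varphi^B$ with $2^{5^{c_0|\cl(\varphi)|}}$ states and $5^{c_1|\cl(\varphi)|}$ Rabin pairs for suitable constants $c_0>c_1$ — the same bound as before, since the input automaton has the same asymptotic size. \textbf{Step 3:} take the synchronous product of the game graph $\mathcal G$ with $\mathcal C_\varphi^B$ to form a Rabin game $\mathfrak G'$ over $|V|\cdot 2^{5^{c_0|\cl(\varphi)|}}$ vertices with $5^{c_1|\cl(\varphi)|}$ Rabin pairs, transferring the winning condition exactly as in Step~3 of Section~\ref{sec:synthesis}; a routine induction over play prefixes shows that Player~0 wins a play $\rho'$ in $\mathfrak G'$ if and only if Player~0 wins the projected play in $\mathfrak G$. \textbf{Step 4:} solve $\mathfrak G'$ with Piterman and Pnueli's algorithm~\cite{DBLP:conf/lics/PitermanP06}, which runs in time $\mathcal O(n^{k+3}kk!)$ with $n=|V|\cdot 2^{5^{c_0|\cl(\varphi)|}}$ and $k=5^{c_1|\cl(\varphi)|}$, and read off both the winner and a finite-state winning strategy for the corresponding player.

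The only genuinely new ingredient relative to Section~\ref{sec:synthesis} is the appeal to Theorem~\ref{thm:full_rLTL:buechi_automaton_correct} in place of Theorem~\ref{thm:buechi_automaton_correct}; everything downstream is insensitive to whether $\varphi$ contains $\Nextdot$, $\Releasedot$, or $\Untildot$, because the additional operators are absorbed by the construction of $\mathcal A_\varphi$ already described in Section~\ref{sec:fullrLTL_2_BA}. Consequently I do not expect a real obstacle: the single point that must be checked is that the expansion-rule-based construction of Section~\ref{sec:fullrLTL_2_BA} preserves the state and acceptance-set counts, which is precisely what is asserted there. I would therefore present the proof as a short reduction to the rLTL$(\boxdot,\Diamonddot)$ synthesis theorem, emphasizing that the constants $c_0,c_1$ may be taken to be the same ones, since Safra's construction is applied to an automaton of the same asymptotic size.
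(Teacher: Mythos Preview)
Your proposal is correct and matches the paper's approach exactly: the paper presents this corollary without an explicit proof, noting only that since Theorem~\ref{thm:full_rLTL:buechi_automaton_correct} yields the same state and acceptance-set bounds for full rLTL as Theorem~\ref{thm:buechi_automaton_correct} does for rLTL$(\boxdot,\Diamonddot)$, the four-step Safra-based synthesis argument of Section~\ref{sec:synthesis} carries over verbatim. Your write-up simply spells out those four steps in detail, which is entirely appropriate.
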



\section{Quality is dual to robustness}
\label{Sec:Quality}
We motivated rLTL($\boxdot,\Diamonddot$) by the need to distinguish between the different ways in which safety properties can be violated. One can take a dual view and seek to distinguish between the different ways in which guarantee properties are satisfied. To illustrate this point, consider the LTL formula $\Diamond p\Rightarrow \Diamond q$ where $\Diamond p$ is an environment assumption and $\Diamond q$ is a system guarantee. According to the motto \emph{more is better} we would prefer the system to guarantee the stronger property $\Box\Diamond q$ whenever the environment satisfies the stronger property $\Box\Diamond p$. By now, the reader can already complete our argument: $\Diamond\Box p$ should lead to $\Diamond\Box q$ and $\Box p$ should lead to $\Box q$. Formalizing these ideas would still take us to a 5-valued logic where, however, negation needs to be defined differently. Although we can still use the linear order
\[ 0000\prec 0001\prec 0011\prec 0111\prec 1111 \]
on the set of truth values, one now needs to interpret the values differently. The value $0000$ still corresponds to \false but the remaining truth values now correspond to different quality values for \true with 0001 being the lowest quality and  1111 the highest. Negation, should then take $0000$ to $1111$ and all the remaining truth values to $0000$. Such negation is no more than the intuitionistic negation already discussed in Section~\ref{SSec:daCosta}, and would equip $\B_4$ with the structure of an Heyting algebra instead of the da Costa algebras used in this paper. This observation justifies the title of this section and suggests the following question: is there an extension of LTL that can be used to reason about \emph{both} robustness and quality? This is a question we will leave for further research.


\section{Discussion}
\label{sec:discussion}

The logic rLTL offers a transparent way to reason about the robustness of LTL specifications. Given an LTL formula $\varphi$, one obtains the corresponding rLTL formula $\psi$ simply by dotting the temporal operators in $\varphi$. The semantics of rLTL was constructed as a $4$-tuple whose first element corresponds to the LTL semantics of $\varphi$ and the remaining elements quantify by how much an infinite word violates $\varphi$. The technical development of the semantics was based on the insight that the temporal operators $\Box$ and $\Diamond$ count how often the formula they are applied to is satisfied thereby leading to a $5$-valued logic. We studied the verification and synthesis problems for rLTL and showed they can be solved in exponential and doubly exponential time, respectively. These complexity bounds are the same as those for LTL once we replace $2$, since LTL is Boolean valued, with $5$, since rLTL is $5$-valued. It remains an open problem to determine if these complexity upper bounds are tight. In addition to this question, we sketched in Section~\ref{Sec:Quality} a variant of rLTL tailored to quality and raised the question of how to combine robustness and quality in a single logic.

\bibliographystyle{alpha}
\bibliography{ComputerScience}

\newcommand{\etalchar}[1]{$^{#1}$}
\begin{thebibliography}{BCG{\etalchar{+}}10}

\bibitem[ABK13]{Quality}
S.~Almagor, U.~Boker, and O.~Kupferman.
\newblock Formalizing and reasoning about quality.
\newblock In {\em Automata, Languages, and Programming}, volume 7966 of {\em
  Lecture Notes in Computer Science}, pages 15--27. Springer Berlin Heidelberg,
  2013.

\bibitem[AK14]{DBLP:conf/fossacs/AlmagorK14}
Shaull Almagor and Orna Kupferman.
\newblock Latticed-ltl synthesis in the presence of noisy inputs.
\newblock In {\em Foundations of Software Science and Computation Structures -
  17th International Conference, {FOSSACS} 2014, Held as Part of the European
  Joint Conferences on Theory and Practice of Software, {ETAPS} 2014, Grenoble,
  France, April 5-13, 2014, Proceedings}, volume 8412 of {\em Lecture Notes in
  Computer Science}, pages 226--241. Springer, 2014.

\bibitem[AKW08]{Ranking}
R.~Alur, A.~Kanade, and G.~Weiss.
\newblock Ranking automata and games for prioritized requirements.
\newblock In {\em Proceedings of the 20th International Conference on Computer
  Aided Verification}, CAV '08, pages 240--253, Berlin, Heidelberg, 2008.
  Springer-Verlag.

\bibitem[BCG{\etalchar{+}}10]{RobustnessLiveness}
R.~Bloem, K.~Chatterjee, K.~Greimel, T.A. Henzinger, and B.~Jobstmann.
\newblock Robustness in the presence of liveness.
\newblock In Tayssir Touili, Byron Cook, and Paul Jackson, editors, {\em
  Computer Aided Verification}, volume 6174 of {\em Lecture Notes in Computer
  Science}, pages 410--424. Springer Berlin Heidelberg, 2010.

\bibitem[BGHJ09]{BGHJ09}
R.~Bloem, K.~Greimel, T.A. Henzinger, and B.~Jobstmann.
\newblock Synthesizing robust systems.
\newblock In {\em Formal Methods in Computer-Aided Design, 2009. FMCAD 2009},
  pages 85 --92, nov. 2009.

\bibitem[BJP{\etalchar{+}}12]{GR1}
R.~Bloem, B.~Jobstmann, N.~Piterman, A.~Pnueli, and Y.~Saar.
\newblock Synthesis of reactive(1) designs.
\newblock {\em Journal of Computer and System Sciences}, 78(3):911--938, 2012.

\bibitem[BK08]{Baier:2008:PMC:1373322}
Christel Baier and Joost-Pieter Katoen.
\newblock {\em Principles of Model Checking (Representation and Mind Series)}.
\newblock The MIT Press, 2008.

\bibitem[CGL10]{ChaudhuriGL10}
S.~Chaudhuri, S.~Gulwani, and R.~Lublinerman.
\newblock Continuity analysis of programs.
\newblock In {\em POPL: Principles of Programming Languages}, pages 57--70.
  ACM, 2010.

\bibitem[CMP93]{SafetyProgress}
E.~Chang, Z.~Manna, and A.~Pnueli.
\newblock The safety-progress classification.
\newblock In F.~L. Bauer, W.~Brauer, and H.~Schwichtenberg, editors, {\em Logic
  and Algebra of Specification}, volume~94 of {\em NATO ASI Series}, pages
  143--202. Springer Verlag, 1993.

\bibitem[DHLN10]{DHLN10}
L.~Doyen, T.A. Henzinger, A.~Legay, and D.~Nickovic.
\newblock Robustness of sequential circuits.
\newblock In {\em Application of Concurrency to System Design (ACSD), 2010 10th
  International Conference on}, pages 77 --84, june 2010.

\bibitem[DM10]{RobustMaler}
A.~Donze and O.~Maler.
\newblock Robust satisfaction of temporal logic over real-valued signals.
\newblock In {\em Formal Modeling and Analysis of Timed Systems}, volume 6246
  of {\em Lecture Notes in Computer Science}, pages 92--106. Springer Berlin
  Heidelberg, 2010.

\bibitem[ET14]{Intermitent}
R.~Ehlers and U.~Topcu.
\newblock Resilience to intermittent assumption violations in reactive
  synthesis.
\newblock In {\em Proceedings of the 17th International Conference on Hybrid
  Systems: Computation and Control}, HSCC '14, pages 203--212, New York, NY,
  USA, 2014. ACM.

\bibitem[FP09]{Fainekos}
G.~E. Fainekos and G.~J. Pappas.
\newblock Robustness of temporal logic specifications for continuous-time
  signals.
\newblock {\em Theoretical Computer Science}, 410(42):4262 -- 4291, 2009.

\bibitem[GTW02]{DBLP:conf/dagstuhl/2001automata}
Erich Gr{\"{a}}del, Wolfgang Thomas, and Thomas Wilke, editors.
\newblock {\em Automata, Logics, and Infinite Games: {A} Guide to Current
  Research [outcome of a Dagstuhl seminar, February 2001]}, volume 2500 of {\em
  Lecture Notes in Computer Science}. Springer, 2002.

\bibitem[H\'98]{MetamatematicsFuzzyLogic}
P.~H\'ajeck.
\newblock {\em Metamathematics of Fuzzy Logic}, volume~4 of {\em Trends in
  Logic - Studia Logica Library}.
\newblock Kluwer Academic Publishers, 1998.

\bibitem[HU79]{Hopcroft}
John~E. Hopcroft and Jeffery~D. Ullman.
\newblock {\em Introduction to Automata Theory, Languages and Computation}.
\newblock Addison-Wesley Publishing Company, USA, 1979.

\bibitem[JT51]{BAO}
B.~Jonsson and A.~Tarski.
\newblock Boolean algebras with operators. {P}art {I}.
\newblock {\em American Journal of Mathematics}, 73(4):891 -- 939, 1951.

\bibitem[KL07]{DBLP:conf/vmcai/KupfermanL07}
Orna Kupferman and Yoad Lustig.
\newblock Lattice automata.
\newblock In {\em Verification, Model Checking, and Abstract Interpretation,
  8th International Conference, {VMCAI} 2007, Nice, France, January 14-16,
  2007, Proceedings}, volume 4349 of {\em Lecture Notes in Computer Science},
  pages 199--213. Springer, 2007.

\bibitem[MS09]{MS09}
R.~Majumdar and I.~Saha.
\newblock Symbolic robustness analysis.
\newblock In {\em IEEE Real-Time Systems Symposium}, pages 355--363. IEEE
  Computer Society, 2009.

\bibitem[NPM99]{PrinciplesFuzzyLogic}
V.~Nov\'ak, I.~Perfilieva, and J.~M\v{o}cko\v{r}.
\newblock {\em Mathematical Principles of Fuzzy Logic}.
\newblock Kluwer Academic Publishers, 1999.

\bibitem[PP04]{InfiniteWordsBook}
Dominique Perrin and Jean-Eric Pin.
\newblock {\em Infinite Words}, volume 141 of {\em Pure and Applied
  Mathematics}.
\newblock Elsevier, 2004.

\bibitem[PP06]{DBLP:conf/lics/PitermanP06}
Nir Piterman and Amir Pnueli.
\newblock Faster solutions of rabin and streett games.
\newblock In {\em 21th {IEEE} Symposium on Logic in Computer Science {(LICS}
  2006), 12-15 August 2006, Seattle, WA, USA, Proceedings}, pages 275--284.
  {IEEE} Computer Society, 2006.

\bibitem[Pri09]{Priest09}
G.~Priest.
\newblock Dualising {I}ntuitionist {L}ogic.
\newblock {\em Principia}, 13(2):165 -- 184, 2009.

\bibitem[PW97]{StutterInvariance}
Doron Peled and Thomas Wilke.
\newblock Stutter-invariant temporal properties are expressible without the
  next-time operator.
\newblock {\em Information Processing Letters}, 63(5):243 -- 246, 1997.

\bibitem[Saf88]{DBLP:conf/focs/Safra88}
Shmuel Safra.
\newblock On the complexity of omega-automata.
\newblock In {\em 29th Annual Symposium on Foundations of Computer Science,
  White Plains, New York, USA, 24-26 October 1988}, pages 319--327. {IEEE}
  Computer Society, 1988.

\bibitem[TBC{\etalchar{+}}12]{IOSEmsoft}
P.~Tabuada, A.~Balkan, S.Y. Caliskan, Y.~Shoukry, and R.~Majumdar.
\newblock Input-output robustness for discrete systems.
\newblock In {\em Proceedings of the Tenth ACM International Conference on
  Embedded Software}, EMSOFT '12, pages 217--226. ACM, 2012.

\bibitem[TC{\etalchar{+}}14]{RobustCPS}
P.~Tabuada, , S.Y. Caliskan, M.~Rungger, and R.~Majumdar.
\newblock Towards robustness for cyber-physical systems.
\newblock {\em IEEE Transactions on Automatic Control}, 59(12):3151--3163, Dec
  2014.

\bibitem[TMD08]{TMD08}
D.~C. Tarraf, A.~Megretski, and M.~A. Dahleh.
\newblock A framework for robust stability of systems over finite alphabets.
\newblock {\em IEEE Transactions on Automatic Control}, 53(5):1133--1146, 2008.

\end{thebibliography}

\end{document}